\documentclass[a4paper,UKenglish,cleveref, autoref, thm-restate]{lipics-v2021}

\usepackage{sidecap}
\usepackage{booktabs}
\usepackage{makecell}
\usepackage{multirow}
\usepackage{array}
\usepackage{arydshln}
\usepackage{enumitem}

\usepackage{makecell} 
\usepackage{multirow}
\usepackage{arydshln} 

\usepackage{footmisc} 

\usepackage[normalem]{ulem}

\theoremstyle{definition}
\newtheorem{definition2}[theorem]{Definition}  
\crefname{definition2}{Definition}{Definitions} 




\bibliographystyle{plainurl}

\interfootnotelinepenalty=10000

\title{Algorithmic hardness of the partition function for nucleic acid strands}

\titlerunning{Algorithmic hardness of the partition function for nucleic acid strands} 

\author{Gwendal Ducloz} {Hamilton Institute and Department of Computer Science, Maynooth University, Ireland \and École Normale Supérieure de Lyon, France}{gwendal.ducloz@ens-lyon.fr}{}{}

\author{Ahmed Shalaby}{Hamilton Institute and Department of Computer Science, Maynooth University, Ireland \and \url{http://www.shalabycave.com}}{ahmed.shalaby.2023@mumail.ie}{https://orcid.org/0009-0009-3054-4328}{}

\author{Damien Woods}{Hamilton Institute and Department of Computer Science, Maynooth University, Ireland \and \url{https://dna.hamilton.ie/woods}}{damien.woods@mu.ie}{https://orcid.org/0000-0002-0638-2690}{}

\authorrunning{G. Ducloz, A. Shalaby, and D. Woods} 

\Copyright{Gwendal Ducloz, Ahmed Shalaby, and Damien Woods} 
\ccsdesc[500]{Theory of computation~Problems, reductions and completeness}


\keywords{Partition function, minimum free energy, nucleic acid, DNA, RNA, secondary structure, computational complexity, \#P-hardness} 

\category{} 

\relatedversion{} 


\funding{Work carried out while GD was on internship at Maynooth University. Supported by Science Foundation Ireland (SFI) under grant number 20/FFP-P/8843, European Research Council (ERC) under the European Union's Horizon 2020 research and innovation programme (grant agreement No 772766, Active-DNA project), and European Innovation Council (EIC), No 101115422, DISCO project. Views and opinions expressed are however those of the author(s) only and do not necessarily reflect those of the European Union, ERC, EIC or SFI. Neither the European Union nor the granting authority can be held responsible for them.}

\acknowledgements{We thank Doan Dai Nguyen, Constantine Evans, Dave Doty, Sergiu Ivanov and Cai Wood for stimulating discussions.}

\nolinenumbers 

\EventEditors{}
\EventNoEds{0}
\EventLongTitle{31st International Conference on DNA Computing and Molecular Programming (DNA31)}
\EventShortTitle{DNA31}
\EventAcronym{DNA}
\EventYear{2025}
\EventDate{August 25--28, 2025}
\EventLocation{École normale supérieure de Lyon, Lyon, France}
\EventLogo{}
\SeriesVolume{}
\ArticleNo{}

\usepackage{algorithm}
\usepackage{algpseudocode}
\usepackage{arydshln}
\usepackage{xcolor}

\usepackage{tikz}
\usetikzlibrary{arrows.meta, bending, positioning, calc}
\definecolor{darkergreen}{rgb}{0.0, 0.65, 0.0}

\definecolor{vlightgray}{gray}{0.98}
\usepackage[textsize=tiny, color=vlightgray 
]{todonotes}

\usepackage[normalem]{ulem}

\usepackage{xspace}

\newcommand{\ssELbase}{SSEL}
\newcommand{\ssEL}{\ensuremath{\#\mathrm{\ssELbase}}\xspace}
\newcommand{\ssELtext}{$\#$\ssELbase\xspace}

\newcommand{\base}[1]{\ensuremath{\mathrm{#1}}\xspace}
\newcommand{\baseA}{\base{A}}
\newcommand{\baseT}{\base{T}}
\newcommand{\baseG}{\base{G}}
\newcommand{\baseC}{\base{C}}
\newcommand{\baseU}{\base{U}}
\newcommand{\PolyM}{PF-polynomially magnifiable\xspace}
\newcommand{\polyM}{PF-polynomially magnifiable\xspace}
\newcommand{\magAdapt}{magnification adaptable\xspace}
\newcommand{\UBP}{BPM\xspace}
\newcommand{\UBPname}{base pair matching\xspace}
\newcommand{\weepara}[1]{{\bf #1.}}

\begin{document}

\maketitle

\begin{abstract}

To understand and engineer biological and artificial nucleic acid systems, 
algorithms are employed for prediction of secondary structures at thermodynamic equilibrium. 
Dynamic programming algorithms are used to compute the most favoured, or Minimum Free Energy (MFE), structure, and the Partition Function (PF)---a tool for assigning a probability to any structure. 
However, in some situations, such as when there are large numbers of strands, or pseudoknoted systems,   NP-hardness results show that such algorithms are unlikely, but only for MFE. 
Curiously, algorithmic hardness results were not shown for PF, 
leaving two open questions on the complexity of 
PF for  multiple strands and single strands with pseudoknots. 
The challenge is that while the MFE problem cares  only about one, or a few structures, PF is a summation over the entire secondary structure space, 
giving theorists the vibe that computing PF should not only be as hard as MFE, but should be even harder.

We  answer both questions. 
First, we show that computing PF is \#P-hard for systems with an unbounded number of strands, answering a question of Condon Hajiaghayi, and Thachuk [DNA27]. 
Second, for even a single strand, but allowing pseudoknots, we find that PF is \#P-hard.
Our proof relies on a novel  {\em magnification trick} 
that leads to a  tightly-woven set of reductions  
between five key thermodynamic problems: 
MFE, PF, their decision versions, and \ssEL that counts structures of a given energy.  
Our reductions show these five problems are fundamentally related for any energy model amenable to magnification. 
That general classification clarifies the mathematical landscape of nucleic acid energy models and yields several open questions.
\end{abstract}

\section{Introduction}
\label{sec:typesetting-summary}
The information encoding abilities of RNA is harnessed by biology to encode myriad complex behaviours, and  
both DNA and RNA have been used by scientists and engineers to create custom nanostructures and molecular computers. 
Consequently, predicting the structures formed by DNA and RNA strands is crucial both for understanding molecular biology and advancing molecular programming.

The  {\em primary structure} of a DNA strand is simply a word over the alphabet  $\{\baseA, \baseC, \baseG, \baseT\}$, with $\baseU$ instead of $\baseT$ for RNA.  
Bases bond in pairs, \baseA-\baseT and \baseC-\baseG, and a set of such pairings for one or more strands is called a {\em secondary structure}. Typically, the set of all secondary structures $\Omega$ has size exponential in the total number of bases. 
Assuming an energy model over secondary structures, each secondary structure $S$ has an associated real-valued free energy $\Delta G(S)$, where more negative means more favourable~\cite{tinoco}.

The Boltzman distribution is used to model the probability distribution of secondary structures at  equilibrium~\cite{mccaskill1990equilibrium, dirks2004algorithm, dirks2007thermodynamic}: the probability of~$S$ is given by  $p(S) = \frac{1}{Z} \mathrm{e}^{- \Delta G(S)/k_\mathrm{B}T}$, where $Z$ is a normalisation factor called the partition function (PF):
\begin{equation}\label{eq:pf}
	Z  = \sum_{S\in\Omega} \mathrm{e}^{- \Delta G(S)/k_\mathrm{B}T} 
\end{equation} 
Intuitively, $Z$ is an exponentially weighted sum of the free energies over $\Omega$, 
where~$k_\mathrm{B}$ is Boltzmann's constant and $T$ is temperature in Kelvin.  
The algorithmic complexity of computing the PF is the main object of study in this paper.

Predicting the free energy of the most favoured secondary structure(s) at thermodynamic equilibrium is called the Minimum Free Energy (MFE) problem~\cite{zukerrna, nussinov1978algorithms, zukeroptimal}.
A third problem of interest, called \ssEL~(or number of Secondary Structures at a specified Energy Level),  asks a more fine-grained question~\cite{demaine2024domain}:   
Given a value $k\in\mathbb{R}$ how many secondary structures have free energy $k$? 
Here, we give a mathematical relation between all of these models (\cref{fig:reduction_map}) that is somewhat energy-model agnostic and settle the computational complexity of PF in two settings (\cref{tab:}).

\setlength\tabcolsep{4 pt}
{\small
\begin{table}[t]
\centering
\begin{tabular}{m{2.6cm}|m{2.9cm}||m{3.2cm}|m{3.8cm} }
\textbf{Structure} & \textbf{Num. Strands} & \textbf{MFE} & \textbf{PF} \\ \midrule
\multirow{3}{2.5cm}[-2mm]{Unpseudoknotted} 
  & 1 
  & P~\cite{zukerrna, nussinov1978algorithms, zukeroptimal} 
  & P~\cite{mccaskill1990equilibrium} 
\\ \cdashline{2-4}
  & Bounded ($\mathcal{O}(1)$) 
  & P~\cite{ShalabyWoods} 
  & P~\cite{dirks2007thermodynamic} 
\\ \cdashline{2-4}
  & Unbounded ($\mathcal{O}(n)$)
  & \textnormal{\mbox{NP-hard~\cite{condon2021predicting}}}; \UBP~model
  & \textbf{\#P-hard} [\cref{Cor:Pf_shHard_unb}]; \UBP model 
\\
\midrule

Pseudoknotted 
  & 1 
  & NP-hard~\cite{akutsu2000dynamic, lyngso2000rna, Lyngso2004}; including  BPS model
  & \textbf{\#P-hard}~[Theorem~\ref{th_PFhardness}]; BPS model
\\
\bottomrule
\end{tabular}
\caption{\label{table}
	Results on the computational complexity of MFE and PF.  
	P: problem is solvable in time polynomial  in $n$, the total number of DNA/RNA bases;
	NP-hard: as hard as any problem in nondeterministic polynomial time (NP);  
	\#P-hard: as hard as counting the accepting paths of an NP Turing machine.
	All positive results showing a problem in P hold for all 3 models studied here: \UBPname (\UBP), base pair stacking (BPS), and  nearest neighbour (NN).  
    Negative, or hardness, results are proven in the simple \UBP or BPS models 
    and are conjectured~\cite{condon2021predicting} to hold in the more complex NN model. 
 Specifically: Condon et al.~\cite{condon2021predicting} use the \UBP model, and Lyngs\o ~\cite{Lyngso2004} uses the BPS model to prove the NP-completeness of the decision version of MFE. 
 The two \#P-hardness\footref{ft:sharpP} results are contributions of this paper, along with the reductions of \cref{fig:reduction_map}.  
}\vspace{-3ex}
\label{tab:}
\end{table}
}

\begin{figure}[t]
	\includegraphics [width=1\textwidth]{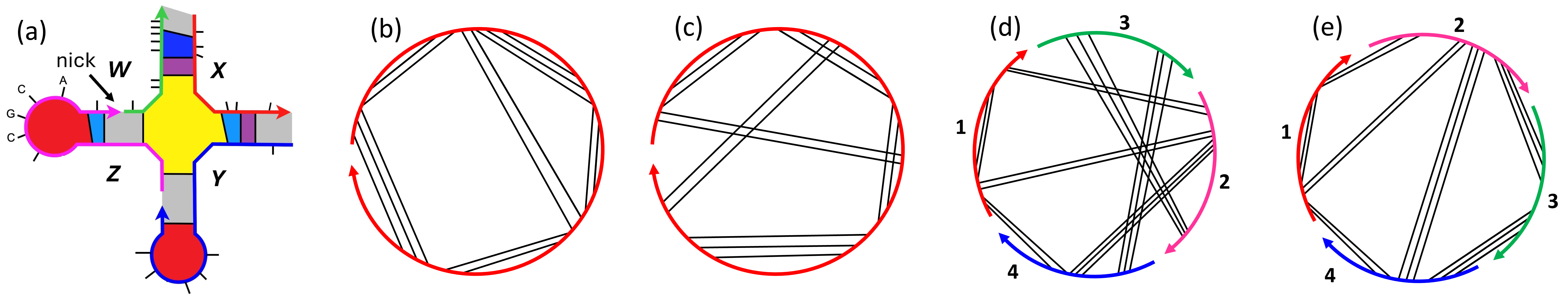}\vspace{-2ex}
	\caption{The nearest neighbour, or Turner, model of DNA/RNA multistranded secondary structure. 
		(a)~One of the many possible secondary structures for four DNA (or RNA) strands $W,X,Y,Z$. 
		Short black lines represent DNA bases (a few are shown $\ldots \baseC, \baseG, \baseC, \baseA \ldots$), and long lines represent base pairs (not to scale). 
		Loops are colour-coded:  stack (purple), hairpin (red), bulge (light blue), internal (dark blue), multiloop (yellow),  external (grey).    
		Black arrow: {\em nick}, the gap between two strands.  
	(b--c) Polymer graphs\footnotemark\xspace for two single-stranded secondary structures:  (b) unpseudoknotted (no crossings), (c) pseudoknotted (crossings). 
	(d) Polymer graph for a secondary structure $S$ over the strand set $\{1, 2, 3, 4\}$ with strand ordering $1324$ showing crossings.
	(e) Simply by reordering the strands to $1234$ gives a polymer graph without crossings, proving $S$ is unpseudoknotted.}
\label{fig:polymer}
\end{figure}
\footnotetext{A secondary structure can be represented as a polymer graph by ordering and depicting the directional ($5'$ to $3'$) strands around the circumference of a circle, with edges along the circumference representing adjacent bases, and straight line edges connecting paired
bases. Each such ordering of $c$ strands is a circular permutation of the strands, and there are $(c-1)!$ possible orderings~\cite{condon2021predicting,ShalabyWoods,dirks2007thermodynamic}.}

\subsection{Background and related work} \label{sec:related}
\weepara{Efficient algorithms}
Decades ago, the relationship between secondary structure prediction and dynamic programming algorithms was well established. 
For a single strand of length $n$, dynamic programming techniques were used to solve the MFE and PF problems efficiently, meaning in polynomial time in the number of bases, but ignoring so-called pseudoknotted structures. 
Early algorithms were designed for simple energy models that count the number of base pairs, the \UBPname (\UBP) model~\cite{zukerrna, nussinov1978algorithms}. Later algorithms accounted for more complex secondary structure features, 
including stacks, hairpin loops, internal loops and other features that comprise the {\em nearest nieghbour}, or {\em Turner}, model~\cite{mccaskill1990equilibrium,dirks2007thermodynamic}, \cref{fig:polymer}, extensively used by practitioners~\cite{NUPACK, viennaRNA, mfold}.

\weepara{Hardness results for single strands and an open problem}
Somewhat frustratingly, dynamic programming algorithms for MFE prediction do not handle all secondary structures: 
as noted, pseudoknots (\cref{def_pk}) throw a spanner in the works. 
Pseudoknotted structures are ubiquitous in both biological RNA and DNA nanotech and computing systems, so why ignore them?  
In 2000, prediction  in the presence of pseudoknots was shown to be NP-hard, even for a single strand and under simple energy models like the base pair stacking model (BPS; counts stacks)~\cite{akutsu2000dynamic,lyngso2000rna,Lyngso2004}.
NP-hardness implies we are unlikely to see efficient algorithms for the full class of pseudoknots~\cite{garey1979computers}, although  progress has been made on specific subclasses~\cite{dirks2004algorithm, rivas1999dynamic, akutsu2000dynamic,dirks2003partition,chen2009n,jabbari2018knotty}. Although it feels as though computing PF should be at least as hard as MFE, for reasons outlined below the complexity of PF in this setting (single strand, allowing pseudoknots) remained tantalisingly open.

\weepara{Hardness results for multiple strands and another open problem}
Recently, Condon, Hajiaghayi, and Thachuk~\cite{condon2021predicting} proved that computing 
MFE is also NP-hard in the unpseudoknotted scenario in the \UBP model when the number of DNA/RNA strands is unbounded, 
meaning it scales with problem size.
They left the complexity of PF in this setting as an open problem.

\weepara{Results on \ssELtext}
The counting problem \ssEL~is also known in the literature as the density of states problem~\cite{DensityOfStates}.
Efficient dynamic programming algorithms can be adapted to solve a version of this problem where energies fall into discrete bins (ranges), for unpseudoknotted secondary structures~\cite{Cupal1996Density}. 
In addition, probabilistic heuristic methods have been proposed to estimate its value with good accuracy~\cite{DensityOfStates}.
These heuristics originate from statistical physics, particularly from the study of the Ising model. 
Interestingly, this model draws a useful parallel, as computing its partition function is known to be either in \textsf{P} or \#\textsf{P}-complete, depending on the parameters~\cite{goldberg2009complexitydichotomypartitionfunctions}.

\weepara{Indirect evidence that PF might be harder than MFE}
In 2007, Dirks et al.~\cite{dirks2007thermodynamic} gave a polynomial time dynamic programming algorithm for PF for multiple, but $\mathcal{O}(1)$, strands in the NN model~\cite{dirks2007thermodynamic}. 
Their paper includes a definitional contribution that extends the single-strand NN model to multiple strands by including both a strand association penalty and an entropic penalty for rotational symmetry of multi-stranded secondary structures. 
They observe if the model is permitted to ignore rotational symmetry, any purely
dynamic programming algorithm for PF can be translated into an algorithm for MFE using tropical $(\min,+)$ algebra instead of the classic $(+,\cdot)$ algebra. 
But going the other way is not so obvious: translating MFE algorithms into PF algorithms is challenging due to the risk of  overcounting secondary structures, 
by translating to an overly naive  algorithm. 
This observation provides some intuition that PF might be harder than MFE.

Recently Shalaby and Woods~\cite{ShalabyWoods} gave an efficient algorithm for computing MFE in the same setting as Dirks et al.~\cite{dirks2007thermodynamic} ($\mathcal{O}(1)$ strands, unpseudoknotted). 
For roughly two decades, this setting had supported an efficient algorithm for PF but none for MFE due to rotational symmetry complications. 
The rareness of the situation 
and its positive resolution 
bolstered our intuition PF is not easier than MFE.

\weepara{A note about energy models} 
Efficient algorithms seem to work across a range of energy models (\UBP, BPS, NN),\footnote{The BPS model is a special case of the NN model, so any algorithm in the NN model can be modified for the BPS model by easily   ignoring all loops except stacks and setting $\Delta G^{\mathrm{stack}}= -1$.} 
but NP-hardness results have merely been shown for MFE in simple energy models: \UBP and BPS. 
As Condon, Hajiaghayi, and Thachuk observe~\cite{condon2021predicting}, 
it seems unlikely MFE would become easier in more sophisticated models like NN.

\begin{figure}[t]
	\centering
	\vspace{-4ex}
    \begin{tikzpicture}[
        scale=1, transform shape,
		node distance=4cm and 4cm,
		every node/.style={circle, draw, fill=gray!10, minimum size=1.8cm, text centered, font=\sffamily},
		arrow/.style={-{Stealth[scale=1.5]}, thick},
		parallel/.style={{Stealth[scale=1.5]}-{Stealth[scale=1.5]}, thick, double distance=2mm},
		assumption/.style={font=\bfseries\sffamily}
		]
		\node (MFE) at (90:3cm) {MFE};
		\node (dMFE) at (18:3cm) {dMFE};
		\node (dPF) at (306:3cm) {dPF};
		\node (PF) at (234:3cm) {PF};
		\node (EL) at (162:3cm) {\ssEL};
		
		\draw[arrow, gray] (MFE.-30) -- (dMFE.140) node[draw=none, fill=none, xshift=-12mm, yshift=3mm, font=\small]{Th. \ref{th:easy}};
		\draw[arrow, gray] (dMFE.110) -- (MFE.00) node[draw=none, fill=none, xshift=14mm, yshift=-4mm, font=\small]{Th. \ref{th:easy}};
		\draw[arrow, gray] (EL) -- (MFE) node[draw=none, fill=none, xshift=-20mm, yshift=-8mm, font=\small]{Th. \ref{th:easy}};
		\draw[arrow, gray] (EL.-60) -- (PF.90) node[draw=none, fill=none, xshift=2mm, yshift=10mm, font=\small]{Th. \ref{th:easy}};
		\draw[arrow, red] (PF.120) -- (EL.-90) node[draw=none, fill=none, xshift=-3mm, yshift=-10mm, font=\small]{Th. \ref{PFtoEL}};
		\draw[arrow, red] (dPF) -- (dMFE) node[draw=none, fill=none, xshift=-13mm, yshift=-17mm, font=\small]{Th. \ref{dMFE_to-dPF}};
		\draw[arrow, red] (dPF.170) -- (PF.10) node[draw=none, fill=none, xshift=10mm, yshift=3mm, font=\small]{Th. \ref{dPFtoPF}};
		\draw[arrow, gray] (PF.-20) -- (dPF.200) node[draw=none, fill=none, xshift=-10mm, yshift=-3mm, font=\small]{Th. \ref{th:easy}};
		
		\draw[arrow, black] (dMFE) -- (EL) coordinate[pos=0.5] (crossing);
		\node[draw=none, fill=none, font=\sffamily, scale=3] at (crossing) {\textcolor{black}{X}};
		\node[draw=none, fill=none, xshift=0mm, yshift=-6.5mm, font=\small] at (crossing){Th. \ref{th_no-exi}};
	\end{tikzpicture}\vspace{-4ex}
	\caption{Reduction map between the five main problems (\cref{subsec:prob_def}) studied in this paper, results shown in \cref{sec:reductions}. 
	An arrow from problem A to B means that if there exists an algorithm for A, it can be called to efficiently solve B, or in other words that there is a polynomial-time Turing reduction from B to A.
	Red arrows use a magnification of the energy model, grey arrows do not. 
	A crossed arrow signifies that no such reduction exists unless $\text{\#P} \subseteq \text{P}^{\text{NP}}$, which means the collapse of polynomial hierarchy~\cite{arora2009computational} at level 2. The latter implies the non-existence of arrows from dMFE to dPF and from MFE to \ssEL.}\vspace{-3ex}
	\label{fig:reduction_map}
\end{figure}
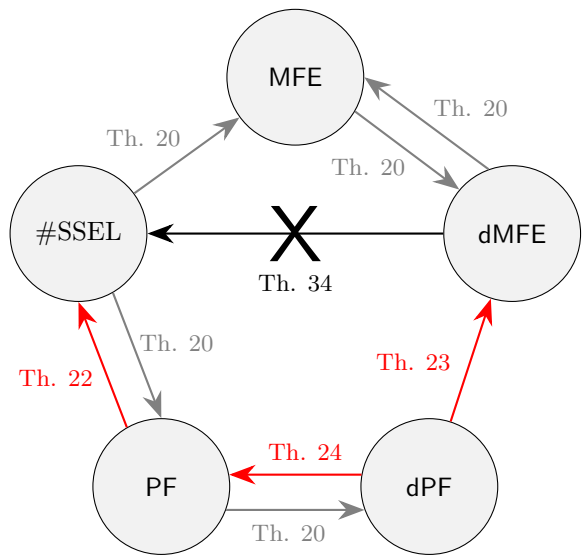

\subsection{Contributions}

We prove hardness results on the complexity of PF in two settings ({\cref{tab:}}), 
and provide tools for thinking about the complexity of thermodynamic prediction problems  ({\cref{fig:reduction_map}}). 

\cref{sec:BG} contains definitions of secondary structures, energy models (\UBP, BPS and NN), and thermodynamic problems (MFE, PF, \ssEL, and decision problems dMFE and dPF.) 

Our first main contribution is to provide a map of reductions between all of these problems, illustrated in \cref{fig:reduction_map} with proofs in \cref{sec:reductions}. 
Some of these reductions are rather straight-forward (grey arrows in~\cref{fig:reduction_map}), but the others (red arrows) make use of a new proof strategy we call the energy {\em magnification trick}.
In \cref{subsec:red_discussion} 
we define a property of an energy model called {\em \polyM} (\cref{def:mag})
which means that the energy model has a polynomial time \magAdapt PF algorithm. That general property yields \cref{cor:polyM puts problem in P}, that all 5 problems in \cref{fig:reduction_map} are in P (or FP) whenever the underlying energy model is  \polyM.
\cref{subsec:red_discussion} has a full discussion about \polyM energy models and \magAdapt algorithms.

Our second main contribution is to answer two open problems on the complexity of ~PF: 
we show that PF is \#P-hard\footnote{A first upperbound of all five problems in \cref{fig:reduction_map} is the exponential time class EXP, as we can solve any problem just by going though all distinct secondary structures. Therefore, the interesting question is about finding better complexity lowerbounds and upperbounds. The complexity class \#P, introduced by Leslie Valiant~\cite{Valiant79}, is the class of problems where the goal is to count the number of solutions, with each solution having a polynomial-sized certificate.\label{ft:sharpP}} in (a) the single-stranded case, i.e.~with pseudoknots under the BPM model, and (b) in the multi-stranded case even without pseudoknots (\cref{tab:}) under the BPS model.
These results are proven in \cref{sec:P-hardness_BPS,sec:P-hardness_Condon}, and leverage the reduction map (\mbox{\cref{fig:reduction_map}}, \mbox{\cref{sec:reductions}}). 
Since dMFE is NP-complete~\cite{condon2021predicting,Lyngso2004},
these \#P-hardness results provide a strong result, showing that PF is strictly harder than dMFE in these two situations, unless $\text{\#P} \subseteq \text{P}^{\text{NP}}$, which implies the collapse of the polynomial hierarchy~\cite{arora2009computational} at level 2.

In addition to showing the 
\#P-hardness lowerbound on PF, the reductions in \cref{sec:reductions} show that PF and \ssEL have equivalent complexity in the sense that they are polynomial-time Turing reducible~\cite{arora2009computational} to each other, providing an upperbound on the complexity of~PF. 

Although our $\text{\#P}$-hardness results are shown for the BPM and BPS models, our reductions apply to the NN model, and \cref{App:candlev} has some analysis of the NN model.

\subsection{Future work}
\begin{enumerate}
\item 
 MFE hardness in two settings is still an open question: Is MFE NP-hard for 
NN single-strand pseudoknotted, or NN $\mathcal{O}(n)$-strands unpseudoknotted? If so, can the reduction map be leveraged to show hardness results for PF and \ssEL in the NN model?

\item Using the reduction map (\cref{fig:reduction_map}) for positive results: 
Is the NN model \polyM (see \cref{def:mag})? 
(The existing algorithm by Dirks et al.~\cite{dirks2007thermodynamic} does not handle rotational symmetry in a \magAdapt way.)
A positive answer,  and the reduction map would immediately imply a polynomial time algorithm for \ssEL. 
(There already is a polynomial time algorithm for MFE~\cite{ShalabyWoods}, and hence dMFE.) 

\item \cref{fig:reduction_map} has arrows of two colours: 
red that means the reduction uses our magnification trick, and grey does not. 
Can all red arrows be made grey? 
This involves replacing our magnification trick with a completely different strategy. 

\item In \cref{rem:poly},  we assume 
in the NN model that loops are specified using at most logarithmic precision. 
Logarithmic precision seems reasonable from a physical perspective, since parameters to the NN free energy model have uncertainty after only a few  decimal places. 
Mathematically, some loop free energies, namely, hairpin, interior and bulge loops, are a logarithmic function of their size and thus may be irrational. Hence we ask: Can this logarithmic precision assumption be dropped?

\item We studied the MFE problem, which aims to compute the minimum free energy value. A natural variant is to ask for its corresponding secondary structures. How does this version relate to our five other problems?
\end{enumerate}

\section{Definitions}\label{sec:BG}
First, we need to review some basic terminology to formulate our main problems in an easy and precise way. We set the scene, in \cref{sinmuldef}, by the mathematical definitions for single-stranded nucleic acid systems before extending them to the multi-stranded case.
In \cref{energymodels}, we provide a brief review of existing energy models and some abstract properties of these models that play a significant role in our reductions.
Finally, in \cref{subsec:prob_def}, we provide the formal definitions of the main problems of interest in this work.

\subsection{Single-stranded and multi-stranded nucleic acid systems} \label{sinmuldef}

Formally, a DNA strand $s$ is a word over the alphabet of DNA {\em bases} $\{\mathrm{A},\mathrm{T},\mathrm{G},\mathrm{C}\}$, indexed from 1 to $|s|=n$, where $n$ denotes the number of bases of $s$.
Hydrogen bonds, or base pairs, can form between complementary bases, namely C–G and A–T, and formally such a base pair is written as a tuple $(i, j)$, where $i<j$, of strand indices (indexing from 1).

\begin{definition2}[Single-stranded secondary structure $S$]\label{def:single strand sec struct}
    For any DNA strand $s$, a secondary structure $S$ of $s$ is a set of base pairs, such that each base appears in at most one pair, i.e. if $(i, j) \in S$ and $(k, l) \in S$ then $i, j, k, l$ are all distinct or $(i, j)= (k, l)$.
\end{definition2}

\begin{definition2}[Unpseudoknotted single-stranded secondary structure]\label{def_pk}
    A secondary structure $S$ of a strand $s$ is unpseudoknotted if for any two base pairs $(i,j)$ and $(k,l) \in S$, $i<k<j$ if and only if $i<l<j$. Otherwise, we call $S$ pseudoknotted, see \cref{fig:polymer}.
\end{definition2}

\begin{remark}\label{rk_check_ss}
	The maximum number of base pairs in any  secondary structure $S$ of strand $s$ is $\lfloor n/2 \rfloor$. Hence,
	$S$ is representable in space  polynomial in $n$, and  verifying whether $S$ is a valid secondary structure is computable in polynomial time.
\end{remark}

We extend these definitions to a multiset s of $c \in \{1,2,3,\ldots\}$ interacting nucleic acid strands, with 
$c=1$ is called \textbf{single-stranded} and $c>1$ \textbf{multi-stranded}.
Each strand has a unique identifier  $t \in  \{1,\ldots,c\}$~\cite{dirks2007thermodynamic}, and 
a base $i_t$ has a strand identifier $t$ and an index $1 \leq i \leq l_t$, where $l_t$ is the length of the $t^\text{th}$ strand.  
In this case, $n$ is the total number of bases: $n=\sum_{1\leq t \leq c} l_t$.

\begin{definition2}[Multi-stranded secondary structure $S$]\label{def:multi strand sec struct}
   For $c\in\mathbb{N}$ interacting strands, a secondary structure $S$ is a set of base pairs such that each base appears in at most one pair, i.e.~if $(i_n, j_m)\in S$ and $(k_q, l_r)\in S$ then $i_n,j_m,k_q,l_r$ are all distinct or $(i_n, j_m)= (k_q, l_r)$.
\end{definition2}

\begin{definition2}[Unpseudoknotted multi-stranded secondary structure]
   For  $c\in\mathbb{N}$ interacting  strands, we call a secondary structure $S$ unpseudoknotted if there exists at least one ordering $\pi$ of the $c$ strands such that if we consider the $c$ strands in a row as forming one single long strand, with lexicographically ordered bases, then $S$ is unpseudoknotted according to \cref{def_pk}, see \cref{fig:polymer}.
\end{definition2}

\subsection{Energy models}\label{energymodels}
We are interested in the computational complexity and relationships between five problems (\cref{subsec:prob_def}) with an energy model as part of their input. 
First, we review three important energy models from the literature. 
\cref{App:candlev} provides an analysis of the set of {\em  candidate energy levels} for each model.

\begin{definition2}[Energy model]\label{def:abstract energy model}
	An \textbf{energy model}$\mathcal{M}$ defines a free energy function $\Delta G^\mathcal{M}$, such that $\Delta G^\mathcal{M}: \Omega_s \times \mathbb{R}^+ \rightarrow \mathbb{R}$, assigns a real value $\Delta G^\mathcal{M}(S,T)$ to any secondary structure $S$ of strand $s$, given a temperature $T$ (in Kelvin), where $\Omega_s$ is the set of all secondary structures (under interest) of $s$.
\end{definition2}

\begin{note}[Magnification of an energy model] \label{note:magnification}
For a positive real number $\alpha\in\mathbb{R}^+$ the notation $\alpha \cdot \Delta G^\mathcal{M}$ simply means to multiply the free energy function $\Delta G^\mathcal{M}(S,T)$ by $\alpha$. Whenever we apply magnification in this paper, we do so uniformly over all secondary structures $S$, which does not change their relative free energy ordering nor distribution per free energy level. 
This magnification is simple to compute for any given $S$ (just a multiplication), however it may not be obvious how to modify a sophisticated PF or MFE algorithm to be magnification compatible, as discussed in \cref{subsec:red_discussion}.
\end{note}

\begin{figure}[h]
\centering
\begin{tikzpicture}
    \tikzset{
    scale=0.9, transform shape
    }
    \draw[->, thick] (0,0) -- (0,-4.5);

    \foreach \y/\label in {0/0, -1/-5, -2/-10, -3/-15, -4/-20} {
        \draw (-0.1,\y) -- (0.1,\y); 
        \node[left] at (-0.1,\y) {\label}; 
    }
    \node[left] at (-0.1,-4.5) {(kcal/mol)};
    \foreach \y in {0,-2,-1.2,-1.7} {
    \draw (1,\y) -- (3,\y); 
    }

    \newcommand{\placeNodes}[3]{
        \pgfmathsetmacro{\N}{#1}
        \pgfmathsetmacro{\a}{1}
        \pgfmathsetmacro{\b}{3}
        \pgfmathsetmacro{\step}{(\b - \a)/(\N + 1)}
        \foreach \i in {1,...,\numexpr\N} {
            \pgfmathsetmacro{\x}{\a + \i * \step}
            \filldraw[black] (\x, #2) circle (2pt);
            }
    }

    \placeNodes{2}{0};   
    \placeNodes{6}{-1.7};  
    \filldraw[black] (2, -2) circle (2pt);
    \filldraw[black] (2, -1.2) circle (2pt);
    \draw (2,-1) node {$\Delta G^\mathcal{M}(S)$};
    \draw (9,-2.15) node {$\alpha \cdot \Delta G^\mathcal{M}(S)$};

    \draw (-1.3,-4) node {$\Delta G^\mathcal{M}$};
    \draw (12.4,-4) node {$\Delta G^\mathcal{M_\alpha}$};

    \shade[left color=white, right color=blue!60] (3.7,0) -- (7.3,0) -- (7.3,-4) -- (3.7,-2) -- cycle;

    \draw[->,dashed, line width=1.2pt] (4.2,-1.5) -- (6.8,-2.5)
    node[above,pos=0.5, yshift= 2em] {Magnification}
    node[above,pos=0.5, yshift=1em] {by $\alpha = 2$}; 

    \draw[->, thick] (11,0) -- (11,-4.5);

    \foreach \y/\label in {0/0, -1/-5, -2/-10, -3/-15, -4/-20} {
        \draw (11.1,\y) -- (10.9,\y); 
        \node[right] at (11.1,\y) {\label}; 
    }
    \node[right] at (11.1,-4.5) {(kcal/mol)};
    
    \foreach \y in {0,-2,-1.2,-1.7} {
    \pgfmathsetmacro{\newy}{2*\y}
    \draw (8,\newy) -- (10,\newy); 
    }

    \newcommand{\placeNodesright}[3]{
        \pgfmathsetmacro{\N}{#1}
        \pgfmathsetmacro{\a}{8}
        \pgfmathsetmacro{\b}{10}
        \pgfmathsetmacro{\step}{(\b - \a)/(\N + 1)}
        \foreach \i in {1,...,\numexpr\N} {
            \pgfmathsetmacro{\x}{\a + \i * \step}
            \filldraw[blue!80] (\x, #2) circle (2pt);
            }
    }

    \placeNodesright{2}{0};   
    \placeNodesright{6}{-1.7*2};  
    \filldraw[blue!80] (9, -2*2) circle (2pt);
    \filldraw[blue!80] (9, -1.2*2) circle (2pt);

\end{tikzpicture}
\caption{Illustration of the magnification process used in \cref{PFtoEL} and \cref{dMFE_to-dPF}. 
Each node is the free energy of a secondary structure $S$, with nodes on the left being $\Delta G^\mathcal{M}(S)$ and the right being $\Delta G^{\mathcal{M}_\alpha}(S)=\alpha\cdot \Delta G^\mathcal{M}(S)$.
Magnification increases the absolute value of all energy levels, without changing the distribution of secondary structures per energy level.}
\end{figure}

\begin{definition2}[Base pair matching (\UBP) model]
	\label{def:UBP}
	In the \UBP model, the free energy of any secondary structure $S$, denoted by $\Delta G^\mathrm{\UBP}(S)$, is the number of base pairs formed in $S$, such that each is weighted $-1$, hence: $ \Delta G^\mathrm{\UBP}(S)= -|S|$. 
\end{definition2}

Despite the simplicity of the \UBP model~\cite{nussinov1980fast}, it is still powerful enough to prove hardness results.
For example, Condon, Hajiaghayi, and Thachuk~\cite{condon2021predicting} used it to prove the NP-hardness of MFE prediction of an unbounded set of strands in the unpseudoknotted case.
In 2004, Lyngs\o~\cite{Lyngso2004} introduced another energy model, called the base pair stacking (BPS) model, 
and proved  NP-hardness of MFE  for single-stranded pseudoknotted systems with stacks.

\begin{definition2}[Base pair stacking (BPS) model] \label{def:BPS}
	The number of base pair stackings $(\mathrm{BPS})$ of any secondary structure $S$ is defined as
	$\mathrm{BPS}(S)=|\{(i,j)\in S~|~(i+1,j-1)\in S\}|$.
	In the BPS model, the free energy of a secondary structure $S$ is $\Delta G^\mathrm{BPS}(S)= -\mathrm{BPS}(S)$.
\end{definition2}

\begin{note} \label{note:temp}
	We say that an energy model $\mathcal{M}$ is \textbf{temperature independent} if its energy function 
	$\Delta G^\mathcal{M}$ is not a function of  temperature. Hence, $\Delta G^\mathcal{M} (S,T) = \Delta G^\mathcal{M} (S)$ for any $T$.
	The \UBP and BPS models are temperature independent, and the NN model (below) is not.
\end{note}

A word of caution.  
 From classical thermodynamics~\cite{tinoco,dirks2007thermodynamic}: 
 free energies are typically of the form $\Delta G=\Delta H-T\Delta S$ meaning they are a function of temperature $T$ and can be decomposed into enthalpic ($\Delta H$) and entropic ($\Delta S$) contributions. 
Whether DNA/RNA binding occurs is strongly temperature dependent, which is why free energies are too. 
Thus, temperature independent energy models are not  good representations of 
typical temperature-varying scenarios, however
they are a useful vehicle to show computational complexity results in a fixed-temperature setting. 

Beyond temperature dependence,
the \UBP and BPS models do not account for a number of features of DNA and RNA that 
provide significant free energy contributions: single-stranded regions and global symmetry.
This motivates our third, most realistic, energy model which is called the nearest neighbour (NN) model:

\newcommand{\NNDG}{\ensuremath{\Delta G^\mathrm{NN}(S) = \sum_{l\in \mathrm{loops}(S,s)} \Delta G(l) + (c-1)\Delta G^{\textrm{assoc}}  + k_\mathrm{B} T \log R}}

\begin{definition2}[Nearest neighbour (NN) model]\label{def:NN}
	Let $S$ be an unpseudoknotted connected\footnote{In the NN model, unlike the \UBP and BPS models, 
	multi-stranded secondary structures must be connected~\cite{dirks2007thermodynamic}, 
	meaning the polymer graph of a secondary structure is connected.} 
	secondary structure over a multiset $s$ of $c\geq 1$ strands. 
	$S$ s decomposed into
	{a multiset of loops, denoted $\mathrm{loops}(S,s)$, each being one of the types}: 
	hairpin, interior, exterior, stack, bulge, and multiloop, as described in \cref{fig:polymer}. 
	Then, the free energy of $S$ is the sum: 
	$$\NNDG$$ 
	  where $\Delta G : \mathrm{loops}(S,s) \rightarrow \mathbb{R}$ 
	  gives a free energy for each loop~\cite{santa},
	$\Delta G^{\mathrm{assoc}} \in \mathbb{R}$ is an association penalty 
	applied $c-1$ times for a complex of $c$ strands~\cite{dirks2007thermodynamic}, 
	and $R$ is the maximum degree of rotational symmetry~\cite{ShalabyWoods,dirks2007thermodynamic} of $S$, details follow. 
\end{definition2}

A few comments are warranted on \cref{def:NN}. 
At fixed temperature, salt concentration, etc., the loop free energy $\Delta G(l)$ for a stack $l$ is simply a function of the stack's arrangement of its four constituent DNA/RNA bases~\cite{santa}. 
For loops $l$ with single-stranded regions (e.g.~interior, hairpin) $\Delta G(l)$ is a logarithmic function of loop length~\cite{dirks2007thermodynamic,santa}, although for multiloops a linear approximation is used to facilitate dynamic programming~\cite{dirks2007thermodynamic}.  
We write the multiset of loops as a function of both the secondary structure $S$ and strand $s$ to emphasise that both base pair indices, and base identities are used to define loops.  In this work, we consider $\Delta G^{\mathrm{assoc}}$ to be a constant, however see~\cite{dirks2007thermodynamic} for more details, including temperature and water-molarity dependence.
Versions of the NN model are implemented in software suites like \texttt{NUPACK}~\cite{dirks2007thermodynamic,fornace2020unified}, \texttt{ViennaRNA}~\cite{viennaRNA}, and \texttt{mfold}~\cite{mfold}.  For more analysis of the NN model, see \cref{App:candlev}.

\begin{definition2}[Set of candidate energy levels] \label{def:can}
    Given an energy model $\mathcal{M}$ and strand $s$ (or a set of strands $s$), a \textbf{set of candidate energy levels} $\mathcal{G}_s^\mathcal{M}$ is a finite superset of the energies of all secondary structures of $s$, in other words  $\{ \Delta G^\mathcal{M}(S) \mid S \in \Omega_s  \} \subseteq \mathcal{G}_s^\mathcal{M}$. 
\end{definition2}

\begin{note}\label{rem:poly}

  $\mathcal{G}_s^\mathcal{M}$ is defined as a {\em superset} so that it is easily computable: specifically, computing $\mathcal{G}_s^\mathcal{M}$ does not require computing the MFE. 
    
    In \cref{App:candlev} we show that there are  
    sets of candidate energy levels of polynomial size and computable in polynomial time for all three models studied. 
    For \UBP and BPS the proofs are straightforward, but for NN we rely on an assumption that individual loop free energies are specified using logarithmic precision, by which we mean they can be written down as a rational number using $O(\log n)$ digits in units of kcal/mol. 
Physically, this is a reasonable assumption since measuring such free energies beyond a few decimal places is likely quite challenging. 
Mathematically, the assumption is not a trivial one, 
since hairpin, interior and bulge loops involve taking logarithms of natural numbers resulting in irrational free energies. 
However, computationally the assumption seems reasonable as any implemented algorithm will have finite precision.
\end{note}


\subsection{Definitions of problems: MFE, PF, dMFE, dPF and \ssELtext}\label{subsec:prob_def}

In this section, we define the main five problems for which we establish computational complexity relationships.
For convenience, we present the definitions for a single strand---the multi-stranded case is defined similarly, but the strand $s$ is replaced by a multiset of   strands, and $n$ denotes the sum of strand lengths, or total number of bases, of the system. 


\begin{definition2}[MFE; a function problem]~
	\begin{description}
		\item[Input:]  Nucleic acid strand $s$ of length $n$, a temperature $T \geq 0$, and an energy model $\mathcal{M}$.
		\item[Output:] The minimum free energy $\mathrm{MFE}(s,T, \mathcal{M}) = \min\{\Delta G^\mathcal{M}(S,T) \mid S \in \Omega_s\}$, where $\Omega_s$ is the set of all secondary structures (under interest) of $s$.
	\end{description}
\end{definition2}

\begin{definition2}[PF; a function problem]~
	\begin{description}
		\item[Input:] Nucleic acid strand $s$ of length $n$, a temperature $T \geq 0$, and an energy model $\mathcal{M}$.
		\item[Output:] The partition function $\mathrm{PF}(s,T,\mathcal{M}) = \sum_{S \in \Omega_s} e^{-{\Delta G^\mathcal{M}(S,T)}/{k_{\mathrm{B}} T}}$, where $\Omega_s$ is the set of all secondary structures (under interest) of $s$.
	\end{description}
\end{definition2}

\begin{definition2}[dMFE; a decision problem]~
	\begin{description}
		\item[Input:] Nucleic acid strand $s$ of length $n$, a temperature $T \geq 0$, an energy model $\mathcal{M}$ and a value $k\in \mathbb{R}$.
		\item[Output:] Is $\mathrm{MFE}(s,T,\mathcal{M}) \leq k$?
	\end{description}
\end{definition2}

\begin{definition2}[dPF; a decision problem]~
	\begin{description}
		\item[Input:] Nucleic acid strand $s$ of length $n$, a temperature $T \geq 0$, an energy model $\mathcal{M}$, and a value $k\in \mathbb{R}$.
		\item[Output:] Is $\mathrm{PF}(s,T,\mathcal{M}) \geq k$?
	\end{description}
\end{definition2}

\begin{definition2}[\ssELtext; a counting problem]~
	\begin{description}
		\item[Input:] Nucleic acid strand $s$ of length $n$, a temperature $T \geq 0$, an energy model $\mathcal{M}$, and a value $k\in \mathbb{R}$.
		\item[Output:] $\ssEL(s,T,\mathcal{M}, k)$: the number of secondary structures $S$ of $s$ such that $\Delta G^\mathcal{M}(S)=~k$.
	\end{description}
\end{definition2}

\
\begin{definition2}[Polynomial-time Turing reduction]
    A polynomial-time Turing reduction~\cite{arora2009computational} from a problem $A$ to a problem $B$ is an algorithm that solves problem A using a polynomial number of calls to a subroutine for problem B, and polynomial time outside of those subroutine calls.
\end{definition2}

\section{Reductions between the computational thermodynamic problems}\label{sec:reductions}
 \cref{fig:reduction_map} illustrates most of the results of this section.
For convenience, all proofs are written for the single-stranded case. 
However, all results hold in the multi-stranded case, simply by replacing the unique input strand by a set of multiple strands in the proofs.

\subsection{Straightforward reductions} \label{subsec:lemma_and_easy_proofs}
In this subsection, we prove all straightforward reductions (gray colored) in our reduction map in \cref{fig:reduction_map}. 
In the last three reductions, we use our assumption in \cref{rem:poly} about the set of candidate energy levels.

\begin{theorem}\label{th:easy}
There exist the following  polynomial-time Turing reductions: $\mathrm{dMFE}$ to $\mathrm{MFE}$, $\mathrm{dPF}$ to $\mathrm{PF}$, $\mathrm{MFE}$ to $\mathrm{dMFE}$, $\mathrm{MFE}$ to $\ssEL$, and $\mathrm{PF}$ to $\ssEL$.
\end{theorem}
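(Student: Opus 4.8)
The plan is to handle the five reductions one at a time, grouping them into three families: the two "decision from function" reductions, the "function from decision" reduction via binary search, and the two reductions into \ssEL that exploit the finite, polynomial-size set of candidate energy levels from \cref{rem:poly}.

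First, the two easiest: $\mathrm{dMFE}$ to $\mathrm{MFE}$ and $\mathrm{dPF}$ to $\mathrm{PF}$. Given an oracle for the function problem, I would simply call it once on the input $(s,T,\mathcal{M})$ to obtain $\mathrm{MFE}(s,T,\mathcal{M})$ (resp.\ $\mathrm{PF}(s,T,\mathcal{M})$), then compare the returned value against the threshold $k$ and output "yes" iff $\mathrm{MFE}(s,T,\mathcal{M}) \le k$ (resp.\ $\mathrm{PF}(s,T,\mathcal{M}) \ge k$). This is a single oracle call plus one comparison, hence polynomial time.

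Next, $\mathrm{MFE}$ to $\mathrm{dMFE}$. Here I would use binary search over candidate energy values. By \cref{rem:poly} (see also \cref{def:can}), there is a polynomial-time-computable set of candidate energy levels $\mathcal{G}_s^\mathcal{M}$ of polynomial size that contains every $\Delta G^\mathcal{M}(S,T)$; the true MFE is the minimum element of $\mathcal{G}_s^\mathcal{M}$ that is $\ge$ the actual minimum, i.e.\ it is one of these candidates. I would sort $\mathcal{G}_s^\mathcal{M}$ and binary search: at each step, take a candidate value $k$ and ask the $\mathrm{dMFE}$ oracle whether $\mathrm{MFE} \le k$; this localises the MFE to within the sorted list after $O(\log |\mathcal{G}_s^\mathcal{M}|) = O(\mathrm{poly}(n))$ oracle calls. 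The smallest candidate $k$ for which the oracle answers "yes" is exactly $\mathrm{MFE}(s,T,\mathcal{M})$. (Linear scan over the sorted list also works and is even simpler to state; binary search just tightens the call count.)

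Finally, the two reductions into \ssEL: $\mathrm{MFE}$ to \ssEL and $\mathrm{PF}$ to \ssEL. The common idea is that \ssEL, queried over every candidate energy level, reveals the full histogram of secondary structures by energy. Compute $\mathcal{G}_s^\mathcal{M}$ (polynomial size, polynomial time), and for each $k \in \mathcal{G}_s^\mathcal{M}$ call the \ssEL oracle to get $N_k := \ssEL(s,T,\mathcal{M},k)$, the number of structures of energy exactly $k$. For $\mathrm{MFE}$: output the minimum $k \in \mathcal{G}_s^\mathcal{M}$ with $N_k > 0$ (this is well-defined since $\Omega_s \neq \emptyset$ — the empty structure is always present — and every realised energy is in $\mathcal{G}_s^\mathcal{M}$). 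For $\mathrm{PF}$: output $\sum_{k \in \mathcal{G}_s^\mathcal{M}} N_k \, e^{-k/(k_\mathrm{B}T)}$, which equals $\sum_{S\in\Omega_s} e^{-\Delta G^\mathcal{M}(S,T)/(k_\mathrm{B}T)}$ by grouping structures by energy level; candidate levels with $N_k = 0$ contribute nothing, so the sum over $\mathcal{G}_s^\mathcal{M}$ agrees with the sum over realised levels. Each reduction makes $|\mathcal{G}_s^\mathcal{M}|$ oracle calls and does polynomial arithmetic outside them.

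The main obstacle is not any single reduction but rather making the reliance on \cref{rem:poly} airtight: one must be sure that $\mathcal{G}_s^\mathcal{M}$ is genuinely computable in polynomial time and has polynomial cardinality for the models in play (including the NN model under the logarithmic-precision assumption), and — for the $\mathrm{PF}$-to-\ssEL direction — that the arithmetic with the values $e^{-k/(k_\mathrm{B}T)}$ can be carried out (or meaningfully approximated) within the same precision conventions used to state the PF problem. The combinatorial content of each reduction is light; the care is all in the bookkeeping of candidate energy levels, which is exactly why the theorem statement flags that the last three reductions invoke \cref{rem:poly}.
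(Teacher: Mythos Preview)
Your proposal is correct and matches the paper's proof essentially line for line: single oracle calls for the two decision-from-function reductions, binary search over $\mathcal{G}_s^\mathcal{M}$ for $\mathrm{MFE}$ from $\mathrm{dMFE}$, and a scan over the candidate energy levels (extracting the minimum occupied level for $\mathrm{MFE}$, summing the weighted counts for $\mathrm{PF}$) for the two reductions into \ssEL. Your closing paragraph about the reliance on \cref{rem:poly} and arithmetic precision is a fair caveat, and the paper handles it the same way---by deferring those bookkeeping details to the appendix on candidate energy levels.
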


\begin{proof} 

\begin{enumerate}
	\item \textbf{Reduction from dMFE to MFE:} Let $(s,T,\mathcal{M},k)$ be an input for the dMFE problem. 
	After a single call to the $\text{MFE}(s,T,\mathcal{M})$ function, simply computing the Boolean value $(\text{MFE}(s,T,\mathcal{M}) \leq k)$ gives the answer to the dMFE problem. 
	\label{MFEtodMFE}
	
	\item \textbf{Reduction from dPF to PF:} Let $(s,T,\mathcal{M},k)$ be an input for the dPF problem. Similarly, if the partition function $\mathrm{PF}(s,T,\mathcal{M})$ is known, then the Boolean value $(\mathrm{PF}(s,T,\mathcal{M}) \geq k)$ is the answer to the dMFE problem.\label{PFtodPF}
	
	\item \label{dMFEtoMFE}\textbf{Turing Reduction from MFE to dMFE:} Let $(s,T,\mathcal{M})$ be an input for the MFE problem. We know that $\mathrm{MFE}(s, T, \mathcal{M}) = g_j$ for some $1 \leq j \leq |\mathcal{G}_s^\mathcal{M}|$ by definition of $\mathcal{G}_s^\mathcal{M}$. 
	Determining the MFE is equivalent to determining the integer $j$, which can be achieved through a binary search over $\mathcal{G}_s^\mathcal{M}$ thanks to the dMFE oracle.
	The complexity of this search is $\mathcal{O}(\log(|\mathcal{G}_s^\mathcal{M}|)) \in \mathcal{O}(\mathrm{poly}(n))$, as $\mathcal{G}_s^\mathcal{M}$ is of polynomial size.
	This binary search defines a Turing reduction from MFE to dMFE.

	\item \label{ELtoMFE}\textbf{Turing Reduction from MFE to \ssEL}: Let $(s,T, \mathcal{M})$ be an input for the MFE problem, to determine $\mathrm{MFE}(s, T, \mathcal{M})$, we linearly search for $j = \max_{1 \leq i \leq |\mathcal{G}_s^\mathcal{M}|} \{i \mid \ssEL(s, T, \mathcal{M}, g_i) \neq 0\}$ (i.e. the maximal such $i$ gives the MFE due to how the indices are ordered). This search requires only a polynomial number of calls to the \ssEL  oracle, giving a polynomial time Turing reduction from MFE to \ssEL.
	
	\item \label{ELtoPF}\textbf{Turing Reduction from PF to \ssEL:} Let $(s,T,\mathcal{M})$ be an input for the PF problem. The partition function PF$(s, T, \mathcal{M}) = \sum_{1 \leq i \leq |\mathcal{G}_s^\mathcal{M}|} \ssEL(s, T, \mathcal{M}, g_i) e^{-g_i / k_\text{B}T}$. This computation is computable in polynomial time as it requires a polynomial number of calls to the \ssEL oracle, hence defining a Turing reduction from PF to \ssEL.    
\end{enumerate}
\end{proof}

\subsection{Polynomial-time Turing Reduction from \ssELtext to PF}\label{subsec:red_EL_to_PF}
In this section, we prove one of the red-arrow reductions 
in \cref{fig:reduction_map}.
The number of secondary structures of a strand $s$, denoted  $\# \mathrm{SecStruct}(s)$, plays an important role in our reductions. Note that $\# \mathrm{SecStruct}(s)$ includes pseudoknotted and unpseudoknotted structures.

 \begin{restatable}{lemma}{upperbound}
	\label{lemma:upboun}
	For any strand $s$ of size $n > 2$,  $\# \mathrm{SecStruct}(s) < n!$.
 \end{restatable}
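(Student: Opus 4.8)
The plan is to bound the number of secondary structures of a strand $s$ of length $n$ by counting something larger that is easier to enumerate: partial matchings on the $n$ base positions, ignoring base-complementarity constraints entirely. A secondary structure is just a set of disjoint pairs $(i,j)$ of indices, i.e.~a partial matching on the vertex set $\{1,\dots,n\}$ (by Definition~\ref{def:single strand sec struct}, each base is in at most one pair; by Remark~\ref{rk_check_ss} there are at most $\lfloor n/2\rfloor$ pairs). So $\#\mathrm{SecStruct}(s)$ is at most the total number of partial matchings on $n$ labelled vertices. I would then prove that this count is strictly less than $n!$ for $n>2$.

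First I would set up the counting of partial matchings. If we fix the number of pairs to be $k$, the number of partial matchings with exactly $k$ edges is $\binom{n}{2k}\cdot(2k-1)!!$ — choose the $2k$ matched positions, then perfectly match them in $(2k-1)!! = \frac{(2k)!}{2^k k!}$ ways. Summing over $0\le k\le\lfloor n/2\rfloor$ gives the total. So the goal reduces to showing
\[
\sum_{k=0}^{\lfloor n/2\rfloor}\binom{n}{2k}\frac{(2k)!}{2^k k!} < n! \qquad (n>2).
\]
The cleanest route is to bound each term crudely and sum. A convenient term bound: $\binom{n}{2k}\frac{(2k)!}{2^k k!} = \frac{n!}{(n-2k)!\,2^k k!} \le \frac{n!}{2^k k!} \le n!/2^k$ for $k\ge0$ (using $(n-2k)!\ge1$ and $k!\ge1$; actually the $2^k$ alone suffices). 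Summing the geometric-type series $\sum_{k\ge0} n!/2^k = 2\,n!$ is too weak, so I would instead keep the $k!$ in the denominator: $\sum_{k\ge0} n!/(2^k k!) = n!\cdot e^{1/2} \approx 1.6487\,n!$, still too weak. So a naive per-term bound is not enough, and this is the main obstacle: the sum of partial matchings is genuinely a bit larger than $n!$ for very small $n$ but the $(n-2k)!$ factor in the denominator — which I threw away — is exactly what saves the bound for $n>2$.

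The fix is to keep the $(n-2k)!$ factor. Write $a_k = \frac{n!}{(n-2k)!\,2^k k!}$. I would either (i) show directly that $a_0 + a_1 = 1 + \binom{n}{2} = 1 + \frac{n(n-1)}{2}$ already dominates the tail for the relevant regime, bounding $\sum_{k\ge2} a_k$ by something small relative to $n!$; or, more robustly, (ii) compare $\#\mathrm{SecStruct}(s)$ to the number of \emph{all} subsets-of-pairs that need not be disjoint, or better, use the standard fact that the number of involutions (= partial matchings) on $n$ elements, call it $I(n)$, satisfies the recurrence $I(n) = I(n-1) + (n-1)I(n-2)$ with $I(0)=I(1)=1$, and prove $I(n) < n!$ for $n>2$ by induction: $I(n) = I(n-1) + (n-1)I(n-2) < (n-1)! + (n-1)(n-2)! = (n-1)! + (n-1)! = 2(n-1)! \le n!$ for $n\ge2$ — and check the base cases $I(3)=4<6$, $I(4)=10<24$ to seed the induction cleanly. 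This induction is short and avoids all the messy summation, so I would present route (ii).

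To summarize the proof I would write: observe $\#\mathrm{SecStruct}(s)\le I(n)$, the number of partial matchings (involutions) on $n$ points, since a secondary structure is exactly such a matching once base-complementarity is dropped; establish the recurrence $I(n)=I(n-1)+(n-1)I(n-2)$ by conditioning on whether position $n$ is unpaired or paired (and with whom); verify $I(3)=4<3!$ and $I(4)=10<4!$; then induct: for $n\ge5$, $I(n) = I(n-1) + (n-1)I(n-2) < (n-1)! + (n-1)(n-2)! = 2(n-1)! \le n!$. The only subtlety to flag is making sure the base cases cover enough values for the two-step recurrence, and noting that the bound is strict (it fails at $n=2$, where $I(2)=2=2!$, which is why the hypothesis $n>2$ is needed). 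The anticipated obstacle — that a term-by-term estimate of the matching sum is not tight enough — is sidestepped entirely by using the involution recurrence rather than the closed-form sum.
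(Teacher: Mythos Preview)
Your proof is correct, and takes a genuinely different route from the paper's. Both arguments start from the same observation---that $\#\mathrm{SecStruct}(s)$ is at most the number $I(n)$ of partial matchings (involutions) on $n$ labelled points---and both ultimately verify small cases by hand. The difference is in how $I(n)<n!$ is established. The paper works with the closed-form sum $\sum_{k}\frac{n!}{(n-2k)!\,k!\,2^k}$ that you wrote down and then abandoned; it rescues the estimate by the inequality $(n-2k)!\,k!\ge\lfloor n/3\rfloor!$ (one of the two factors is always at least $\lfloor n/3\rfloor$), giving $I(n)<\frac{2\,n!}{\lfloor n/3\rfloor!}\le n!$ for $n\ge 6$, with $n=3,4,5$ checked directly. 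Your route via the involution recurrence $I(n)=I(n-1)+(n-1)I(n-2)$ and the two-line induction $I(n)<(n-1)!+(n-1)(n-2)!=2(n-1)!\le n!$ is shorter, avoids the summation analysis entirely, and needs only the base cases $n=3,4$. The paper's approach has the minor advantage of yielding a slightly sharper asymptotic bound ($I(n)\lesssim 2n!/\lfloor n/3\rfloor!$), but that extra strength is not used anywhere. Your argument is the cleaner of the two.
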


\begin{proof}
	
	By enumerating all secondary structures based on the number of base pairs formed in each. Any secondary structure $S$ has $k \leq \lfloor n/2 \rfloor$ base pairs, these $k$ base pairs are formed between $2k$ different bases. The first base, among the $2k$ different bases, has $2k-1$ bases to form a base pair with. The next “unpaired” base has $(2k-3)$ bases to form a base pair with. At the end, the number of secondary structures which have exactly $k$ base pairs is upper bounded by $\binom{n}{2k}\cdot(2k-1)\cdot(2k-3)\cdots3\cdot1$. Leading to     
	\begin{align}
		\# \mathrm{SecStruct}(s) &\leq \sum_{k=0}^{\lfloor n/2 \rfloor}\binom{n}{2k}\cdot(2k-1)\cdot(2k-3)\cdots \label{eq:ub}&\\
		&\leq \sum_{k=0}^{\lfloor n/2 \rfloor}\frac{n!}{(2k)! \ (n-2k)!} \ \frac{1 \cdot 2 \cdot 3 \cdots 2k}{2 \cdot 4 \cdot 6 \cdots 2k} &~\\
		&\leq \sum_{k=0}^{\lfloor n/2 \rfloor}\frac{n!}{(2k)! \ (n-2k)!} \ \frac{(2k)!}{k! \ 2^k}&~ \\
		&\leq n! \sum_{k=0}^{\lfloor n/2 \rfloor} \frac{1}{(n-2k)! \  k! \ 2^k}&~\\ 
		& \leq n! \sum_{k=0}^{\lfloor n/2 \rfloor} \frac{1}{\lfloor n/3 \rfloor! \ 2^k}  & \text{as }\sum_{k=0}^{\lfloor n/2 \rfloor} \frac{1}{2^k} < 2\\  
		& < \frac{2n!}{\lfloor n/3 \rfloor!}  \leq n! & \text{for } n\geq 6
	\end{align}
	
	Note that, $\#\mathrm{SecStruct}(s) = n!$ for $n=1$ or $n=2$, which are useless cases for any system. To extend the inequality to all $n > 2$, we only need to check if it is verified for $3\leq n\leq 5$. By computing the right-hand side of \cref{eq:ub}, the associated numbers of possible secondary structures are at most $4,10,\text{and }26$ which are upper bounded by its corresponding $n!$. Therefore,  $\# \mathrm{SecStruct}(s)  < n!$ for all $n>2$.  
\end{proof}

\begin{theorem} \label{PFtoEL}
	There exists a polynomial-time Turing reduction from \ssEL~to $\mathrm{PF}$.
\end{theorem}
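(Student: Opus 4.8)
The plan is to invert the relation used in the easy direction (item~\ref{ELtoPF} of \cref{th:easy}), where $\mathrm{PF} = \sum_i \ssEL(s,T,\mathcal{M},g_i)\, e^{-g_i/k_\mathrm{B}T}$ expresses PF as a linear combination of the \ssEL{} values with known coefficients. Given one target energy level $k$, I want to recover the single coefficient $\ssEL(s,T,\mathcal{M},k)$. The obstruction to doing this directly is that one PF call returns only one number, namely a fixed weighted sum of all the unknown counts, so a single query cannot disentangle them. This is exactly where the magnification trick (\cref{note:magnification}, and the figure following it) enters: by replacing $\mathcal{M}$ with its magnification $\mathcal{M}_\alpha$ (energy function $\alpha\cdot\Delta G^\mathcal{M}$) for several values of $\alpha$, I obtain several PF values that are different weighted sums of the \emph{same} counts $\{\,\ssEL(s,T,\mathcal{M},g_i)\,\}$, since magnification does not change which structures sit at which (relative) level. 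Solving the resulting linear system recovers every count, in particular the one at level $k$.

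Concretely, here is the order of steps. First, fix a polynomial-size, polynomial-time-computable set of candidate energy levels $\mathcal{G}_s^\mathcal{M} = \{g_1,\dots,g_m\}$ (guaranteed by \cref{rem:poly} and \cref{App:candlev}); note $m$ is polynomial in $n$. Second, observe that under magnification by $\alpha$ the energy of a structure at level $g_i$ becomes $\alpha g_i$, and crucially the partition of structures by level is preserved, so $\ssEL(s,T,\mathcal{M}_\alpha,\alpha g_i) = \ssEL(s,T,\mathcal{M},g_i)$; write $x_i$ for this common value, the unknowns. Third, for $m$ distinct magnification factors $\alpha_1,\dots,\alpha_m$ (chosen with $O(\mathrm{poly}(n))$-bit descriptions so each PF call is a legitimate polynomial-size input), call the PF oracle to get
\begin{equation*}
  b_\ell \;=\; \mathrm{PF}(s,T,\mathcal{M}_{\alpha_\ell}) \;=\; \sum_{i=1}^{m} x_i\, e^{-\alpha_\ell g_i / k_\mathrm{B}T}, \qquad \ell = 1,\dots,m.
\end{equation*}
This is a linear system $A x = b$ with $A_{\ell i} = e^{-\alpha_\ell g_i / k_\mathrm{B}T}$. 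Fourth, argue $A$ is invertible: setting $u_\ell = e^{-\alpha_\ell/k_\mathrm{B}T}$ (distinct for distinct $\alpha_\ell$, after possibly perturbing the $\alpha_\ell$ to avoid the finitely many bad coincidences) and noting the columns are indexed by distinct exponents $g_i$, the matrix $A_{\ell i} = u_\ell^{\,g_i}$ is a generalized Vandermonde matrix, hence nonsingular. Fifth, solve the system (Gaussian elimination, polynomial time in $m$ and in the bit-length of the entries) to obtain all $x_i$, and return $x_j$ where $g_j = k$ (return $0$ if $k \notin \mathcal{G}_s^\mathcal{M}$). Since $m = O(\mathrm{poly}(n))$, this is a polynomial number of PF calls plus polynomial-time post-processing, i.e.\ a polynomial-time Turing reduction.

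The main obstacle I expect is bookkeeping around \emph{numerical precision and rationality rather than the combinatorial idea}: the entries $e^{-\alpha_\ell g_i/k_\mathrm{B}T}$ and the PF outputs are in general irrational, so I need to be careful that the reduction manipulates these quantities only to the polynomial precision that the problem statements actually supply, and that the linear-algebra step is numerically stable enough (the generalized Vandermonde system can be ill-conditioned) to read off the integer $x_j$ unambiguously — e.g.\ by choosing the $\alpha_\ell$ so that the exponents $\alpha_\ell g_i$ stay within a polynomially bounded range, keeping $\det A$ bounded away from $0$ by an inverse-polynomial-in-magnitude quantity, and then rounding the recovered $x_j$ to the nearest nonnegative integer (legitimate since we know a priori $0 \le x_j < n!$ by \cref{lemma:upboun}, so $O(n\log n)$ bits of accuracy suffice). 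A secondary subtlety is that magnification must be applied to the energy \emph{model} passed to the oracle in a way the oracle accepts as a valid input; since magnification is just multiplying the (rationally specified, $O(\log n)$-precision) energy function by a rational $\alpha$, the magnified model is still a legal, polynomially-describable input, so this is routine once stated carefully.
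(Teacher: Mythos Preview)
Your proposal is correct and follows essentially the same approach as the paper: use the magnification trick to obtain $m=|\mathcal{G}_s^\mathcal{M}|$ PF values, view them as a linear system in the unknowns $\ssEL(g_i)$, and invert a Vandermonde-type matrix. The only simplification the paper makes is to choose the magnification factors to be the integers $\alpha_j=j$ for $j=1,\dots,m$, so that $A_{ji}=(e^{-\beta g_i})^{j}$ is a \emph{standard} Vandermonde in the distinct nodes $e^{-\beta g_i}$ (no generalized-Vandermonde argument or perturbation needed), after which the size/precision bookkeeping is handled exactly along the lines you sketch.
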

\begin{proof}
For notational convenience, instead of $\ssEL(s,T,\mathcal{M},g_i)$ we write $\ssEL(g_i)$ to denote the number of secondary structures with free energy $g_i$, and $\beta = 1/k_\text{B}T$. 
Let $N= | \mathcal{G}_s^\mathcal{M} |$ denote the size of the set of candidate energy levels, see \cref{def:can,rem:poly}.

\cref{algo1} gives a polynomial time Turing reduction from  \ssEL~to PF, for the inputs $s, T, \mathcal{M}, g_k$, where the main idea behind this algorithm is the following:
\begin{itemize}
	\item We use our \textit{magnification trick} to magnify the distances between energy levels:
	this idea imagines another energy model $\mathcal{M}_j$, 
	where $\Delta G^{\mathcal{M}_j} = j  \cdot \Delta G^\mathcal{M}$ 
	(\cref{def:abstract energy model} defines $\Delta G^\mathcal{M}$)
	which we assume a PF oracle can handle (see \cref{subsec:red_discussion} for details).

	\item We compute PF using a call to the oracle $\mathrm{PF}(s,T,\mathcal{M}_j)$ with energy model $\mathcal{M}_j$.
	
	\item We do this magnification $N$ times with different magnification factors $1 \leq j \leq N$.
	
	\item We end up with a system of linear equations that has a unique solution which is the number of secondary structures 
	per energy levels, outputting the correct one $\ssEL(g_k)$.

\end{itemize}

\begin{algorithm} 
	\caption{Computing \ssEL  by calling the oracle $\mathrm{PF}(s,T,\mathcal{M})$ for PF.} \label{algo1}
	\begin{algorithmic}[1]
		\Statex \textbf{Input:} 
		$s,T,\mathcal{M},g_k$
		\State Compute a set of candidate energy levels $\mathcal{G}_s^\mathcal{M} = \{g_1, \cdots, g_N\}$ \Comment see \cref{App:candlev}
		\For{$j \gets 1$ to $N$}

		\State Let $b_j = \mathrm{PF}(s,T,\mathcal{M}_j)$, where $ \Delta G^ {\mathcal{M}_j}= j \cdot \Delta G^ \mathcal{M}$ \Comment{see \protect{\cref{note:magnification}}} \label{line:3}
		\EndFor
		\State Solve the following system of linear equations with $N$ unknowns $\{\ssEL(g_i), ~1\leq i \leq N\}$ and with $\beta = 1/ k_\mathrm{B} T$: 
		\[
		\begin{array}{c@{\hskip 4pt}c@{\hskip 4pt}c@{\hskip 4pt}c@{\hskip 4pt}c@{\hskip 4pt}c@{\hskip 4pt}c@{\hskip 4pt}c@{\hskip 4pt}c}
        \ssEL(g_1)e^{-\beta g_1} &+& \ssEL(g_2)e^{-\beta g_2} &+& \cdots &+& \ssEL(g_N)e^{-\beta g_N} &=& b_1 \\
        \ssEL(g_1)(e^{-\beta g_1})^2 &+& \ssEL(g_2)(e^{-\beta g_2})^2 &+& \cdots &+& \ssEL(g_N)(e^{-\beta g_N})^2 &=& b_2 \\
        \vdots && \vdots && \ddots && \vdots && \vdots \\
        \ssEL(g_1)(e^{-\beta g_1})^N &+& \ssEL(g_2)(e^{-\beta g_2})^N &+& \cdots &+& \ssEL(g_N)(e^{-\beta g_N})^N &=& b_N
        \end{array}
        \]
    \State \textbf{Return:}  $\ssEL(g_k)$ 
	\end{algorithmic}
\end{algorithm}

\noindent\textbf{Correctness}:
An algorithm to compute the set $\mathcal{G}_s^\mathcal{M} $ is given in \cref{App:candlev}. First note that the partition function can be partitioned  by energy levels $g_i$ to give the form: $\mathrm{PF}(s,T,\mathcal{M})= \sum_{i=1}^{N}\ssEL(g_i)e^{-\beta \cdot g_i} $. 
Hence, under magnification $j$ (see \cref{note:magnification}), the general expression for all \( b_j \) is:
\[b_j =\mathrm{PF}(s,T,\mathcal{M}_j)= \sum_{i=1}^{N}\ssEL(g_i)e^{-\beta\cdot j \cdot g_i} =\sum_{i=1}^{N}\ssEL(g_i)(e^{-\beta\cdot g_i})^{j}\]
which is the form in \cref{algo1}.  \cref{algo1} solves this system of linear equations; to see this we write it  
in the standard matrix form $\mathbf{A}\mathbf{x} = \mathbf{b} $, where: 
\[
\mathbf{A} = \begin{pmatrix}
	e^{-\beta \cdot g_1} & e^{-\beta \cdot g_2} & \cdots & e^{-\beta \cdot g_N} \\
	(e^{-\beta \cdot g_1})^2 & (e^{-\beta \cdot g_2})^2 & \cdots & (e^{-\beta \cdot g_N})^2 \\
	\vdots & \vdots & \ddots & \vdots \\
	(e^{-\beta \cdot g_1})^N & (e^{-\beta \cdot g_2})^N & \cdots & (e^{-\beta \cdot g_N})^N
\end{pmatrix}, \quad
\mathbf{x} = \begin{pmatrix}
	\ssEL(g_1) \\
	\ssEL(g_2) \\
	\vdots \\
	\ssEL(g_N)
\end{pmatrix}, \quad
\mathbf{b} = \begin{pmatrix}
	b_1 \\
	b_2 \\
	\vdots \\
	b_N
\end{pmatrix}.
\]
Since $\mathbf{A}$ is a Vandermonde matrix~\cite{VanDerMonde}, and $e^{-\beta \cdot g_i} \neq e^{-\beta \cdot g_j}$ if $i \neq j$, then matrix $\mathbf{A}$ is non-singular. Therefore, the system of equations $ \mathbf{A}\mathbf{x} = \mathbf{b} $ has a unique solution. 
That solution is the list $[\ssEL(g_1), \ssEL(g_k), \ldots, \ssEL(g_N)]$, the $k$th entry being $g_k$, which is returned. \newline

\noindent\textbf{Time analysis of \cref{algo1}}: 
By the results in \cref{sec:brute force EL,sec:more precise EL algo}, a 
candidate set $\mathcal{G}_s^\mathcal{M} $ is computed in polynomial time.
$N=\mathrm{poly}(n)$, therefore, {\cref{algo1}} calls the PF oracle $\mathrm{PF}(s,T,\mathcal{M}_j)$ a polynomial number of times. Moreover, each  call is has polynomial input size. 
Solving the resulting system of $N$ linear equations can be achieved in $\mathrm{poly}(N)$ time by Gaussian elimination, since the inputs of the system are stored in polynomial space: 
\begin{itemize}
	\item Matrix  $\mathbf{A}$: the largest entry of  $\mathbf{A}$ has size $ \log ((e^{-\beta \cdot g_N})^N) = N \log (e^{-\beta \cdot g_N}) = \mathrm{poly}(n)$, since $g_i = \mathrm{poly}(N) = \mathrm{poly}(n)$, \cref{rem:poly}.
	
    \item Vector $\mathbf{b}$: the largest entry of  $\mathbf{b}$ has size 
    \begin{align*}
        \log (b_N) &= \log(\ssEL(g_1)(e^{-\beta g_1})^N + \cdots + \ssEL(g_N)(e^{-\beta g_N})^N) \\
        & \leq \log(N\mathrm{\#SecStruct(s)}(e^{-\beta g_N})^N) \\
        & \leq \log(N) + \log(\mathrm{\#SecStruct(s)}) + N\log(e^{-\beta g_N}) \\
        & \leq \mathrm{poly}(n) \hspace{4cm} \text{\hfill As $\#\mathrm{SecStruct}(s) < n!$ by \cref{lemma:upboun}.}
    \end{align*}
\end{itemize}
This shows the existence of  a polynomial-time Turing reduction from \ssEL~to PF.
\end{proof}

\subsection{Reduction from dMFE to dPF}\label{subsec:red_dMFE_to_dPF}

\begin{theorem} \label{dMFE_to-dPF}
There exists a polynomial-time Turing reduction from $\mathrm{dMFE}$ to $\mathrm{dPF}$.
\end{theorem}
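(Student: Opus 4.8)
\textbf{Proof plan for \cref{dMFE_to-dPF}.}
The plan is to reduce dMFE to dPF using the magnification trick, in the spirit of \cref{PFtoEL}, but now exploiting the fact that a single well-chosen magnification factor can make the contribution of the minimum-energy structures completely dominate the partition function. Given a dMFE input $(s,T,\mathcal{M},k)$, we want to decide whether $\mathrm{MFE}(s,T,\mathcal{M}) \le k$. First I would compute a polynomial-size set of candidate energy levels $\mathcal{G}_s^\mathcal{M} = \{g_1,\dots,g_N\}$ (\cref{def:can}, \cref{rem:poly}, \cref{App:candlev}), and restrict attention to those $g_i \le k$; if we can detect that at least one secondary structure has energy $\le k$ then we answer yes, otherwise no. The key observation is that, writing $\beta = 1/k_\mathrm{B}T$ and magnifying by $\alpha$, the partition function of $\mathcal{M}_\alpha$ is $\sum_{i=1}^N \ssEL(g_i)\, e^{-\beta\alpha g_i}$; since all $g_i$ (and the gaps between distinct levels) are bounded by $\mathrm{poly}(n)$ and bounded below away from $0$ where nonzero, a sufficiently large but still polynomially-bounded $\alpha$ makes the term for the smallest occupied energy level outweigh the sum of all the other terms (using $\#\mathrm{SecStruct}(s) < n!$ from \cref{lemma:upboun} to bound the multiplicities).

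Concretely, the plan is: pick $\alpha$ polynomial in $n$ large enough that for any two candidate levels $g_i < g_j$ we have $e^{-\beta\alpha g_j} > n!\cdot e^{-\beta\alpha g_i}$ — i.e. $\alpha$ exceeds $\tfrac{\ln(n!)}{\beta \cdot \delta}$ where $\delta>0$ is the minimum gap between distinct candidate levels (itself $\ge 1/\mathrm{poly}(n)$ under the logarithmic-precision assumption of \cref{rem:poly}). Then call the dPF oracle on $(s,T,\mathcal{M}_\alpha, \tau)$ for a carefully chosen threshold $\tau$ that separates "the MFE is $\le k$" from "the MFE is $> k$": for instance, taking $\tau = e^{-\beta\alpha g_{i^\star}}$ where $g_{i^\star}$ is the largest candidate level $\le k$, the domination property guarantees that $\mathrm{PF}(s,T,\mathcal{M}_\alpha) \ge \tau$ precisely when some structure has energy $\le g_{i^\star} \le k$. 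Actually, to be safe one may want to iterate over candidate levels $g_i \le k$ (a polynomial number of them), and for each use a magnification and dPF query to test whether $\ssEL$ at level $\le g_i$ is nonzero; the largest such occupied level, if any, is the MFE, and we compare it to $k$. Either way only polynomially many oracle calls on polynomially-sized inputs are made, and the bookkeeping (magnified energies, thresholds) stays in polynomial space by the same estimates as in the proof of \cref{PFtoEL}.

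The main obstacle I expect is the quantitative calibration: one has to verify that the magnification factor $\alpha$, the threshold $\tau$, and the bit-sizes of $e^{-\beta\alpha g_i}$ and of the resulting partition-function value all remain polynomial in $n$ simultaneously, and that the strict inequality $e^{-\beta\alpha g_j} > n!\, e^{-\beta\alpha g_i}$ really does force the partition function to land on the correct side of $\tau$ in \emph{every} case (including when many structures share the dominant level, and when the only occupied levels are all $>k$). This requires the minimum-gap bound $\delta \ge 1/\mathrm{poly}(n)$, which in turn leans on \cref{rem:poly}'s assumption that loop energies have logarithmic precision so that distinct candidate energies differ by at least an inverse polynomial; for the BPM and BPS models (where energies are integers) $\delta = 1$ and this is immediate. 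The remaining steps — computing $\mathcal{G}_s^\mathcal{M}$, doing the comparison with $k$, and the polynomial-time accounting — are routine given the tools already set up for \cref{PFtoEL}.
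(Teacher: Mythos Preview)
Your plan is essentially the paper's proof: compute $\mathcal{G}_s^\mathcal{M}$, let $x$ be the largest candidate level $\le k$, magnify by $\alpha=\ln(n!)/(\beta\delta)$, set the threshold $\tau=e^{-\beta\alpha x}=(n!)^{-x/\delta}$, and make a \emph{single} dPF call (the iteration you propose as a fallback is not needed). One slip to fix: the domination inequality is reversed --- for $g_i<g_j$ you want $e^{-\beta\alpha g_i} > n!\cdot e^{-\beta\alpha g_j}$, since lower energy gives the larger Boltzmann weight.
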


\begin{proof}

We denote the minimal step between any two energy levels by $\delta$, \cref{rem:poly}, and $\beta = 1/(k_\text{B}T)$. \cref{algo2} gives a polynomial time Turing reduction from  dMFE to dPF, where the main idea behind this algorithm is the following:
\begin{itemize}
	\item We use our magnification trick to make a \textbf{huge} magnification of energy levels. 
	
	\item This magnification is carefully designed so that the contribution to the PF of exactly one secondary structure, the MFE one, overwhelms the contribution to PF of others. 
	
	\item Our construction guarantees this by handling the worst case scenario:
	\begin{enumerate}
		\item Only one secondary structure is at the MFE level.
		\item $\#\mathrm{SecStruct}(s)$ secondary structures, see \cref{lemma:upboun}, belong to the closest energy level to the MFE level, which is MFE $+ ~\delta$. 
	\end{enumerate}

	\item   This huge separation of the contribution of secondary structures is achieved due to the exponential nature of the PF. 
	That, in turn, solves dMFE problem with a dPF oracle. 
\end{itemize}
\begin{algorithm} 
	\caption{Solve dMFE calling an oracle for dPF.}
	\label{algo2}
	\begin{algorithmic}[1]
		\Statex \textbf{Input:} 
		$s,T,\mathcal{M},k$
        \State Compute a set of candidate energy levels $\mathcal{G}_s^\mathcal{M}$\Comment see \cref{App:candlev}
		\State Let $x = \mathrm{max}\{{g \in \mathcal{G}_s^\mathcal{M} ~|~ g \leq k}\}$ 
		\Comment $x$ is the max value from $\mathcal{G}_s^\mathcal{M}$ that is  $\leq k$. 
		\State Let $\Delta G^{\mathcal{M}'} = (\log(n!)/(\beta \delta)) \cdot \Delta G^\mathcal{M}$
		\State Let $k'=e^{-{\log(n!)}\cdot x/\delta}=(n!)^{- x/\delta}$

\Statex \textbf{Return:} 
	     $\mathrm{dPF}(s,T,\mathcal{M}',k')$

	\end{algorithmic}
\end{algorithm}
\textbf{Correctness}:
We need to prove the following claim: $\mathrm{dMFE}(s,T,\mathcal{M},k) \Leftrightarrow \mathrm{dPF}(s,T, \mathcal{M}', k')$, which is equivalent to $\mathrm{MFE}(s,T,\mathcal{M}) \leq k \Leftrightarrow \mathrm{PF}(s,T, \mathcal{M}') \geq k'$. By definition of $x$, we have $\mathrm{MFE}(s,T,\mathcal{M}) \leq k \Leftrightarrow \mathrm{MFE}(s,T,\mathcal{M}) \leq x$. Therefore, it's equivalent to prove the following: 
\[\mathrm{MFE}(s,T,\mathcal{M}) \leq x \Leftrightarrow \mathrm{PF}(s,T, \mathcal{M}') \geq k'.\]
The partition function, after magnification, is the following: \[\mathrm{PF}(s,T,\mathcal{M}') = \sum_{g\in \mathcal{G}_s^\mathcal{M}}\ssEL(s,T,\mathcal{M},g)e^{-\beta\cdot  g\cdot \ln(n!)/(\beta\delta)} =  \sum_{g\in \mathcal{G}_s^\mathcal{M}}\ssEL(s,T,\mathcal{M},g)(n!)^{-g/\delta}\]

$\Rightarrow$ If $\mathrm{MFE}(s,T,\mathcal{M}) \leq x$, then a MFE secondary structure contributes to the magnified partition function, $\mathrm{PF}(s,T,\mathcal{M}')$, with a coefficient of $(n!)^{- \mathrm{MFE}(s,T,\mathcal{M})/\delta} \geq (n!)^{- x/\delta} = k'$, hence  $\mathrm{PF}(s,T,\mathcal{M}') \geq k'$ as required.

$\Leftarrow$ Conversely, if $\mathrm{MFE}(s,T,\mathcal{M}) > x$, then by definition of $\delta$, $\mathrm{MFE}(s,T,\mathcal{M}) \geq x+\delta$.
\begin{align*}
	\mathrm{PF}(s,T, \mathcal{M'}) &= \sum_{g\in \mathcal{G}_s^\mathcal{M}}\ssEL(s,T,\mathcal{M},g)(n!)^{-g/\delta} \\
	&\leq \#\mathrm{SecStruct}(s) \cdot (n!)^{-\mathrm{MFE}(s,T,\mathcal{M})/\delta} \\
	&< n! \cdot (n!)^{-\mathrm{MFE}(s,T,\mathcal{M})/\delta} \hspace{1.5cm} \text{As $\#\mathrm{SecStruct}(s) < n!$ by \cref{lemma:upboun}.}\\
	&< (n!)^{-(\mathrm{MFE}(s,T,\mathcal{M})-\delta)/\delta} \\
	&< (n!)^{-x/\delta} = k' 
\end{align*}
This proves the main claim that: $\mathrm{dMFE}(s,T,\mathcal{B},k) \Leftrightarrow \mathrm{dPF}(s,T, \mathcal{B}', k')$.\newline

\noindent\textbf{Complexity analysis}:
\begin{itemize}
	\item Step 1: is done in polynomial-time, since $|\mathcal{G}_s^\mathcal{M}| = \mathrm{poly}(n)$, \cref{rem:poly}. 
	
	\item Step 2 (magnification step): is done in polynomial-time, since  $\log(\frac{\log(n!)}{\beta\cdot\delta}) \leq  \log(n^2) + \log(k_{\text{B}}T) + \log(1/\delta) \leq \mathrm{poly}(n) + \mathrm{poly}(T)$,  due to the logarithmic size of $1/\delta$, \cref{rem:poly}.
	
	\item Step 3: $k'$ can be computed in $\mathrm{poly}(n)$ time, and the size of $k'$ is $\log((n!)^{- x/\delta}) \leq (-x/\delta)\cdot n^2 = \mathrm{poly}(n)$ bits.
\end{itemize}
This shows the existence of a polynomial-time Turing reduction from dMFE to dPF.
\end{proof}

\subsection{Polynomial-time Turing Reduction from PF to dPF}\label{subsec:red_PF_to_dPF}
This next reduction also exploits our magnification trick, hence is slightly more involved than the previous one from MFE to dMFE.  
while for PF, the search space is of exponential size. A first approach would be to binary search for the PF value, using the oracle dPF. This approach runs in polynomial time if we are satisfied with linear precision of the PF value (but the PF can have arbitrary precision since it uses exponentials of $\mathrm{e}$). 
However, we can search for the exact value by combining the simple binary search with our magnification trick,  efficiently exploiting the dPF oracle to compute the  exact PF contribution at each energy level.

\begin{theorem}\label{dPFtoPF}
 	There exists a polynomial-time Turing reduction from
    $\mathrm{PF}$ to  $\mathrm{dPF}$.
\end{theorem}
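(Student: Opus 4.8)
The plan is to compute $\mathrm{PF}(s,T,\mathcal{M})$ exactly by recovering, one energy level at a time, the integer counts $\ssEL(g_i)$ for every $g_i$ in a polynomial-size candidate set $\mathcal{G}_s^\mathcal{M} = \{g_1 < g_2 < \cdots < g_N\}$, using only the decision oracle $\mathrm{dPF}$. Once all these counts are known, the partition function is simply $\mathrm{PF}(s,T,\mathcal{M}) = \sum_{i=1}^{N} \ssEL(g_i)\, e^{-\beta g_i}$ with $\beta = 1/(k_\mathrm{B}T)$, computable in polynomial time since $N = \mathrm{poly}(n)$, each $g_i$ has logarithmic size (\cref{rem:poly}), and each count is bounded by $\#\mathrm{SecStruct}(s) < n!$ (\cref{lemma:upboun}), hence has $\mathrm{poly}(n)$ bits. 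So the whole task reduces to: given a target energy level $g_i$, determine the exact integer $\ssEL(g_i) \in \{0, 1, \ldots, n!-1\}$ using $\mathrm{dPF}$.

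The key idea is the magnification trick combined with binary search. For a chosen magnification factor $\alpha$, consider the magnified model $\mathcal{M}_\alpha$ with $\Delta G^{\mathcal{M}_\alpha} = \alpha \cdot \Delta G^{\mathcal{M}}$, so that $\mathrm{PF}(s,T,\mathcal{M}_\alpha) = \sum_{j=1}^{N} \ssEL(g_j)\, (n!)^{-g_j \cdot c}$ for an appropriate constant $c$ absorbing $\beta$, $\alpha$, and a unit-normalising $\delta$ (the minimal gap between candidate energy levels, \cref{rem:poly}), exactly as in the proof of \cref{dMFE_to-dPF}. Choosing $\alpha = \log(n!)/(\beta\delta)$ makes the contributions at distinct energy levels differ by factors that are powers of $n!$; since every individual $\ssEL(g_j) < n!$, the total contribution of all levels strictly below (more negative than) $g_i$ — which would dominate — must first be "subtracted off," and the contribution of all levels strictly above $g_i$ is provably smaller than one unit in the $(n!)^{-g_i c}$ place. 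Concretely, I would process the candidate levels from most negative to least negative: having already determined $\ssEL(g_1), \ldots, \ssEL(g_{i-1})$, I want $\ssEL(g_i)$. Using $\mathrm{dPF}$ with the magnified model and a carefully chosen threshold $k'$, I can test predicates of the form "is $\sum_{j\geq i}\ssEL(g_j)(n!)^{-g_jc} \geq (\text{guess})\cdot(n!)^{-g_i c}$?" — but $\mathrm{dPF}$ only answers about the \emph{full} $\mathrm{PF}$, not a tail sum, so I instead fold the known prefix contribution into the threshold: test whether $\mathrm{PF}(s,T,\mathcal{M}_\alpha) \geq (\text{known prefix sum}) + m\cdot(n!)^{-g_i c}$ for integer guesses $m$, and binary-search on $m \in \{0,\ldots,n!-1\}$ ($O(\log n!) = O(n\log n)$ oracle calls per level). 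The magnification guarantees the "above $g_i$" tail contributes less than $(n!)^{-g_i c}$ in total, so the binary search pins down $\ssEL(g_i)$ exactly.

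For the complexity: there are $N = \mathrm{poly}(n)$ levels, each needing $O(n\log n)$ calls to $\mathrm{dPF}$, so $\mathrm{poly}(n)$ calls total; the threshold $k'$ passed to each call has size $O(N \cdot \log(n!) \cdot \max_i|g_i|/\delta) = \mathrm{poly}(n)$ bits by the same estimates as in \cref{PFtoEL} and \cref{dMFE_to-dPF}; the magnification factor $\alpha$ has logarithmic size by \cref{rem:poly}; and the final summation is polynomial-time.

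The main obstacle I anticipate is the bookkeeping to make the separation argument airtight: one must verify that after magnification the combined contribution of \emph{all} energy levels above $g_i$ is strictly less than a single unit at the $g_i$ place (this needs the bound $\#\mathrm{SecStruct}(s) < n!$ applied across all those levels simultaneously, together with the geometric decay $\sum_k (n!)^{-k\delta/\delta} < 2$), and simultaneously that the prefix contribution can be represented and added exactly without rounding. Getting the constant $c$ (equivalently the exact choice of $\alpha$) right so that consecutive candidate levels are separated by a clean factor of $n!$ — rather than something that only barely separates — is where the proof has to be careful, but it follows the same template already established in \cref{dMFE_to-dPF}.
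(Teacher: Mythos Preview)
Your proposal is correct and follows essentially the same approach as the paper: magnify by $\alpha=\log(n!)/(\beta\delta)$ so that $\mathrm{PF}(s,T,\mathcal{M}')=\sum_g \ssEL(g)(n!)^{-g/\delta}$ behaves like a number written in base~$n!$, then recover the digits $\ssEL(g)$ one at a time, from the most favourable level outward, by binary search with the dPF oracle while folding the already-determined prefix into the threshold. Your write-up is in fact more explicit than the paper's about why the tail beyond level $g_i$ contributes strictly less than one unit at that place (via $\#\mathrm{SecStruct}(s)<n!$ and the $\delta$-gap), which is exactly the bookkeeping the paper leaves implicit in the phrase ``can be seen as a number written in base~$n!$''.
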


\begin{proof}
	Let $(s,T,\mathcal{M})$ be the input of the PF problem, and consider \cref{algo2} and notations used in  \cref{dMFE_to-dPF}. 
	Since $\ssEL(s,T,\mathcal{M},g) < n!$ for all $g \in \mathcal{G}_s^\mathcal{M}$ by \cref{lemma:upboun}, 
	the partition function PF$(s,T, \mathcal{M}') = \sum_{g \in \mathcal{G}_s^\mathcal{M}} \ssEL(s,T,\mathcal{M},g)(n!)^{-g/\delta}$ can be seen as a number written in base $n!$. Determining this number in base $n!$ is equivalent to determining all coefficients of the form $\ssEL(s,T,\mathcal{M},g)$.
	To achieve this, we do the following: 
    \begin{itemize}
    	\item We perform a binary search for each coefficient $\ssEL(s,T,\mathcal{M},g)$ in decreasing order over $\mathcal{G}_s^\mathcal{M}$ {(from the most to the less favourable)}.
    	
        \item First, guess $\ssEL(s,T,\mathcal{M},g_{|\mathcal{G}_s^\mathcal{M}|})$, by calling dPF$(s, T, \mathcal{M}', c_1 (n!)^{- g_{|\mathcal{G}_s^\mathcal{M}|} / \delta})$, {repeatedly, using binary search with search variable $c_1$ over the range} $0 \leq c_1 \leq n!$.
        \item Inductively, after computing  $\ssEL(s,T,\mathcal{M},g_i)$, find $\ssEL(s,T,\mathcal{M},g_{i-1})$ by calling dPF oracle with input $(s, T, \mathcal{M}', \sum_{j=1}^i \ssEL(s,T,\mathcal{M},g_j)(n!)^{- g_j / \delta} + c_{i-1} (n!)^{- g_{i-1} / \delta})$, where $c_{i-1}$ is the search variable: $0 \leq c_{i-1} \leq n!$.
        \item At the end, we know all of the $\ssEL(s,T,\mathcal{M},g_i)$ values, we then combine them to directly compute the partition function $\mathrm{PF}(s,T,\mathcal{M})= \sum_{i=1}^{N}\ssEL(s,T,\mathcal{M},g_i)e^{- g_i/(k_\text{B}T)}$.
    \end{itemize}
	By using binary search, the algorithm requires only $|\mathcal{G}_s^\mathcal{M}| \log(n!)$ calls to dPF oracles,  {which is} $ \text{poly}(n)$, \cref{rem:poly}. Hence, it {is} a polynomial-time Turing reduction from PF to dPF.
\end{proof}

\subsection{Discussion for \cref{sec:reductions}: reductions and  the magnification trick}\label{subsec:red_discussion}

When we call a specific oracle using our \textit{magnification trick}, 
we ask the oracle to function under the same energy model $\mathcal{M}$ but magnified (e.g.~by factor $j$ in Line \ref{line:3} of \cref{algo1}, to give $\mathcal{M}_j$). 
We believe this  kind of magnification of the energy model, combined with our reductions in \cref{sec:reductions}, is a useful notion for finding  polynomial time algorithms for  thermodynamic problems (such as those in \cref{fig:reduction_map}), hence we propose a general definition: 

\begin{definition2}[\PolyM energy model]\label{def:mag}
	An energy model $\mathcal{M}$ is \textbf{\polyM} if 
	there is a polynomial time algorithm $\mathrm{PF}(\cdot,\cdot,\mathcal{M}_j)$ that computes PF under the $j$-magnified energy model: $ \Delta G^ {\mathcal{M}_j}= j \cdot \Delta G^ \mathcal{M}$, for all $j \in \mathrm{poly}(n)$, and we call  $\mathrm{PF}(\cdot,\cdot,\mathcal{M})$ a \textbf{\magAdapt} algorithm under $\mathcal{M}$.  
\end{definition2}

\subsubsection{Positive (polynomial time) results}

In the unpseudoknotted 1-strand, or even $\mathcal{O}(1)$-strand cases, with the \UBP and BPS models, 
there is a  polynomial time algorithm for PF. 
It is not hard to show that PF for \UBP and BPS  is \magAdapt, 
hence \UBP and BPS are \polyM energy models.  
Hence,  from our reductions and that PF algorithm.  
we get polynomial time algorithms for free for the other four problems besides PF in \cref{fig:reduction_map}. 
The same holds for the NN model {\em without rotational symmetry} 
(i.e.~single-stranded NN model, or the multi-stranded NN model that ignores rotational symmetry).

\begin{corollary}\label{cor:polyM puts problem in P}
When $\mathcal{M}$ is a \polyM model, all five problems in \cref{fig:reduction_map} are in P.
\end{corollary}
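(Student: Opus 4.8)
\textbf{Proof plan for Corollary \ref{cor:polyM puts problem in P}.}
The plan is to chain together the reductions already established in \cref{sec:reductions}, using the hypothesis that $\mathcal{M}$ is \polyM as the single ``seed'' that puts one of the five problems into P. By \cref{def:mag}, being \polyM means there is a \magAdapt polynomial-time algorithm for PF under $\mathcal{M}$, i.e.~one that computes $\mathrm{PF}(\cdot,\cdot,\mathcal{M}_j)$ for every magnification factor $j \in \mathrm{poly}(n)$. The key observation is that every red arrow in \cref{fig:reduction_map} only ever invokes its PF (or dPF) oracle at a polynomially-bounded magnification of $\mathcal{M}$ — this is exactly what \cref{algo1}, \cref{algo2}, and the algorithm of \cref{dPFtoPF} do — so a \magAdapt PF algorithm can be substituted for each such oracle call without breaking the polynomial time bound.

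The steps, in order, would be: (1) From the \polyM hypothesis, PF itself is in FP (take $j=1$). (2) Apply \cref{PFtoEL}: the reduction from \ssEL to PF uses only calls of the form $\mathrm{PF}(s,T,\mathcal{M}_j)$ with $1\le j\le N=\mathrm{poly}(n)$, each of which the \magAdapt algorithm answers in polynomial time; hence \ssEL $\in$ FP. (3) Apply \cref{th:easy}: MFE reduces to \ssEL (and also dMFE reduces to MFE), so MFE and dMFE are in FP and P respectively. (4) For dPF, note that $\mathrm{PF}\in\mathrm{FP}$ together with \cref{th:easy} (dPF reduces to PF by one comparison) gives dPF $\in$ P directly; alternatively one could go through \cref{dMFE_to-dPF}, but the direct route is cleaner. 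This closes the cycle, placing all of MFE, PF, dMFE, dPF, and \ssEL in P (or FP for the function/counting problems).

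One subtlety worth making explicit: some of the grey reductions in \cref{th:easy} (namely MFE$\to$dMFE, MFE$\to$\ssEL, PF$\to$\ssEL) presuppose the existence of a polynomial-size, polynomial-time-computable set of candidate energy levels $\mathcal{G}_s^\mathcal{M}$, per \cref{rem:poly}. For the three concrete models this holds, but for a fully abstract \polyM model one should either fold this into the statement or observe that it is needed; since the corollary is about the models under study (all of which satisfy \cref{rem:poly}), I would simply cite \cref{rem:poly} and \cref{App:candlev} at the point where the candidate-energy-level assumption is invoked.

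\textbf{Main obstacle.} The only real thing to get right is the bookkeeping that guarantees every oracle call produced by the red-arrow reductions has \emph{both} a magnification factor in $\mathrm{poly}(n)$ \emph{and} an input of polynomial size — for instance in \cref{dMFE_to-dPF} the magnification factor is $\log(n!)/(\beta\delta)$, which is polynomially bounded only because $1/\delta$ has logarithmic bit-size (\cref{rem:poly}), and the threshold $k'=(n!)^{-x/\delta}$ must also be checked to have polynomial bit-size. These checks are already carried out inside the proofs of \cref{PFtoEL}, \cref{dMFE_to-dPF}, and \cref{dPFtoPF}, so for the corollary it suffices to remark that substituting the \magAdapt PF algorithm for each oracle preserves the polynomial running time; no new estimate is needed. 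I expect the proof to be short — essentially a paragraph that walks the arrows of \cref{fig:reduction_map} starting from PF.
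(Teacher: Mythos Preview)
Your proposal is correct and matches the paper's intent: the paper states \cref{cor:polyM puts problem in P} without an explicit proof, treating it as an immediate consequence of the reduction map together with \cref{def:mag}, and your plan simply spells out that walk through the arrows (PF $\to$ \ssEL via \cref{PFtoEL}, then \ssEL $\to$ MFE $\to$ dMFE and PF $\to$ dPF via \cref{th:easy}). One minor point: your ``main obstacle'' paragraph discusses the magnification bounds inside \cref{dMFE_to-dPF} and \cref{dPFtoPF}, but as you yourself note, neither of those red arrows is actually needed once PF is the seed --- the only red arrow you invoke is \cref{PFtoEL}, whose magnification factors are the integers $1\le j\le N$ with $N=|\mathcal{G}_s^\mathcal{M}|=\mathrm{poly}(n)$, so the bookkeeping is even simpler than you suggest.
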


In the NN model with multiple strands and unpseudoknotted secondary structures,  
Dirks et al.~\cite{dirks2007thermodynamic} gave a polynomial time algorithm for PF, 
and Shalaby and Woods~\cite{ShalabyWoods} gave one for MFE. 
One might ask if Dirks et al.~\cite{dirks2007thermodynamic}, plus the  reductions in \cref{fig:reduction_map} are sufficient to yield a polynomial time algorithm for MFE, but this is not known since we don't know whether the Dirks et al. PF algorithm is magnification adaptable. 
The issue is in how Dirks et al.\ handle rotational symmetry, not by pure dynamic programming, but by computing a naive PF that: 
(1) Completely ignores rotational symmetry, and 
(2) Overcounts \textit{indistinguishable} secondary structures (less symmetry, means more overcounting, see~\cite{dirks2007thermodynamic} for  details).
Then they use an algebraic argument to bring back rotational symmetry simultaneously with canceling out the overcounting effect, an argument that may not be preserved under our magnification trick. 
Hence we do not currently know whether the NN model in its full generality (including rotational symmetry) is \polyM.

\subsubsection{Temperature independent energy models}

If the energy model is not \polyM, but is temperature independent (\cref{note:temp}),  
we can achieve our magnification trick  
simply by instead reducing (i.e.~inverse magnification) the partition function temperature as follows (this assumes, as usual, that the partition function {\em is} temperature dependent):
\enlargethispage{2\baselineskip}
\begin{itemize}
\item Assume we have a polynomial time algorithm  $\mathrm{PF}(\cdot,\cdot,\mathcal{M})$, but want to compute $\text{PF}(\cdot,\cdot,\mathcal{M}')$, where $\Delta G^{\mathcal{M}'} = \alpha  \Delta G^\mathcal{M}$.
\item Call $\text{PF}(s,T',\mathcal{M})$, with $T'=T/\alpha$.
 \begin{align*}
	\text{PF}(s,T',\mathcal{M}) 
	&= \sum_{S\in\Omega} \mathrm{e}^{- \Delta G^\mathcal{M}(S)/k_\mathrm{B}T'} \\
	&= \sum_{S\in\Omega} \mathrm{e}^{- \alpha\Delta G^\mathcal{M}(S)/k_\mathrm{B}T} &\text{$\mathcal{M}$ is temperature independent.}\\ 
	&=\text{PF}(s,T,\mathcal{M}')
\end{align*}

\item Hence, $\mathrm{PF}(\cdot,\cdot,\mathcal{M})$ is \magAdapt. 
\end{itemize}

 \begin{note}[NP-hardness results]
As we mentioned before, the dMFE problem is  NP-hard in: (1) The BPS model when we allow pseudoknots~\cite{Lyngso2004}, and (2) The \UBP model when we have unbounded number of strands~\cite{condon2021predicting} without pseudoknots. Then, \ssEL, PF, and dPF are NP-hard as well in these energy models using our reduction map, \cref{fig:reduction_map}. However, we provide \textbf{sharper} complexity results in the following sections.
 \end{note}

\section{\#P-hardness of PF and \ssELtext  in  BPS model}\label{sec:P-hardness_BPS}

In this section, we prove  \#P-hardness for the problems $\mathrm{PF}(\cdot,\cdot,\mathrm{BPS})$ and $\mathrm{\ssEL}(\cdot,\cdot,\mathrm{BPS})$ (see \cref{def:BPS} for the BPS model).
Our strategy is to construct a reduction chain from \#3DM to \#4-PARTITION, then from \#4-PARTITION to \#BPS, where \#BPS is the counting problem of the number of secondary structures having exactly $K$ stacks.

First, we state the definitions of the  three problems within the reduction chain, before showing that there exist weakly parsimonious reductions\footnote{A reduction is weakly parsimonious if there is a suitably-computable  relation between the number of solutions of the two problems; depending only on the initial problem instance~\cite{arora2009computational}.} within the chain.

\begin{definition2}[\#3DM]~
	\begin{description}
		\item[Input:] Three finite sets $X$, $Y$, and $Z$ of the same size, and a subset $T \subseteq X \times Y \times Z$.
		\item[Output:] The number of perfect 3-dimensional matchings $M$, where $M \subseteq T$ is a perfect 3-dimensional matching if any element in $X \cup Y \cup Z$ belongs to exactly one triple in $M$.
	\end{description}
\end{definition2}

\begin{definition2}[\#4-PARTITION]~
    \begin{description}
        \item[Input:] A set of \( k \) elements \( \mathcal{A}=\{a_1, a_2, \dots, a_k\} \), a weight function \( w: \mathcal{A} \to \mathbb{Z}^+ \), and a bound \( B \in \mathbb{Z}^+ \) such that the weight of each element \( w(a_i) \) is strictly between \( B/5 \) and \( B/3 \). 
        \item[Output:] The number of partitions of $\mathcal{A}$ into 4-tuples, such that the sum of the weights of the elements in each tuple is equal to \( B \). 
    \end{description} 
\end{definition2}

\begin{definition2}[\#BPS]~
	\begin{description}
		\item[Input:] A nucleic acid strand $s$ and a number $K \in \mathbb{N}$.
		\item[Output:] The number of secondary structures $S$ of $s$ that have $K$ base pair stackings.
	\end{description}
\end{definition2}

We prove that \#BPS is \#P-complete through two consecutive weakly parsimonious reductions:
from \#3DM, to \#4-PARTITION, and from \#4-PARTITION to \#BPS. 
\begin{restatable}{lemma}{partitionshp}
	\label{le:4PA_shP}
	$\text{\#}4\text{-PARTITION}$ is \#P-complete.
\end{restatable}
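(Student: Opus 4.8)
The plan is to establish \#P-completeness of \#4-PARTITION in two moves: first membership in \#P, then \#P-hardness via a parsimonious (or weakly parsimonious) reduction from \#3DM, which is a classical \#P-complete problem (the counting version of the NP-complete 3-dimensional matching). Membership is the easy direction: a witness for a ``yes'' instance is a partition of $\mathcal{A}$ into 4-tuples, which has size polynomial in the input, and checking that each tuple sums to $B$ is clearly polynomial-time; hence \#4-PARTITION counts exactly the accepting certificates of an obvious NP machine and lies in \#P.

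For hardness, I would adapt the standard Garey--Johnson style reduction from 3-dimensional matching to 4-PARTITION. Given an instance of \#3DM with $X,Y,Z$ of size $q$ and triples $T$ with $|T|=m$, one builds $4m$ elements: for each triple $t=(x_i,y_j,z_k)$ create four ``item'' elements whose weights encode, in a large base (say base $r = 32m$ or similar, with digit positions reserved for each element of $X\cup Y\cup Z$ and some padding digits), the incidence vector of $t$ together with carefully chosen padding so that exactly one of the four elements is a ``designated'' element and the other three are ``dummy'' fillers; additionally add filler elements so that the non-matching triples can be absorbed. The weights are chosen so that (i) every element lies strictly between $B/5$ and $B/3$ (this is why $4$-PARTITION rather than $3$-PARTITION is the convenient target — it gives more slack to satisfy the bound constraint), and (ii) a 4-tuple sums to $B$ if and only if it corresponds either to a triple selected into the matching (bundled with its own three companion elements) or to a canonical ``filler'' combination for the unselected triples. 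The key point is that the encoding is rigid: no ``cross'' combinations of elements from different triples can sum to $B$, because the base-$r$ digits would have to cancel without carries. This yields a bijection (parsimonious), or at worst a bijection up to a factor counting the interchangeable filler groupings (weakly parsimonious, with an easily computable multiplicative relation depending only on $m$ and $q$), between perfect 3-dimensional matchings of $T$ and valid 4-partitions of $\mathcal{A}$.

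The main obstacle — and the part deserving the most care — is verifying the \emph{rigidity} of the encoding: one must check that \emph{every} 4-subset summing to $B$ arises from the intended structure, ruling out spurious solutions that mix digits across triples or exploit carries in the base-$r$ arithmetic. This requires choosing the base $r$ large enough relative to the number of elements summed (at most $4$) so that no carries propagate between designated digit positions, and arguing a counting/pigeonhole fact about the padding digits forcing exactly one designated element per tuple. A secondary technical point is arranging the weights so the $B/5 < w(a_i) < B/3$ window is respected simultaneously for item and filler elements; this is handled by adding a uniform large offset to every weight and scaling $B$ accordingly, a routine but fiddly calculation. Once rigidity and the weight-window are in hand, counting solutions on both sides matches up (exactly, or up to the stated computable factor), and composing with the known \#P-completeness of \#3DM and closure of \#P under weakly parsimonious reductions gives the claim. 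I would also remark that the same reduction, read as a many-one reduction on instances, is the classical one showing 4-PARTITION is strongly NP-complete, so only the bookkeeping of solution counts is new here.
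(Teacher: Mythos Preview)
Your high-level plan --- membership via the obvious NP machine, hardness via a (weakly) parsimonious reduction from \#3DM, then invoke the known \#P-completeness of \#3DM --- is exactly the paper's strategy. The gap is in the concrete construction: what you describe is \emph{not} the Garey--Johnson reduction, and as sketched it is unclear that it works. Garey--Johnson do not create four elements per triple. They create \emph{one} element $u_{xyz}$ per triple $(x,y,z)\in T$, and for each $a\in X\cup Y\cup Z$ they create $N(a)$ copies $a[1],\dots,a[N(a)]$ (where $N(a)$ is the number of triples containing $a$), with $a[1]$ ``actual'' and the rest ``dummy''. The weights are chosen so that a 4-tuple sums to $B$ iff it has the form $(u_{xyz},x[i],y[j],z[k])$ with $i,j,k$ either all equal to $1$ or all $>1$. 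The matching constraint is enforced precisely because each $a$ has a unique actual copy $a[1]$: the actual 4-tuples then pick out a perfect matching. In your four-elements-per-triple scheme, if a triple's four companions always sit together, nothing distinguishes ``in the matching'' from ``not in the matching''; your appeal to ``canonical filler combinations for unselected triples'' is where the actual mechanism would have to live, and you have not specified it.

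The paper's proof is also considerably lighter than your plan anticipates. It takes the Garey--Johnson reduction as a black box (so no new rigidity or digit/carry argument is needed --- that is already in \cite{GareyJohnson}) and only supplies the solution-count bookkeeping: distinct matchings yield 4-partitions differing in at least one actual 4-tuple, while two 4-partitions for the same matching differ only in how the dummy copies are shuffled among the dummy 4-tuples. Counting those shuffles gives the explicit multiplicative factor $\alpha=\prod_{a\in X\cup Y\cup Z}(N(a)-1)!$, which is computable from the 3DM instance alone. That single paragraph, plus \#3DM being \#P-hard, is the whole hardness argument.
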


For the sake of completeness, we first include a quick overview of the 4-PARTITION NP-completeness from Garey and Johnson (1979) \cite{GareyJohnson}, as we heavily depend on it in our reduction. After that we give the proof of \cref{le:4PA_shP}.

\begin{proof}[Proof overview of the NP-completeness of 4-PARTITION]~\newline 
	Given an instance of 3DM with three disjoint sets $X$, $Y$, $Z$, with $|X|=|Y|=|Z|$, and a set \(T\subseteq X\times Y\times Z\), the reduction constructs an instance $(\mathcal{A},w,B)$ of 4-PARTITION as follows:
	\begin{itemize}
		\item For each element $x \in X$, we introduce $N(x)$ elements $(x[i])_{1\leq i \leq N(x)}$ in $\mathcal{A}$, where $N(x)$ is the number of times $x$ appears in $T$. These elements have two possible weights, depending on whether $i=1$ or not. The element $x[1]$ is called \textit{actual} while the other ones are called \textit{dummy}. Same happens to every $y \in Y$, and $z \in Z$.
		
		\item  For each triple $(x,y,z) \in T$, an element $u_{xyz}$, is introduced in $\mathcal{A}$, whose weight depends on $x$, $y$ and $z$.
		
		\item The bound $B$ and the weight function $w$ are constructed such that the following equivalence hold:
		A 4-tuples has weight equaling $B$ iff it is of the form $(u_{xyz},x[i],y[j],z[k])$, with $i,j,k$ being all equal to 1, or all greater than 1. In that case, we call the 4-tuple either \textit{dummy} or \textit{actual}. 
		
	\end{itemize}
	Given a 3D-matching $M$, we can construct the following 4-partition $P$. For each triple $(x,y,z) \in M$ , we add to $P$ the \textit{actual }4-tuple $(u_{xyz}, x[1], y[1],z[1])$. For each triple $(x,y,z) \in T \setminus M$, we add to $P$ the \textit{dummy }4-tuple $(u_{xyz}, x[i], y[j],z[k])$, with $i,j$,and $k$ all greater than 1. It is possible to construct these triples since there are enough \textit{dummy} elements. $P$ is a solution of 4-PARTITION since it is a partition (all elements are part of a 4-tuple) and since the weight of each 4-tuple equals $B$ by construction.
	
	Reciprocally, given a 4-partition $P$ of $\mathcal{A}$, we construct a 3D-matching $M$. All 4-tuples of $P$ have weight equaling $B$ and are of the expected form.
	We introduce in $M$ the triples $(x,y,z) \in T$ for the ones $u_{xyz}$ is part of an \textit{actual} 4-tuple in $P$. $M$ is a matching since there is exactly one \textit{actual} element $a[i]$ per element $a\in X\cup Y\cup Z$. Moreover, $M$ is perfect since every \textit{actual} element belongs to an \textit{actual} 4-tuple.
\end{proof}
%
\begin{proof} 
	Now, we prove that in the mentioned reduction from 3DM to 4-PARTITION \cite{garey1979computers}, the number of solutions (4-tuples) equals the number of perfect matchings in the initial instance of 3DM multiplied by a coefficient that depends only on that initial instance.
	
	Let $M$ and $M'$ two distinct perfect 3D-matchings.
	There exists $(x, y, z) \in M \setminus M'$. Therefore, $(u_{xyz}, x[1], y[1], z[1])$ belongs to a 4-partition that corresponds to $M$, while the 4-tuple $(u_{xyz},x[i],y[j],z[k])$ in a 4-partition corresponding to $M'$, must have $i$, $j$, and $k$ being greater than 1.
	Therefore, the two 4-partitions corresponding to $M$ and $M'$ are distinct.

	Reciprocally, let $P$ and $P'$ are two 4-partitions corresponds to the same 3D-matching.
	Each tuple of $P$ and $P'$ is of the form $(u_{xyz},x[i], y[j], z[k])$ where $(x,y,z) \in T$ and $i,j,k$ are equal to 1 or greater than 1.
	The Garey and Johnson's proof states that these two partitions have the same collections of \textit{actual} 4-tuples.
	Therefore, they differ in their \textit{dummy} 4-tuples (with $i,j,k$ greater than 1), which means that some \textit{dummy} elements happens in different order. 
	
	For each element $a \in X \cup Y\cup Z$, there are $N(a) - 1$ \textit{dummy} elements. Therefore, there are $ \alpha = \prod_{a \in X \cup Y \cup Z} (N(a) - 1)!$ different ways to arrange these elements in dummy collections.
	We can conclude that there are $\alpha$ distinct 4-partitions corresponding to the same 3D-matching.
	
	This implies that the number of solutions of 4-PARTITION equals the number of perfect matchings in the initial instance of 3DM multiplied by multiplied by $\alpha$, hence this reduction is weakly parsimonious. 
	
	It is straighforward that \#4-PARTITION belongs to \#P as 4-PARTITION belongs to NP.
	Since \#3DM is \#P-hard according to Bosboom et al. \cite{Bosboom2020}, then \#4-PARTITION is \#P-complete.
\end{proof}

\begin{theorem}\label{th:BPS_shP}
	\#BPS is \#P-complete.
\end{theorem}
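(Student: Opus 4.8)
The plan is to establish \#P-completeness of \#BPS by exhibiting a weakly parsimonious reduction from \#4-PARTITION, which is \#P-complete by \cref{le:4PA_shP}. Membership in \#P is straightforward: a secondary structure of a strand of length $n$ has size polynomial in $n$ by \cref{rk_check_ss}, counting stacks is a polynomial-time check, so \#BPS has polynomial-size certificates verifiable in polynomial time. The real work is the hardness direction.

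Given a \#4-PARTITION instance $(\mathcal{A},w,B)$ with $|\mathcal{A}|=k$ and $k=4m$, the plan is to build a single strand $s$ whose secondary structures that achieve some target number $K$ of stacks correspond, in a controlled many-to-one fashion, to partitions of $\mathcal{A}$ into 4-tuples of weight exactly $B$. The natural encoding is to use pseudoknot-free (or pseudoknot-allowed — since the BPS model counts all structures) helical regions whose lengths encode the weights $w(a_i)$. I would design $s$ out of ``element blocks'' (one per $a_i$, with a subword of length tied to $w(a_i)$) and ``slot blocks'' (one per target 4-tuple, with $m$ of them), over an alphabet chosen so that a stack can only form between an element block and a slot block when their bases are complementary, and so the only way to realize the maximum possible stack count $K$ is to pair up the blocks so that each slot receives exactly four element blocks whose lengths sum to the right value — the arithmetic of stacking forcing $w(a_{i_1})+w(a_{i_2})+w(a_{i_3})+w(a_{i_4}) = B$. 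The bound condition $B/5 < w(a_i) < B/3$ from the definition of \#4-PARTITION is exactly what guarantees every valid $B$-summing tuple has size exactly four, which is what makes the encoding clean.

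The key steps, in order, would be: (1) fix the construction of $s$ and the target $K$ explicitly from $(\mathcal{A},w,B)$, and argue $|s|$ and $K$ are polynomial in the input size (weights are positive integers, so lengths are polynomial); (2) prove a structural lemma: any secondary structure of $s$ has at most $K$ stacks, with equality forcing a ``canonical'' form in which helices are fully formed and blocks are matched element-to-slot with each slot saturated by four elements; (3) from the canonical form, read off a partition of $\mathcal{A}$ into 4-tuples and check, using the weight/length correspondence and the $B/5$--$B/3$ window, that each tuple sums to $B$; (4) conversely, show every valid 4-partition yields at least one canonical secondary structure with $K$ stacks; (5) count the multiplicity — the number of distinct $K$-stack secondary structures mapping to a fixed 4-partition should be a fixed constant $\gamma$ depending only on the instance (arising from symmetries: orderings of the four elements within a slot, orderings of the slots, and any internal freedom left in each helix), so that $\#\mathrm{BPS}(s,K) = \gamma \cdot \#4\text{-PARTITION}(\mathcal{A},w,B)$, a weakly parsimonious relation. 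Combined with \cref{le:4PA_shP} and \#P-membership, this gives \#P-completeness.

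The main obstacle I anticipate is step (2) together with controlling the multiplicity in step (5): in the BPS model \emph{all} secondary structures count, including highly pseudoknotted ones and ``partial'' pairings, so I must rule out any clever alternative structure that reaches $K$ stacks without respecting the intended block-matching — e.g.\ a structure that splits its pairings across blocks or uses internal self-pairing within a block to manufacture extra stacks. The usual fix is a careful choice of alphabet and block design (sufficiently long, distinctly-coded spacers/markers between blocks, and sequences with no internal complementarity) so that stacks are forced to occur only at the designed interfaces, and then a counting argument that the residual freedom is exactly a closed-form constant $\gamma$; getting $\gamma$ to be genuinely independent of which partition is realized (so the reduction is weakly parsimonious rather than merely counting-preserving up to a structure-dependent factor) is the delicate part, and is presumably where the bulk of the author's proof will go.
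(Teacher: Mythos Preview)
Your plan is correct and matches the paper's approach almost exactly: reduce from \#4-PARTITION (via \cref{le:4PA_shP}), build element blocks and $k/4$ slot blocks, set a target stack count $K$, argue that $K$ stacks forces the canonical matching, and compute a fixed multiplicity $\gamma$ to get weak parsimony. The paper's construction (following Lyngs\o) resolves your main anticipated obstacle far more simply than you expect: the strand is just $C^{w(a_1)}A C^{w(a_2)}A\cdots A C^{w(a_k)}AAA\,G^B A G^B A\cdots A G^B$ with $K=\sum_i w(a_i)-k$, so the $A$ separators cannot pair with anything (no $T$'s), forcing all stacks to be $C$--$G$, and the multiplicity comes out cleanly as $\gamma=(k/4)!\,(4!)^{k/4}$.
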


\begin{proof}    
    In 2004, Lyngs\o~\cite{Lyngso2004} proved that the decision problem BPS is NP-complete, with a reduction from BIN-PACKING.
	We adapt his reduction from 4-PARTITION to BPS.
	Let $(\mathcal{A} =\{a_1, \cdots, a_k\},w, B)$ be an instance of 4-PARTITION.
    Let us first note that in any instance of 4-PARTITION, $k$ is a multiple of 4, and $B k/4 = \sum_{i=1}^kw(a_i)$, otherwise, it is straightforward to see that this instance has no solution.
    We proceed exactly as Lyngs\o, constructing the following DNA strand $s$: 
	\[
	s = C^{w(a_1)}AC^{w(a_2)}A\cdots AC^{w(a_k)}AAA\underbrace{G^BAG^BA\cdots AG^B}_{k/4\text{ substrings of } G's}
	\]
	and setting the target $K = \sum_{i=1}^k w(a_i) - k$.
	
	As A bases can only form base pairs with T bases, all base pairs in a secondary structure of s will be C-G base
    pairs.
    If a substring $C^{w(a_i)}$ binds with at least two disctinct substrings $G^B$, then it accounts for at most $w(a_i) - 2$ in the BPS score. Since the other substrings $C^{w(a_j)}$ account for at most $w(a_j) - 1$, then $\mathrm{BPS}(S) \leq K-1$.
    Hence, we can find a structure S with $\mathrm{BPS}(S) = K$ iff we can partition the $k$ substrings $C^{w(a_i)}$ into $k/4$ groups that can each be fully base paired using one substring $G^B$; i.e. the total length of the substrings of C’s in any group can be at most $B$. 
    
    It means that $\mathrm{BPS}(S) = K$ iff we can partition the $k$ elements $a_i$ into $k/4$ groups that each has total weight $\leq B$. Since $B = \frac{4}{k}\sum_{i=1}^kw(a_i)$ and since the weight of each element in $\mathcal{A}$ is strictly between $B/5$ and $B/3$, each group has a total weight of exactly $B$ and contains exactly 4 elements.
    Therefore, $\mathrm{BPS}(S) = K$ iff 4-PARTITION problem has a solution.

	Consider two distinct 4-partitions $P_1$ and $P_2$. Let $(a_i, a_j, a_l, a_m)$ belong to $P_1$ but not to $P_2$. The secondary structure corresponding to $P_1$ has the group  ($C^{w(a_i)}$, $C^{w(a_j)}$, $C^{w(a_l)}$, $C^{w(a_m)}$) fully bounded with a substring $G^B$. The secondary structure corresponding to $P_2$ does not have these bindings, as $(a_i, a_j, a_l, a_m)$ does not belong to $P_2$. Therefore, all distinct 4-partitions map to distinct secondary structures.
	
	Conversely, let $S_1$ and $S_2$ two secondary structures of $s$, having $K$ base pair stackings and corresponding to the same 4-partition. Consequently, the groups of 4 elements bounded with the substrings $G^B$ are the same in $S_1$ and $S_2$, up to permutations.
    Specifically, it means that $(k/4)!~(4!)^{k/4}$ secondary structures of $s$ having $K$ BPS map to the same 4-PARTITION solution. The coefficient $(k/4)!$ accounts for the permutation over the substrings $G^B$, while the coefficient $4!$ accounts for the permutation between the substrings $C^{w(a_i)}$ within a same group (i.e.\ paired to a same substring $G^B$).
	
	Therefore, the number of solutions is multiplied by $(k/4)!~(4!)^{k/4}$ and this reduction is weakly parsimonious. 
    Moreover, it is straightforward to see that \#BPS is in \#P, as BPS is in NP.
    Since \#4-PARTITION is \#P-complete by \cref{le:4PA_shP}, \#BPS is \#P-complete also.
\end{proof}

\begin{theorem} \label{th_PFhardness}
	$\mathrm{\ssEL}(\cdot,\cdot,\mathrm{BPS})$ is \#P-complete and $\mathrm{PF}(\cdot,\cdot,\mathrm{BPS})$ is \#P-hard.
\end{theorem}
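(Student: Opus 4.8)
The plan is to lift the \#P-completeness of \#BPS established in \cref{th:BPS_shP} to the two problems of interest, using the reduction map of \cref{fig:reduction_map} for the PF part.

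\textbf{\#P-completeness of $\mathrm{\ssEL}(\cdot,\cdot,\mathrm{BPS})$.} First I would give a direct parsimonious reduction from \#BPS. Since the BPS model is temperature independent with $\Delta G^{\mathrm{BPS}}(S) = -\mathrm{BPS}(S)$ (\cref{def:BPS,note:temp}), an instance $(s,K)$ of \#BPS maps to the \ssEL instance $(s,T,\mathrm{BPS},-K)$ for any fixed $T>0$: a secondary structure $S$ satisfies $\Delta G^{\mathrm{BPS}}(S) = -K$ exactly when $\mathrm{BPS}(S)=K$, so the two counts are literally equal. Membership $\mathrm{\ssEL}(\cdot,\cdot,\mathrm{BPS}) \in \text{\#P}$ is then routine: a certificate is a secondary structure $S$ of $s$, which by \cref{rk_check_ss} has polynomial size and is polynomial-time verifiable, and one can compute $\mathrm{BPS}(S)$ and check $\Delta G^{\mathrm{BPS}}(S)=k$ in polynomial time, answering $0$ whenever $k$ is not a non-positive integer. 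Hence $\mathrm{\ssEL}(\cdot,\cdot,\mathrm{BPS})$ is \#P-complete.

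\textbf{\#P-hardness of $\mathrm{PF}(\cdot,\cdot,\mathrm{BPS})$.} Here I would invoke \cref{PFtoEL}, which gives a polynomial-time Turing reduction from \ssEL to PF. The point needing care is that the magnification step of that reduction queries PF under a magnified model $\mathcal{M}_j$ with $\Delta G^{\mathcal{M}_j}=j\cdot\Delta G^{\mathrm{BPS}}$, so I must argue these queries can be answered by an oracle for PF in the \emph{plain} BPS model. This is exactly where temperature independence enters, following the discussion in \cref{subsec:red_discussion}: $\mathrm{PF}(s,T,\mathcal{M}_j) = \mathrm{PF}(s,T/j,\mathrm{BPS})$, so each magnified call is realized by one plain-BPS call at rescaled temperature. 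One also checks that the size and precision bookkeeping of \cref{PFtoEL} goes through, which is immediate since for BPS a polynomial-size, polynomial-time computable set of candidate energy levels is $\{0,-1,\dots,-\lfloor n/2\rfloor\}$. This yields $\mathrm{\ssEL}(\cdot,\cdot,\mathrm{BPS}) \le_T \mathrm{PF}(\cdot,\cdot,\mathrm{BPS})$; composing with the reduction above gives a Turing reduction from \#BPS to $\mathrm{PF}(\cdot,\cdot,\mathrm{BPS})$, so by \cref{th:BPS_shP} the problem $\mathrm{PF}(\cdot,\cdot,\mathrm{BPS})$ is \#P-hard.

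The argument is short given the earlier machinery; the only genuine obstacle is the verification in the last step that the magnified oracle calls of \cref{PFtoEL} remain inside the BPS model, which is handled by the temperature-rescaling observation rather than by any new combinatorics. No \#P upper bound is claimed for PF, as its output is an exponential sum of reals rather than a count.
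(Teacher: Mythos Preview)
Your proposal is correct and follows essentially the same approach as the paper: identify $\mathrm{\ssEL}(\cdot,\cdot,\mathrm{BPS})$ with \#BPS to get \#P-completeness, then invoke \cref{PFtoEL} for the \#P-hardness of PF. You are more explicit than the paper's proof about why the magnified oracle calls in \cref{PFtoEL} stay within the fixed-BPS problem, correctly using the temperature-rescaling observation from \cref{subsec:red_discussion}; the paper's own proof simply cites \cref{PFtoEL} without spelling this out, relying on that discussion section to cover the gap.
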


\begin{proof}
	 $\mathrm{\ssEL}(\cdot,\cdot,\mathrm{BPS})$ belongs to \#P and is equivalent to the problem \#BPS. Hence, $\mathrm{\ssEL}(\cdot,\cdot,\mathrm{BPS})$ is \#P-complete.	
	The reduction map (\cref{fig:reduction_map}),  specifically  \cref{PFtoEL},  implies that computing the partition function is \#P-hard in the BPS model.
\end{proof}

\begin{theorem}\label{th_no-exi}
	In the BPS model, there is no polynomial-time Turing reduction from \ssEL~to dMFE, unless $\#P \subseteq P^{NP}$.
\end{theorem}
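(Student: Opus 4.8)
The plan is to chain the \#P-completeness of $\ssEL$ in the BPS model together with the membership of dMFE in the BPS model in NP, and then to invoke the closure of $\text{P}^{\text{NP}}$ under polynomial-time Turing reductions. First I would record that $\mathrm{dMFE}(\cdot,\cdot,\mathrm{BPS}) \in \text{NP}$: a ``yes'' instance $(s,T,\mathrm{BPS},k)$ is certified by a single secondary structure $S$ with $\Delta G^{\mathrm{BPS}}(S) \le k$, and by \cref{rk_check_ss} the validity of $S$ and the value $\Delta G^{\mathrm{BPS}}(S) = -\mathrm{BPS}(S)$ are both computable in polynomial time; so a nondeterministic machine can guess $S$ and verify. (Lyngs\o~\cite{Lyngso2004} in fact shows dMFE in BPS is NP-complete, but only membership in NP is needed here.)

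Next, suppose towards a contradiction that there is a polynomial-time Turing reduction $R$ from $\ssEL(\cdot,\cdot,\mathrm{BPS})$ to $\mathrm{dMFE}(\cdot,\cdot,\mathrm{BPS})$. Then $\ssEL(\cdot,\cdot,\mathrm{BPS})$ is computed by a polynomial-time algorithm making oracle calls to a problem in NP; i.e.\ $\ssEL(\cdot,\cdot,\mathrm{BPS}) \in \text{FP}^{\text{NP}}$, and in particular the bit-graph of its output function lies in $\text{P}^{\text{NP}}$. Now let $A$ be an arbitrary problem in $\text{\#P}$. By \cref{th_PFhardness}, $\ssEL(\cdot,\cdot,\mathrm{BPS})$ is \#P-complete, so there is a polynomial-time Turing reduction from $A$ to $\ssEL(\cdot,\cdot,\mathrm{BPS})$. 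Substituting a run of $R$ for each of the polynomially many $\ssEL$-oracle calls of this reduction yields a polynomial-time algorithm for $A$ that queries only an NP oracle, so $A \in \text{P}^{\text{NP}}$. Since $A$ was arbitrary, $\text{\#P} \subseteq \text{P}^{\text{NP}}$, which as noted around \cref{fig:reduction_map} collapses the polynomial hierarchy to its second level~\cite{arora2009computational}. This establishes the conditional claim in the statement.

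The step that I expect to need the most care is the bookkeeping in the composition: one must check that plugging $R$ (which itself issues polynomially many NP-oracle queries and runs in polynomial time) into each $\ssEL$-oracle call of the $A$-to-$\ssEL$ reduction keeps the total running time and the total number of NP-queries polynomial, and never queries anything harder than an NP problem. This is exactly the assertion that $\text{P}^{\text{NP}}$ is closed under polynomial-time Turing reductions, which I would either cite or make explicit by describing the two-level nesting. A second, minor point worth a sentence: because $\ssEL$ is a counting (function) problem, the phrases ``reduction to dMFE'' and ``membership in $\text{P}^{\text{NP}}$'' should be read in the functional sense $\text{FP}^{\text{NP}}$; this does not affect the collapse consequence since $\text{FP}^{\text{NP}}$ and $\text{P}^{\text{NP}}$ determine one another up to polynomial overhead, and $\text{\#P} \subseteq \text{FP}^{\text{NP}}$ already forces $\text{P}^{\text{\#P}} \subseteq \text{P}^{\text{NP}}$ and hence (by Toda's theorem) the stated collapse.
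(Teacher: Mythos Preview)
Your proposal is correct and follows essentially the same route as the paper's proof: invoke the NP-membership (indeed NP-completeness) of $\mathrm{dMFE}(\cdot,\cdot,\mathrm{BPS})$ from Lyngs\o, the \#P-completeness of $\ssEL(\cdot,\cdot,\mathrm{BPS})$ from \cref{th_PFhardness}, and conclude by composition that a poly-time Turing reduction from \ssEL to dMFE would place every \#P problem in $\text{P}^{\text{NP}}$. Your extra care in distinguishing $\text{FP}^{\text{NP}}$ from $\text{P}^{\text{NP}}$ and in spelling out the closure under Turing reductions is sound and more explicit than the paper's brief argument, but the underlying idea is identical.
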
 
\begin{proof}
	dMFE is NP-complete in the BPS model according to Lyngs\o~\cite{Lyngso2004}, and \ssEL~is \#P-complete according to \cref{th_PFhardness}.
    Suppose that there exists a polynomial-time Turing reduction from \ssEL~to dMFE (in the BPS model): $\text{\ssEL} \in \text{P}^{\text{dMFE}}$.
	
    As these problems are complete, $\text{\ssEL}\in \text{P}^{\text{dMFE}} \Rightarrow \#\text{P} \subseteq \text{P}^{\text{NP}}$.
\end{proof}

\begin{remark}
	If $\#\text{P} \subseteq \text{P}^{\text{NP}}$, then $\text{P}^{\text{\#P} }\subseteq \text{P}^{\text{P}^{\text{NP}}} = \text{P}^{\text{NP}} = \Delta_2^\text{P} \subseteq \Sigma_2^\text{P}$. According to Toda's theorem \cite{Toda}, $\text{PH}\subseteq \text{P}^\text{\#P}$. Therefore, $\text{PH} \subseteq \Sigma_2^\text{P}$: the Polynomial Hierarchy would collapse at level 2.
\end{remark}

\section{\#P-hardness of PF and \ssELtext for unpseudoknotted secondary structures of an unbounded set of strands in the \UBP model}\label{sec:P-hardness_Condon}

In this section, we prove \#P-hardness of the two restricted problems $\mathrm{PF}(\cdot,\cdot,\mathrm{\UBP})$ and $\mathrm{\ssEL}(\cdot,\cdot,\mathrm{\UBP})$, see \cref{def:UBP} (\UBP model), in the scenario of unpseudoknotted structures of an unbounded set of strands. Specifically, we show that the problem \#MULTI-PKF-SSP is \#P-complete. The main result of this section is proven in \cref{sec:main_5}, using  lemmas from \cref{sec:requ_lemma_5}.

\begin{definition2}[\#MULTI-PKF-SSP]~
	\begin{description}
		\item[Input:] $c$ nucleic acid strands and a positive integer $k$.
		\item[Output:] The number of unpseudoknotted secondary structures containing exactly $k$ base pairs formed by the $c$ strands.
	\end{description}
\end{definition2}

\subsection{Main results and proofs}
\label{sec:main_5}

\begin{theorem}
    \label{th:MULTI-PKF-SSP}
	\#{\rm MULTI-PKF-SSP} is \#P-complete.
\end{theorem}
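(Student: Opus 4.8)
The plan is to show \#MULTI-PKF-SSP is \#P-complete by (i) membership in \#P, and (ii) a weakly parsimonious reduction from \#4-PARTITION, which is \#P-complete by \cref{le:4PA_shP}. Membership is routine: a witness is a secondary structure with exactly $k$ base pairs, which by \cref{rk_check_ss} has size polynomial in $n$, and checking validity (each base in at most one pair) and unpseudoknottedness (existence of a non-crossing strand ordering) is polynomial-time; counting such witnesses is exactly \#MULTI-PKF-SSP, so it lies in \#P. The bulk of the work is the hardness reduction.

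For hardness, I would mirror the strategy of Condon, Hajiaghayi, and Thachuk~\cite{condon2021predicting} for the NP-hardness of MFE with unbounded strands in the \UBP model, but track the number of solutions. Given a \#4-PARTITION instance $(\mathcal{A}=\{a_1,\dots,a_k\}, w, B)$ with each $w(a_i)\in(B/5,B/3)$, I would build one ``element strand'' for each $a_i$ encoding its weight $w(a_i)$ via a block of some base (say a $\baseC$-run of length $w(a_i)$, flanked by ``spacer'' bases that cannot pair), together with $k/4$ identical ``bin strands'', each presenting a $\baseG$-run of length $B$ (again suitably flanked). Because in the \UBP model every base pair counts $-1$ and only complementary bases pair, the maximum possible number of base pairs is $\sum_i w(a_i) = Bk/4$, and this maximum is achieved by an unpseudoknotted structure \emph{iff} the element strands can be grouped into $k/4$ bins so that each bin's total $\baseC$-length equals $B$ — i.e.\ iff the 4-PARTITION instance has a solution (the weight constraint $w(a_i)\in(B/5,B/3)$ forces exactly four elements per bin, as in \cref{th:BPS_shP}). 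So I would set the target to $k_{\text{bp}} = \sum_i w(a_i)$ and count unpseudoknotted secondary structures with exactly that many base pairs. The main subtlety, as with the MFE reduction, is to design the flanking/spacer bases and (if needed) a ``backbone'' or connector structure so that pseudoknot-freedom does not secretly permit extra pairings or forbid the intended ones, and so that a bin strand pairs with only the element strands assigned to it — this is exactly where the analogue of Condon et al.'s geometric argument is needed, and I expect it to be the main obstacle.

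Finally I would establish the weakly parsimonious correspondence: each 4-PARTITION solution $P$ should correspond to a fixed number $\alpha$ of maximum-base-pair secondary structures, where $\alpha$ depends only on the instance (not on $P$). The overcounting factors arise from (1) the order in which the $k/4$ identical bin strands are assigned to the parts of $P$, contributing $(k/4)!$, and (2) for each bin, internal symmetries in how the element strands' $\baseC$-runs are laid out against the bin's $\baseG$-run of length $B$ — here one must check that, once a set of four element lengths summing to $B$ is fixed, the number of distinct valid full pairings of those four runs against the length-$B$ run is a quantity determined only by those four lengths (and hence, aggregated over all parts, by the multiset of weights). Provided the gadget is rigid enough that this count is uniform across solutions (as in \cref{th:BPS_shP}, where the factor was $(k/4)!\,(4!)^{k/4}$), the total number of \#MULTI-PKF-SSP witnesses at the target level equals $\alpha\cdot\#4\text{-PARTITION}(\mathcal{A},w,B)$ with $\alpha$ efficiently computable from the instance, completing the weakly parsimonious reduction and hence the \#P-completeness.
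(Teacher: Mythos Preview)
Your proof sketch has a genuine gap at the heart of the hardness argument. The construction you describe --- element strands $C^{w(a_i)}$ together with $k/4$ bin strands $G^B$, target $k_{\mathrm{bp}}=\sum_i w(a_i)$ --- works for the \emph{single-strand pseudoknotted BPS} setting of \cref{th:BPS_shP} because there a $C$-block that straddles two $G$-blocks loses a stack. In the \UBP model there is no such penalty: every $\baseC$--$\baseG$ pair contributes $-1$ regardless of contiguity. Worse, in the multi-stranded unpseudoknotted setting you may choose any circular ordering of the strands, so placing all element strands consecutively followed by all bin strands yields a planar polymer graph in which every $\baseC$ is paired to some $\baseG$. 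Hence the target $\sum_i w(a_i)$ is attained \emph{whether or not} a valid 4-partition exists, and the number of optimal structures bears no controlled relation to the number of 4-partitions. Your ``flanking/spacer bases'' cannot repair this with a local tweak: the obstacle is exactly why Condon, Hajiaghayi, and Thachuk's reduction is from 3DM (not a packing problem) and uses an elaborate template with trim, separator, and support strands to force rigidity under the weak unpseudoknotted-plus-\UBP constraints. You invoke their argument by name but then propose an entirely different, incompatible construction.

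The paper's proof does not design a new reduction at all: it takes the Condon--Hajiaghayi--Thachuk reduction from 3DM verbatim and proves, via structural lemmas about the optimal secondary structures of their instance $I'$ (in particular \cref{le:opt_form}, which pins down the exact form of every structure with the maximum number $P$ of base pairs), that the reduction is weakly parsimonious --- each 3DM solution corresponds to a fixed, instance-dependent number of optimal secondary structures. Combined with the \#P-hardness of \#3DM~\cite{Bosboom2020}, this gives the result. If you want to salvage your write-up, the honest route is to follow that template: start from their 3DM gadget, not a 4-PARTITION bin construction, and argue parsimony on top of their correctness analysis.
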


\begin{proof}
To prove \#P-completeness, we first consider the reduction provided by Condon, Hajiaghayi, and Thachuk 
for the NP-completeness of the associated decision problem~\cite{condon2021predicting}. 
Using the same notation, we show that this reduction is weakly parsimonious. 
Since this reduction contains many details, we do not recall all of them here.

We first note that the reduction in \cite{condon2021predicting} is from a variation of 3-Dimensional Matching, 3DM(3), in which each element appears in at most 3 triples.
Without loss of generality, this reduction can be extended from another version of 3DM, in which all triples are distinct.

Condon, Hajiaghayi, and Thachuk showed (in Lemma 12 of \cite{condon2021predicting}) that any solution of 3DM, can be transformed into a secondary structure of the new instance $I'$ containing $P$ base pairs, where $P$ is the maximum number of possible base pairs ($P=\min(A,T) + \min(G,C)$). It is easy to see that this transformation from 3DM to MULTI-PKF-SSP is injective.

For the reverse transformation, we show an intermediate result in \cref{le:opt_form}, ensuring the form of any secondary structure of the new instance $I'$ containing $P$ base pairs. This lemma is stated and proven in \cref{sec:requ_lemma_5}.
According to this lemma, the back transformation is clear: considering the perfect triples provides a solution for 3DM.
Let two secondary structures leading to the same 3DM solution. Consequently, they have the same perfect and center-trim-deprived triples.
    Let us make a little dichotomy here:
	\begin{itemize}
		\item If we consider the strands as indistinguishable, then the two secondary structures are the same and the reduction is parsimonious.
		\item Else if we consider the strands as distinguishable, then the two secondary structures differ as some of their trim-complement and separator support strands are permuted. Since there are $2n$ separator strands, $2n$ separator-complement strands and $2m+n$ trim-complement strands, there are $(2n)!^{2}\cdot(2m+n)!$ distinct secondary structures mapping to the same 3DM solution.
	\end{itemize}
	In both cases, there is a simple relation between the number of solutions of the two problems, and hence the reduction is weakly parsimonious.

    \#MULTI-PKF-SSP belongs to \#P, since MULTI-PKF-SSP is in NP.
	Since \#3DM is \#P-complete even in the case of disctinct triples \cite{Bosboom2020}, \#MULTI-PKF-SSP is also \#P-complete.
\end{proof}

\begin{theorem}
    \label{Cor:Pf_shHard_unb}
	$\mathrm{\ssEL}(\cdot,\cdot,\mathrm{\UBP})$ is \#P-complete and $\mathrm{PF}(\cdot,\cdot,\mathrm{\UBP})$ is \#P-hard, in the scenario of unpseudoknotted structures of an unbounded set of strands.
\end{theorem}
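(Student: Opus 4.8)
The plan is to mirror the argument for the BPS model (\cref{th_PFhardness}), replacing \#BPS by \#MULTI-PKF-SSP. The first step is to observe that, in this setting, $\ssEL(\cdot,\cdot,\UBP)$ is essentially the same function as \#MULTI-PKF-SSP. Indeed, by \cref{def:UBP} we have $\Delta G^{\UBP}(S) = -|S|$, so a secondary structure has free energy exactly $k$ if and only if it has exactly $-k$ base pairs; taking $\Omega_s$ to be the set of unpseudoknotted secondary structures of the $c$-strand system, $\ssEL(s,T,\UBP,k)$ equals the number of unpseudoknotted secondary structures of $s$ with exactly $-k$ base pairs, which is precisely the output of \#MULTI-PKF-SSP on input $(s,-k)$, and conversely. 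Hence $\ssEL(\cdot,\cdot,\UBP)$ inherits \#P-hardness directly from \cref{th:MULTI-PKF-SSP}.

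Next I would argue membership of $\ssEL(\cdot,\cdot,\UBP)$ in \#P. A secondary structure is a polynomial-size certificate (\cref{rk_check_ss}), and the predicates ``$S$ is a valid unpseudoknotted secondary structure of $s$'' and ``$S$ has exactly $-k$ base pairs'' are both decidable in polynomial time (this is exactly the statement that MULTI-PKF-SSP is in NP, used already in the proof of \cref{th:MULTI-PKF-SSP}). Therefore $\ssEL(\cdot,\cdot,\UBP)$ is in \#P, and combined with the previous paragraph it is \#P-complete.

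Finally, for the \#P-hardness of $\mathrm{PF}(\cdot,\cdot,\UBP)$ I would invoke the reduction map, specifically the polynomial-time Turing reduction from $\ssEL$ to PF of \cref{PFtoEL}. What remains is to check that the ingredients of that reduction are available in the unbounded-strand \UBP\ regime: a set of candidate energy levels is simply $\mathcal{G}_s^{\UBP} = \{0,-1,\dots,-\lfloor n/2\rfloor\}$, which has polynomially many entries and is trivially computable; and the magnified model $\mathcal{M}_j$ with $\Delta G^{\mathcal{M}_j}(S) = -j|S|$ is itself a legitimate weighted-base-pair energy model that can be passed to a PF oracle --- or, since the \UBP\ model is temperature independent (\cref{note:temp}), one may instead realise the magnification by the temperature trick of \cref{subsec:red_discussion}, calling $\mathrm{PF}(s,T/j,\UBP)$. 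Either way \cref{PFtoEL} applies verbatim, giving a polynomial-time Turing reduction from $\ssEL(\cdot,\cdot,\UBP)$ to $\mathrm{PF}(\cdot,\cdot,\UBP)$; since the former is \#P-hard, so is the latter.

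I do not expect a serious obstacle here: beyond the bookkeeping above, the only thing to be careful about is that \cref{PFtoEL}'s magnification-based reduction survives in the unbounded-strand setting, and this holds because the \UBP\ energies and candidate levels are small integers and the magnification factors stay $\mathrm{poly}(n)$, so no blow-up in precision or instance size occurs. Thus the theorem follows from \cref{th:MULTI-PKF-SSP} and \cref{PFtoEL} with essentially no new ideas.
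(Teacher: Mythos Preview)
Your proposal is correct and follows essentially the same approach as the paper: identify $\ssEL(\cdot,\cdot,\UBP)$ with \#MULTI-PKF-SSP to get \#P-completeness from \cref{th:MULTI-PKF-SSP}, then invoke \cref{PFtoEL} to transfer \#P-hardness to PF. Your extra care in verifying that the magnification-based reduction applies in this setting (polynomial candidate levels, temperature-independence) is useful elaboration but not a departure from the paper's argument.
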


\begin{proof}
It is clear that $\mathrm{\ssEL}(\cdot,\cdot,\mathrm{\UBP})$ belongs to \#P and is equivalent to the problem \#MULTI-PKF-SSP in this scenario. Hence, $\mathrm{\ssEL}(\cdot,\cdot,\mathrm{\UBP})$ is \#P-complete.
	
The reduction map (\cref{fig:reduction_map}) and specifically  \cref{PFtoEL} implies that computing the partition function is \#P-hard in this scenario as well.
\end{proof}

\subsection{Lemmas used in the proof of \cref{th:MULTI-PKF-SSP}}
\label{sec:requ_lemma_5}
\begin{lemma}
	Let $S$ be an optimal secondary structure of $I'$ with $P$ base pairs. 
	In $S$, if a trim-complement strand is bound with a center-trim, it cannot also be bound with another trim.
\end{lemma}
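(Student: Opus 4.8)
The plan is to argue by contradiction, playing the hypothesised "split" binding off against the optimality assumption $|S| = P$ via a base-pair counting argument. Suppose, for contradiction, that some trim-complement strand $\tau$ is bound in $S$ both to a center-trim $\gamma$ and to a second trim $\gamma'$. First I would recall from the construction of Condon, Hajiaghayi, and Thachuk~\cite{condon2021predicting} the relevant rigidity facts: each trim-complement strand is Watson--Crick complementary to a single contiguous region of a single trim, $P = \min(A,T) + \min(G,C)$ is the maximum possible number of base pairs, and in any secondary structure attaining $P$ there is no slack---every strand realises its maximum number of base pairs and every "matchable" base is matched. I would isolate these as the only properties of the reduction needed downstream, so the proof does not depend on the full (heavy) notation.

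Next I would translate the split configuration into a domain-level statement, using unpseudoknottedness. Since $S$ has no crossings, the set of indices of $\tau$ that pair into $\gamma$ and the set that pair into $\gamma'$ must appear as two non-crossing (hence nested or consecutive) blocks along $\tau$; likewise the bases of $\gamma$ and of $\gamma'$ touched by $\tau$ sit in nested position relative to the rest of $S$. I would then do a short case analysis on how the two end-regions of $\tau$ are distributed between $\gamma$ and $\gamma'$. Because $\tau$ is exactly complementary to a contiguous region of one trim, in every case the split forces a strictly positive number of bases---on $\tau$ itself, or on $\gamma$, or on $\gamma'$---to be left unpaired (the "middle" portion straddled by the split cannot find a partner without introducing a crossing), whereas binding $\tau$ wholly to one of the two trims would pair all of them and still leave the other trim free to be absorbed in the ordinary way of the reduction's bookkeeping.

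Finally I would close the count: the unmatched bases identified above show that the number of base pairs in $S$ is strictly less than what is achievable by the re-bound configuration, so $|S| < P$, contradicting optimality; hence no trim-complement strand bound to a center-trim can also be bound to another trim. The step I expect to be the main obstacle is the case analysis linking unpseudoknottedness to a \emph{guaranteed} net loss of base pairs: one must verify that the split cannot be silently compensated elsewhere in $S$, which means arguing that the freed partner trim really can be rebound without cost while the straddled region genuinely cannot---this hinges on getting the domain geometry (run lengths of the A/T/G/C regions, and the nesting order around the polymer graph) exactly right. Once that geometry is pinned down, the counting step itself is routine.
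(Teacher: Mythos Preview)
Your proposal has a genuine gap. The counting you propose---tallying unpaired bases on the trim-complement strand $\tau$ and on the two trims $\gamma,\gamma'$---does not produce a contradiction. In the Condon--Hajiaghayi--Thachuk construction the trim-complement strands are homogeneous runs (all C's), and every trim domain is the complementary run; so $\tau$ is complementary to \emph{every} trim, not to ``a single contiguous region of a single trim'' as you assume. Consequently $\tau$ can split its $E$ bases between $\gamma$ and $\gamma'$ and still have every one of its bases paired, while the leftover portions of $\gamma$ and $\gamma'$ can be absorbed by other trim-complement strands with no loss. Your exchange argument therefore never gets off the ground: the split is silently compensated on exactly the three objects you are watching.

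The paper's argument looks elsewhere. Because two \emph{adjacent} bases of $\tau$ pair into two distinct, non-adjacent trim domains on the template, unpseudoknottedness traps the entire stretch $u$ of the template lying strictly between those two domains: bases of $u$ can only pair with one another. Since one of the two trims is a center-trim, $u$ necessarily contains a full flank $x\,Sep\,y\,Sep\,z$. Optimality (all A's and T's paired) then forces $u$ to also contain the reverse complement $\overline{z}\,\overline{Sep}\,\overline{y}\,\overline{Sep}\,\overline{x}$, which is impossible because the triples are distinct and the template contains no such complementary flank. The contradiction thus comes from the \emph{uniqueness of the $xyz$ identifiers} inside the trapped region, not from any base-pair deficit on $\tau$, $\gamma$, or $\gamma'$. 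Your plan would need to be redirected to this trapped region and to invoke the distinctness of triples to close.
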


\begin{proof}
	By absurd, suppose that $S$ has a trim-complement strand which is bound with a center-trim but also with another trim.
	Since all C's are paired in $S$, two adjacent bases of this trim-complement strand are paired with bases belonging to distinct domains $Trim_1$ and $Trim_2$. Since one of these two domains is a center-trim, they are non-adjacent and are separated by the non-empty sequence $u$. Since there is no pseudoknot, bases of $u$ can only be paired with each other.
	Without loss of generality, we can consider that $Trim_2$ is a middle-trim.
	Therefore, $u$ ends with a flank $x~Sep~y~Sep~z$. 
	Since all A and T bases are paired in $S$ and since there is no pseudoknot, two adjacent bases of this flank must be paired with two adjacent bases of $u$.
	Therefore, $u$ must contain the complementary flank $\overline{z}~\overline{Sep}~\overline{y}~\overline{Sep}~\overline{x}$, which is absurd since all triples are different by hypothesis and since all xyz-support strands and their complement are distinct by construction.
\end{proof}

\begin{lemma}\label{le:center-trim_nb}
	Let $S$ be an optimal secondary structure of $I'$ with $P$ base pairs. 
	$S$ has exactly $n$ perfectly bound center-trim domains. 
\end{lemma}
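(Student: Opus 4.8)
The plan is to pin down the count $n$ by proving two matching inequalities, both of which lean on the defining property of an optimal structure of $I'$: since $S$ attains $P=\min(A,T)+\min(G,C)$ base pairs, every base of the minority species on each side of the A--T and C--G alphabets is paired in $S$.

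First I would prove the upper bound, that at most $n$ center-trim domains are perfectly bound in $S$. A perfectly bound center-trim is, by construction, base paired along its entire length, and the only strands that can pair with a trim are the trim-complement strands; by the preceding lemma, a trim-complement bound with a center-trim is bound with no other trim. Because a center-trim sits inside the $xyz$-support strand of a triple $(x,y,z)$, perfectly binding it while still pairing all minority bases forces the neighbouring flank $x~Sep~y~Sep~z$ to be bound as well, which consumes the separator strands and, crucially, the element-support structures for $x$, $y$ and $z$. Since each of the $3n$ elements of $X\cup Y\cup Z$ has a unique support structure usable at most once, two perfectly bound center-trims cannot share an $X$-, $Y$- or $Z$-element; as every triple uses one element of each kind, at most $n$ triples, hence at most $n$ center-trims, can be perfectly bound at once.

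Then I would prove the lower bound, that at least $n$ center-trims are perfectly bound. This is a base-counting argument run in contrapositive form: tracing the construction, if a center-trim of a triple fails to be perfectly bound then some minority-type base (in that center-trim, in a trim-complement, or in a flanking separator) is left unpaired, so $S$ would have strictly fewer than $P$ base pairs. Since $S$ achieves exactly $P$, enough center-trims must be perfectly bound to cover all of $X\cup Y\cup Z$, and with $|X|=|Y|=|Z|=n$ this needs at least $n$ of them. Combining the two bounds gives exactly $n$, as claimed.

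I expect the lower bound to be the main obstacle. Making rigorous the statement ``a non-perfectly-bound center-trim costs at least one base pair out of the budget $P$'' requires a careful tally of domain lengths and base multiplicities in the Condon--Hajiaghayi--Thachuk construction, together with ruling out compensating pairings elsewhere in $S$ (for instance an untrimmed center, or a dangling flank, pairing in an unintended way); the no-pseudoknot constraint and the distinctness of all support strands are the tools I would use to close off those alternatives, exactly as in the preceding lemma.
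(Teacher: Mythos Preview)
Your two-inequality skeleton is correct, but you have the difficulty of the two directions inverted, and in each direction your proposed mechanism is not the one that actually carries the argument.

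\textbf{Lower bound.} In the paper this is the \emph{easy} direction, and it is a one-line pigeonhole on trim-complement strands versus trims, not a contrapositive base-audit of flanks and separators. There are $2m+n$ trim-complement strands, all of whose C-bases are paired in $S$; the $2m$ end-trims can absorb at most $2m$ of them, so at least $n$ trim-complements must pair into center-trim domains. The preceding lemma then says any center-trim that is touched by a trim-complement is in fact fully bound, so at least $n$ center-trims are perfectly bound. Your sketch (``if a center-trim is not perfectly bound then some minority base is unpaired'') is both harder to make precise and unnecessary.

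\textbf{Upper bound.} This is where the real work is, and your argument has a gap. You assert that perfectly binding a center-trim forces the adjacent flank to consume the element-support structures for $x,y,z$, and then appeal to uniqueness of those supports. But ``the flank must be bound, and bound specifically by the $x$-, $y$-, $z$-element supports'' is exactly the non-obvious statement; nothing you have written rules out a flank being paired in some other way while $S$ still attains $P$. The paper does not argue via element-uniqueness at all. Instead it assumes $n+i$ center-trims are bound (for some $i\ge 1$), defines an \emph{isolated flank} as one whose two neighbouring trims are both trim-complement-bound, and counts: the $n+i$ bound center-trims together with the forced binding of end-trims produce at least $2n+i$ isolated flanks. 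Since the total supply of support strands is $10n$ (Lemma~10 of Condon--Hajiaghayi--Thachuk), some isolated flank has fewer than $5$ support strands available to it; Lemma~17 of Condon--Hajiaghayi--Thachuk then says this is impossible when ACT-unpairedness is~$0$. Iterating this pigeonhole yields the contradiction. The key technical ingredient you are missing is this appeal to Lemma~17 and the $10n$-versus-$5(2n+i)$ count of support strands.
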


\begin{proof}
	Let us first show that $S$ has at least $n$ perfectly bound center-trim domains.
	Since there are $2m+n$ trim-complement strands and every C's are paired in $S$, at least $nE$ bases belonging to center-trim domains are paired. According to the previous lemma, if one base of a center-trim is paired, then the entire domain is paired. Therefore, there are at least $n$ perfectly bound center-trim domains.
	
	Let us now suppose that there are $n+i$ bound center-trim domains, with $1 \leq i \leq m-n$. They are perfectly bound according to the previous lemma.
	We call a flank isolated if its neighboring trim-domains are both bound with trim-complementary strands.
	Let us count the number of isolated flanks.
	There are $m-n-i$ center-trim deprived triples. All of them are bound to at most $2(m-n-i)E$ trim-complement bases. Therefore, there are at least $[2m+n-2(m-n-i) - (n+i)]\cdot E = (2n+i)E$ end-trim bases that participate in creating isolated flanks with one of the $n+i$ bound center-trims. Therefore, there are at least $(2n+i)$ isolated flanks.
	
	For each isolated flank, consider the set of strands to which it is bound. 
	This set is not bound to the rest of the template domain since there is no pseudoknot.
	According to Lemma 10 of Condon et al. \cite{condon2021predicting}, the total number of support strands is $10n$. 
	Since there are at least $2n+i$ isolated flanks, at least one of them is bound to less than 5 support strands. We can apply Lemma 17 of Condon et al. \cite{condon2021predicting} to this flank, since the ACT-unpairedness is 0 in $S$. Therefore, this flank must be bound to exactly 5 support strands.
	Recursively, the lemma assumptions still hold, and one can continue applying it until one has 0 available support strand but still $i$ isolated flanks respecting the assumptions, which is absurd.
	Therefore, we know that at most $n$ center-trim domains are perfectly bound.
	
	Finally, there are exactly $n$ perfectly bound center-trim domains. 
\end{proof}

\begin{lemma}\label{le:opt_form}
	Each optimal secondary structure of instance $I'$ having $P$ base pairs is of the following form: $n$ perfect triples, $m-n$ triples whose both end-trims are bound to trim-complement strands, and whose 5' and 3' flanks are paired together.
\end{lemma}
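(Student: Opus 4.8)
The plan is to bootstrap from the two preceding lemmas, turning them into a statement about \emph{every} triple by a global base-pair count. Fix an optimal secondary structure $S$ of $I'$ with $P$ base pairs. Since $P = \min(A,T) + \min(G,C)$ is the largest possible number of base pairs, $S$ saturates both sides of the count; in particular every C base is paired (necessarily to a G), the property already exploited in the two lemmas above. By \cref{le:center-trim_nb}, exactly $n$ of the $m$ center-trim domains are perfectly bound to trim-complement strands; call the $n$ triples owning these the \emph{perfect} triples and the remaining $m-n$ triples \emph{center-trim-deprived}. It then remains to prove two things: (i) in every center-trim-deprived triple, \emph{both} end-trim domains are perfectly bound to trim-complement strands; and (ii) in every such triple the $5'$ and $3'$ flanks are paired to one another. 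Granting (i) and (ii), $S$ has exactly the stated form.

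For (i) I would run a tight matching count. There are $2m+n$ trim-complement strands; \cref{le:center-trim_nb} shows $n$ of them are consumed by the $n$ perfect triples, leaving exactly $2(m-n)$, which is precisely the number of end-trim domains among the $m-n$ center-trim-deprived triples. Since every C of $S$ is paired and, by the first lemma of \cref{sec:requ_lemma_5}, no trim-complement strand is simultaneously bound to a center-trim and to another trim, each of these leftover trim-complement strands can absorb the C's of at most one trim domain. A deficiency argument of the kind already used in the proof of \cref{le:center-trim_nb} then forces the correspondence to be a perfect matching: each of the $2(m-n)$ end-trims is fully paired with its own trim-complement strand, with no slack left to partially bind an end-trim or to divert a trim-complement strand onto support strands.

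For (ii), once both end-trims of a center-trim-deprived triple are bound away to trim-complement strands, the only bases of that triple still needing partners are those of its $5'$ flank, its $3'$ flank, and the separator/support material strictly between them. Because $S$ is pseudoknot-free and all A and T bases on the minority side are paired, a maximal run of adjacent A/T bases in the $5'$ flank must pair with an adjacent run inside the intervening region, so that region must itself contain the complementary flank; since all triples, all xyz-support strands, and their complements are distinct by construction, the only location this complementary flank can occupy is the $3'$ flank of the same triple. This is exactly the style of contradiction used in the first lemma of \cref{sec:requ_lemma_5}. Hence the $5'$ and $3'$ flanks of each center-trim-deprived triple are paired together.

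The step I expect to be the crux is (i): ruling out configurations where trim-complement or support strands are ``wasted'' --- leaving some end-trim with unpaired C's --- so that the $2(m-n)$ end-trims are forced into a perfect matching with the $2(m-n)$ leftover trim-complement strands. This requires feeding the exact strand inventory of Condon et al.\ (their Lemmas 10 and 17, already invoked in \cref{le:center-trim_nb}) through the same counting machinery; once (i) is in hand, the flank-pairing step (ii) and the final assembly of the form are routine bookkeeping on top of the two preceding lemmas.
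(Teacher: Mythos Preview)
Your overall plan (bootstrap from the two preceding lemmas, then a global count) is the right shape, but the execution of (i) has a concrete arithmetic slip that propagates, and your argument for (ii) misses the role of the support strands.

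\textbf{On (i).} From \cref{le:center-trim_nb} you only know that $n$ \emph{center}-trims are bound, i.e.\ exactly $n$ trim-complement strands are accounted for; you have not yet shown the end-trims of those $n$ triples are bound, so you cannot subtract $3n$. Subtracting $n$ leaves $2m+n-n=2m$, not $2(m-n)$, and $2m$ is the number of end-trims across \emph{all} $m$ triples, not just the $m-n$ center-trim-deprived ones. The paper's argument is exactly this: since every C is paired and the only remaining C-targets are the $2m$ end-trims, the remaining $2m$ trim-complement strands are forced onto end-trims of all triples. Your ``$2(m-n)$ for $2(m-n)$'' matching never materialises.

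\textbf{On (ii).} Once all end-trims are bound, every one of the $2n$ flanks of the $n$ perfect triples is isolated. The paper then reuses the recursive application of Condon et al.'s Lemma~17 (exactly as in the proof of \cref{le:center-trim_nb}) to these $2n$ isolated flanks: this consumes all $10n$ support strands and forces each xyz-domain to pair with its complement. Only \emph{after} the support strands are exhausted can one conclude that the $5'$ and $3'$ flanks of each center-trim-deprived triple must pair with one another. Your direct argument for (ii) tries to force the $5'$ flank to find its complement inside the same triple, but without first disposing of the support strands there is nothing preventing a flank from pairing with an xyz-complement or separator-complement support strand instead. The ``distinctness'' contradiction you invoke from the first lemma of \cref{sec:requ_lemma_5} does not apply here, because the candidate partner is a free support strand, not a segment of the template.

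In short: fix the count in (i) to $2m$ end-trims versus $2m$ leftover trim-complements, and move the Lemma~17 machinery from (i) to (ii), applying it to the $2n$ isolated flanks of the perfect triples to eliminate the support strands before arguing about the remaining flanks.
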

\begin{proof}
	Let $S$ be an optimal secondary structure of $I'$ with $P$ base pairs.
	
	According to \cref{le:center-trim_nb}, $S$ has exactly $n$ fully connected center-trim domains. The remaining trim-complement strands must necessarily be fully bound with trim domains, as there is no alternative configuration for them.
	
	Following the same reasoning as in the previous proof, we recursively apply Lemma 17 from Condon et al. to the isolated flanks. Consequently, all support strands are bound. Since there are no unpaired A or T bases, each xyz-domain is paired with its complement.
	
	To conclude, we note that the 5' and 3' flanks of trim-deprived triples must be bound together. 
\end{proof}

\bibliography{bib}


\appendix

\section{Appendix: Candidate energy levels for three free energy models}\label{App:candlev}
The goal of this section is to show, for an instance of any of the three energy models, \UBP, BPS and NN, that there is a set of candidate energy levels of polynomial size and  polynomial time computable in the number of bases $n$.

\subsection{\UBP and BPS models}
Given a secondary structure, or list of base pairs, $S$ for a strand multiset $s$ with $n$ bases in total, in the \UBP model, the free energy is  $ \Delta G^\mathrm{\UBP}(S)= -|S|$ since $\Delta G^\mathrm{\UBP}$ simply counts and negates the number of base pairs. 
Since $|S| \leq \lfloor n/2 \rfloor$,  \cref{rk_check_ss}, 
the set  $\mathcal{G}_s^\text{\UBP} = \{i \in \mathbb{N}: i \leq \lfloor n/2 \rfloor\}$ is a valid set of candidate energy levels. 
The minimal gap, denoted $\delta$, between any two energy levels is 1.

In the BPS model, the free energy of a secondary structure is simply the negation of the number of stacks: 
 $\Delta G^\mathrm{BPS}(S)= -\mathrm{BPS}(S)$ where $\mathrm{BPS}(S)=|\{(i,j)\in S~|~(i+1,j-1)\in S\}|$.
Hence,
$\mathcal{G}_s^\text{BPS} = \{i \in \mathbb{N}: i \leq \lfloor n/2 \rfloor\}$
is a valid set of candidate energy levels   
and the minimal gap $\delta$ between any two energy levels is 1.

Hence, for both \UBP and BPS, there is a set of candidate levels of polynomial size and polynomial time computable, in $n$.

\subsection{NN model}
In the NN model, a secondary structure $S$ has temperature-dependent free energy (\cref{def:NN}):
$$\NNDG$$
where $\Delta G^{\mathrm{assoc}}$ is the  association~\cite{dirks2007thermodynamic} penalty applied to each of the $c-1$ strands added to the first strand to form a complex of $c$ strands,  $R$ is the maximum degree of rotational  symmetry~\cite{ShalabyWoods,dirks2007thermodynamic} of $S$, $k_\mathrm{B}$ is Boltzmann's constant, and $T$ is temperature in Kelvin.
The set of base pairs $S$ defines structural elements called \textbf{loops} namely: hairpin loops, interior loops, exterior loops, stacks, bulges, and multiloops, as shown in \cref{fig:polymer} and described in the literature~\cite{tinoco,santa,mathews1999expanded}.  
The free energy contribution $\Delta G(l) \in \mathbb{R}$ of each loop $l$ is defined by a temperature-dependent formula parameterised from experimental data~\cite{santa,mathews1999expanded} as described next.

\begin{itemize}
	\item \textbf{Interior loops}:   
    An interior loop~\cite{dirks2003partition, mathews1999expanded} defined by closing pairs $(i,j)$ and $(d,e)$ has sides of length $l_1 = d-i-1$ and $l_2 = j-e-1$, where $l_1 \neq 0$ and $l_2 \neq 0$. The interior loop size $m= l_1 + l_2$, and the loop asymmetry is $\sigma = |l_1-l_2|$. 
		 
    In the standard  NN model, special temperature-dependent energy expressions, with constants (i.e.~constant precision at fixed temperature), are used for the energy contributions of interior loops with either $l_1 \leq 3$ or $l_2 \leq 3$, where there is a constant number of interior loops.\footnote{Note that: stacks and bulges are excluded in this case, as we analyze them separately.}
	For all cases when both $l_1 \geq 4$ and $l_2 \geq 4$, the energy contribution is approximated using the following function~\cite{dirks2003partition, mathews1999expanded}:
	\begin{equation*}\label{eq:internal}
		\Delta G^\text{interior}_{i,d,e,j} = \gamma_1(m) + \gamma_2(\sigma) + \gamma_3(i,j,i+1,j-1) + \gamma_4(e,d,e+1,d-1)
	\end{equation*}
    
	Where $\gamma_1$ is a function of loop size $m$, 
	$\gamma_1$ is a function loop asymmetry $\sigma$, 
	and  $\gamma_3,\gamma_4$ are functions of the 
	identity of the closing base pairs and their neighbours bases.
	Functions $\gamma_2,\gamma_3,\gamma_4$ have constant ($\mathcal{O}(1)$) precision at fixed temperature~\cite{mathews1999expanded, dirks2003partition}. Typically, the function $\gamma_1$ is a logarithmic function in interior loop size $m$, using the  Jacobson-Stockmayer entropy extrapolation equation~\cite{mathews1999expanded, santa, jacobson1950intramolecular}. 	 
	But  from the Lindemann–Weierstrass theorem \cite{wiki:NT} in number theory, the logarithm to the base 2 is irrational for most natural numbers, which leads to infinite precision if we seek an approximation-free computation, which is naively impossible on finite-precision computers. 
	In this work we assume  \textbf{logarithmic precision} for interior loops, that is, $\log \text{poly}(n)$ number of bits. This assumption is  compatible 
	with software packages that implement versions of the NN model~\cite{NUPACK,viennaRNA,mfold}. 
	 
	\item \textbf{Hairpin loops}: 
    A hairpin loop is defined by a closing base pair $(i,j)$, has length  $m=j-i-1$. For $m \geq 5$, the free energy contribution is approximated using the following formula~\cite{santa}:
    \begin{equation*}\label{eq:hairpin}
    \Delta G^\text{hairpin}_{i,j} = \eta_1(m) + \eta_2(i,j,i+1,j-1)
    \end{equation*}

    Where $\eta_2$ is a function of paired bases and their neighbours and has  constant precision, and  $\eta_1$ has constant precision for $5 \leq m \leq 9$, and logarithmic in $m$ for $m > 9$~\cite{mathews1999expanded}. 

     For the special cases of $m \leq 4$:  
     Hairpin loops of length $3$ or $4$ have (temperature-dependent) free energy expressions with constant precision~\cite{santa}, partially 
     because at those lengths 
     certain sequences are particularly stable. Hairpin loops with $m \leq 2 $ are geometrically prohibited~\cite{santa}.  As with interior loops, we assume  \textbf{logarithmic precision}, $\log \text{poly}(n)$ number of bits for hairpin loop free energy.

\item \textbf{Multiloops}:
A multiloop is defined by three or more base closing pairs~\cite{dirks2007thermodynamic,fornace2020unified}, also called multibranched loops~\cite{mathews1999expanded,santa}.
Unlike other loops with single-stranded regions, multiloop free energy is defined (approximated) by a linear function~\cite{dirks2007thermodynamic,wuchty1999complete,mathews1999expanded}:\footnote{This linear form is more amenable to dynamic programming for the PF and MFE problems than the logarithmic formulae used for other loops (indeed it is an open problem to have fast algorithms for log-cost multiloops for the NN model in its full generality including rotational symmetry)~\cite{dirks2007thermodynamic,ShalabyWoods}.}
\begin{equation*} \label{eq:multi}
		\Delta G^ \textrm{multi}  = \Delta G_\textrm{init}^\textrm{multi} + b \Delta G_\textrm{bp}^\textrm{multi} + n\Delta G_\textrm{nt}^\textrm{multi}
\end{equation*}
where $\Delta G_\textrm{init}^\textrm{multi}$ is the penalty for the multiloop formation, $\Delta G_\textrm{bp}^\textrm{multi}$ is the penalty for each of its $b$ base pairs that border the interior of the multiloop, and $\Delta G_\textrm{nt}^\textrm{multi}$ is the penalty for each of the $n$ free bases inside the multiloop.
We note that $\Delta G^\textrm{multi}_\textrm{init}$, $\Delta G_\textrm{bp}^\textrm{multi}$, and $\Delta G_\textrm{nt}^\textrm{multi}$ are constants with constant precision, hence we also assume \textbf{logarithmic precision}, $\log \text{poly}(n)$ number of bits, to represent the energy contribution of multiloops. 
		
	\item \textbf{Stacks}: A stack consists fo two consecutive base pairs with no unpaired bases between them. Since there are only 4 DNA/RNA bases, there are only a constant number of 16 stacks. Hence, special energy expressions with {\bf constant precision}, and values parameterised from the literature~\cite{santa,mathews1999expanded}  are used for stacks.
	
    \item  \textbf{Bulges}: A bulge consists of two base pairs separated by unpaired bases on only one side of the loop. Special energy expressions with constant precision are used for bulges of size $m \leq 6 $. For bulges of size $m > 6$ nucleotides, the logarithmic approximation using Jacobson-Stockmayer entropy extrapolation equation~\cite{mathews1999expanded, santa, jacobson1950intramolecular} is used. Hence, we assume \textbf{logarithmic precision}, $\log \text{poly}(n)$ number of bits to represent the energy contribution of bulges.
    
    	\item \textbf{Exterior loops}: An exterior loop contains a nick (see \cref{fig:polymer}) between strands and any
    	number of closing base pairs. Exterior loops do not contribute to the free energy
    	because by definition~\cite{dirks2007thermodynamic}, $\Delta G^\text{exterior} = 0$.
\end{itemize}

\subsubsection{Simple algorithm to compute a set of candidate energy levels for an instance of NN}\label{sec:brute force EL}

Here, we show there is a simple polynomial time algorithm, in the number of bases $n$, to compute a polynomial-sized superset of the set of occupied energy levels. 
Later, in \cref{sec:more precise EL algo} we give a more efficient, but more complicated dynamic programming algorithm that gives a tighter superset of energy levels.

\begin{lemma}\label{lem:CandidateEnergyLevels}
Given a multiset of strands $s$ and temperature $T$, 
for each of the BPS, \UBP and NN models 
there is a set of candidate energy levels $\mathcal{G}^\mathcal{M}_s$ of polynomial size  and computable in polynomial time in the number of bases $n$.
More precisely, the size $|\mathcal{G}^\mathcal{M}_s|$ is linear for BPS and \UBP, and $O(n^{O(1)})$ for NN. 
\end{lemma}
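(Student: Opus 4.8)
The plan is to dispense with \UBP and BPS in a line each, using the discussion that immediately precedes the statement, and then do the real work for NN. For \UBP we have $\Delta G^\mathrm{\UBP}(S) = -|S|$ with $|S| \in \{0,\dots,\lfloor n/2\rfloor\}$ by \cref{rk_check_ss}, so $\mathcal{G}^\mathrm{\UBP}_s = \{-i : 0 \le i \le \lfloor n/2\rfloor\}$ is a valid candidate set; identically for BPS with $\mathrm{BPS}(S)$ in place of $|S|$. Each such set has size $\lfloor n/2\rfloor + 1$, is computable in $O(n)$ time, and has minimal inter-level gap $\delta = 1$.

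For NN, I would start from the decomposition in \cref{def:NN}, writing every $\Delta G^\mathrm{NN}(S)$ as $\big(\sum_{l\in\mathrm{loops}(S,s)}\Delta G(l)\big) + (c-1)\Delta G^\mathrm{assoc} + k_\mathrm{B}T\log R$, and bounding the set of possible values of each summand separately. The term $(c-1)\Delta G^\mathrm{assoc}$ is a single number determined by the instance. For the symmetry term, $R$ is the degree of rotational symmetry of $S$ and therefore divides $c$, so $R \in \{1,\dots,c\}$ and $k_\mathrm{B}T\log R$ takes at most $c \le n$ values, each computable directly from the input. The loop sum is the substantive case: a secondary structure has at most $\lfloor n/2\rfloor + 1 = O(n)$ loops, each of one of the six types, and the per-type analysis above gives $|\Delta G(l)| \le \mathrm{poly}(n)$ for every loop (bounded constants, at most one Jacobson--Stockmayer logarithmic term in a loop size $\le n$, and at most linear multiloop terms). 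Crucially, by the logarithmic-precision assumption of \cref{rem:poly}, each individual loop free energy is a rational expressible with $O(\log n)$ decimal digits, hence an integer multiple of a single granularity $\varepsilon = 1/\mathrm{poly}(n)$ that depends only on $n$, not on the particular loop. A sum of $O(n)$ integer multiples of $\varepsilon$ is again a multiple of $\varepsilon$, of absolute value at most some $L = \mathrm{poly}(n)$, so $\sum_l\Delta G(l)$ lies in the explicit set $\mathcal{L} = \{j\varepsilon : j \in \mathbb{Z},\ |j\varepsilon| \le L\}$, which has size $O(L/\varepsilon) = \mathrm{poly}(n)$ and is computable from $n$. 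I would then set $\mathcal{G}^\mathrm{NN}_s = \{\, g + (c-1)\Delta G^\mathrm{assoc} + k_\mathrm{B}T\log r \;:\; g \in \mathcal{L},\ 1 \le r \le c \,\}$; by construction this contains $\{\Delta G^\mathrm{NN}(S) : S \in \Omega_s\}$, it has size $|\mathcal{L}|\cdot c = O(n^{O(1)})$, and it is computable in time polynomial in $n$ since $\varepsilon$, $L$, $c$, $\Delta G^\mathrm{assoc}$, $k_\mathrm{B}$ and $T$ are all read off or computed from the instance.

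I expect the one real obstacle to be exactly this loop sum: the sumset of $O(n)$ values drawn from a polynomial-size alphabet is a priori exponentially large, so the argument must never enumerate per-loop options and instead lean on the two global facts --- a common granularity $\varepsilon = 1/\mathrm{poly}(n)$ and a total magnitude bound $\mathrm{poly}(n)$ --- to trap the whole sum inside one arithmetic progression. A minor wrinkle is that $\log R$ and the logarithmic loop terms are generally irrational; this is harmless for a \emph{superset}, since we keep only finitely many of them and never need to order them against rationals, which is exactly the posture of \cref{rem:poly}. Finally I would remark that the bound obtained this way is deliberately loose and that \cref{sec:more precise EL algo} sharpens it with a dynamic-programming computation, although the crude bound here already suffices.
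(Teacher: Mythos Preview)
Your proposal is correct and follows essentially the same approach as the paper: handle \UBP and BPS by the obvious linear set, and for NN combine a polynomial magnitude bound on the total free energy with the logarithmic-precision assumption to get a common granularity $\varepsilon = 1/\mathrm{poly}(n)$, hence $\mathrm{poly}(n)$ many levels. If anything you are slightly more careful than the paper's short proof, which folds the rotational-symmetry term into the same precision argument rather than separating it out as you do; your explicit handling of $k_\mathrm{B}T\log R$ via the finitely many divisors of $c$ is cleaner.
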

\begin{proof}
As noted above, $\mathcal{G}_s^\text{\UBP} = \mathcal{G}_s^\text{BPS} = \{i \in \mathbb{N}: i \leq \lfloor n/2 \rfloor\}$ proving the statement for \UBP and BPS. 

For the NN model, the secondary structure free energy is a sum of $O(n)$ loop free energies. 
A loop involving $k$ bases has free energy $O(k)$ since loops types have constant, logarithmic or, for multiloops, linear free energy.  
Hence the per-base contribution of a loop is merely $O(1)$, implying the MFE of the $n$-base system is $O(n)$. 

The per-loop free energy has at most logarithmic precision (can be written in a logarithmic number of bits as discussed for each loop type earlier in \cref{App:candlev}). 
The free energy of a secondary structure is a sum of $O(n)$ loop free energies, 
each of at most logarithmic precision hence the precision $p$ after the decimal place of the secondary structure free energy is logarithmic $p = O(\log n)$.

The smallest value representable with this precision is 
$2^{-p} = 2^{- O(\log n)} = 2^{ - O(1) \log_2 n} = n^{- O(1)}$. 
Since the MFE is $O(n)$ the number of levels $|\mathcal{G}^{\mathrm{NN}}_s|$  satisfies  $n^{- O(1)} \cdot |\mathcal{G}^{\mathrm{NN}}_s| = O(n)$, hence $|\mathcal{G}^{\mathrm{NN}}_s|=n^{O(1)}$. 
Furthermore, $\mathcal{G}^{\mathrm{NN}}_s$ is easily  computable in time  $n^{O(1)}$ by brute force. 
\end{proof}

\subsubsection{Algorithm to compute a somewhat precise set of candidate energy levels for an instance of NN}\label{sec:more precise EL algo}

Although the algorithm from \cref{lem:CandidateEnergyLevels} is simple to understand, it may be somewhat unsatisfactory as many of the  energy levels it enumerates are unlikely to have secondary structures.
In this section, we give another algorithm for computing a set of candidate energy levels for any multi-stranded system in the NN model.  
We give two variants of the algorithm. 

The first is precise in two ways, namely (a) it is sequence-dependent and (b) computes only {\em occupied} energy levels,
and imprecise in the sense that it ignores the rotational symmetry penalty term $+k_\text{B} T \log R$ (see \cref{def:NN}).
 More precisely, for any strand multiset $s$ with $n$ bases in total, ignoring the rotational symmetry penalty, our algorithm only computes the set of {\em  occupied} energy levels---i.e. levels that have at least one secondary stricture $S$ with that free energy. 
 More formally, our algorithm computes the following set: $\{ \Delta G^\text{NN}(S) - k_\mathrm{B} T \log R \mid S \in \Omega_s  \}$.
Our algorithm is a direct translation of the multi-stranded partition function algorithm from Dirks et al.~\cite{dirks2007thermodynamic,fornace2020unified}, which we rely on for correctness, by modifying the underlying algebra from $(+,\cdot)$ to $(\cup,+)$\footnote{Note that $+$ here denotes the sumset operation~\cite{wiki:setsum} defined as follows: for any two sets $A$ and $B$, $A+B = \{a+b: a \in A, b\in B\}$.} while keeping the recursive structure as it is.  
The techniques for handling rotational symmetry for PF~\cite{dirks2007thermodynamic} and MFE~\cite{ShalabyWoods} seem to not immediately carry over, hence we leave handling of rotational symmetry as an open question.

The second variant of the algorithm allows for handing a superset of occupied energy levels, including the rotational symmetry penalty. 
The trade-off with the first variant is that the second algorithm typically returns some extra, unoccupied, energy levels.

\begin{figure}[htbp]
	\centering
	\begin{minipage}[b]{1\textwidth}
		\centering
		\includegraphics[width=1\linewidth]{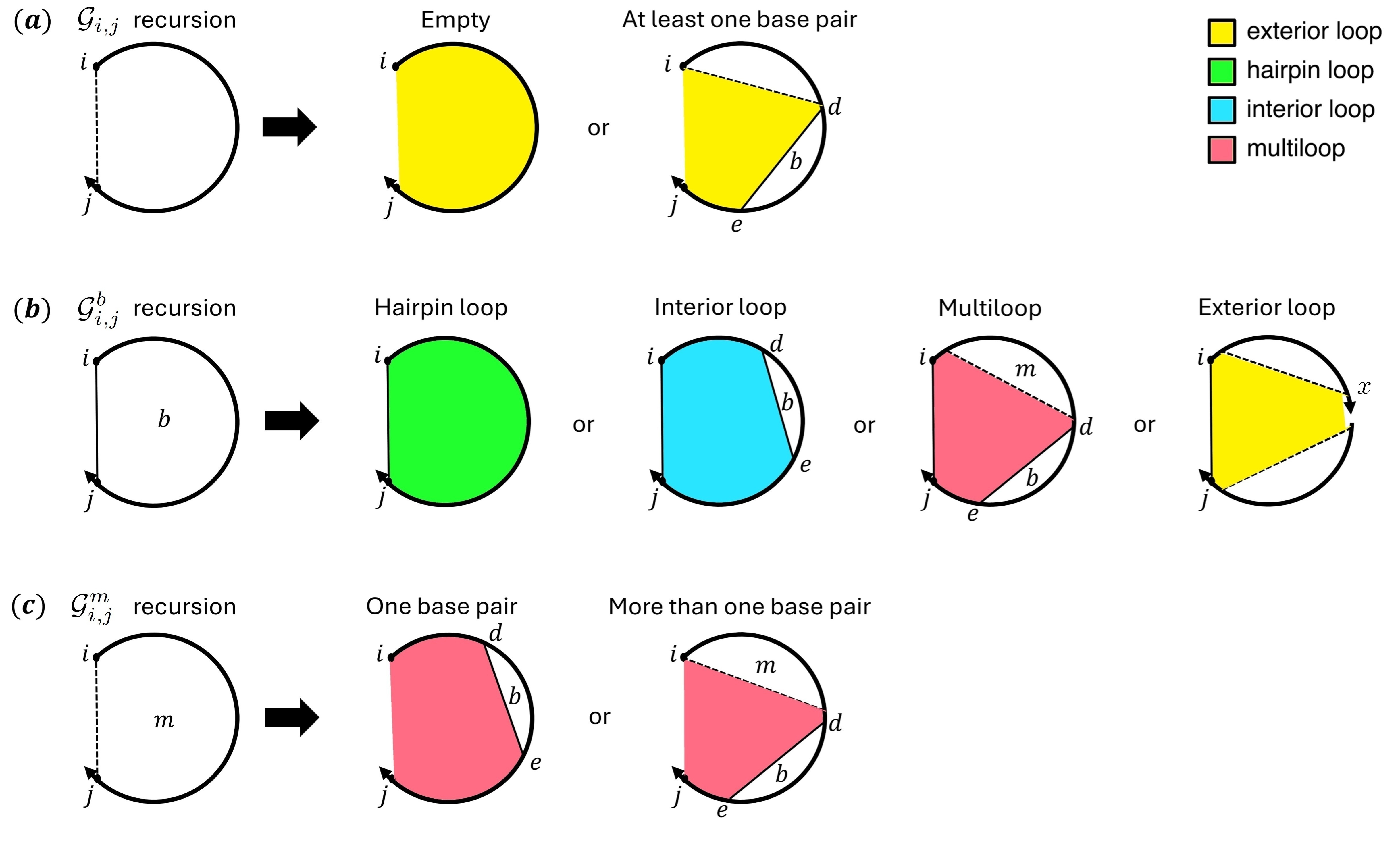} 
	\end{minipage}

	\begin{minipage}[b]{1\textwidth}
		\centering
		
		\begin{align}
			\mathcal{G}_{i,j} &= \{0\} \;\cup\; \left [ \bigcup_{i \le d < e \le j}
			\left( \mathcal{G}_{i,d-1} + \mathcal{G}^b_{d,e} \right) \right ] \label{eq:1}
			\\[6pt] 
			\mathcal{G}^b_{i,j} &= \{\Delta G^{\mathrm{hairpin}}_{i,j}\}
			\;\cup\; \left [ \bigcup_{i < d < e < j}
			\left( \mathcal{G}^b_{d,e} + \Delta G^{\mathrm{interior}}_{i,d,e,j} \right) \right ] \notag \\
			&\quad \cup\; \left [ \bigcup_{i < d < e < j}
			\Big[
			\left( \mathcal{G}^m_{i+1,d-1} + \mathcal{G}^b_{d,e} \right)
			+ \left( \Delta G^{\mathrm{multi}}_{\mathrm{init}} 
			+ 2\Delta G^{\mathrm{multi}}_{\mathrm{bp}} 
			+ \Delta G^{\mathrm{multi}}_{e+1,j-1} \right)
			\Big] \right ] \notag \\
			&\quad \cup\; \left [ \bigcup_{\substack{i \le x < j \\ x + \tfrac12 \text{ is a nick}}}
			\left( \mathcal{G}_{i+1,x} + \mathcal{G}_{x+1,j-1} \right) \right ] \label{eq:2}
			\\[6pt]
			\mathcal{G}^m_{i,j} &=
			\bigcup_{i \le d < e \le j}
			\Big[ \mathcal{G}^b_{d,e} 
			+ \left( \Delta G^{\mathrm{multi}}_{\mathrm{bp}} 
			+ \Delta G^{\mathrm{multi}}_{i,d-1} 
			+ \Delta G^{\mathrm{multi}}_{e+1,j} \right) \Big] \notag \\
			&\quad \cup\;
			\left [ \bigcup_{i \le d < e \le j}
			\Big[ 
			\left( \mathcal{G}^m_{i,d-1} + \mathcal{G}^b_{d,e} \right)
			+ \left( \Delta G^{\mathrm{multi}}_{\mathrm{bp}} 
			+ \Delta G^{\mathrm{multi}}_{e+1,j} \right)
			\Big] \right ] \label{eq:3}
		\end{align}

	\end{minipage}

\caption{Multi-stranded algorithm for computing the set of occupied energy levels, ignoring rotational symmetry, described by recursion diagrams (top) and recursion equations (bottom). Pseudo-code implementation details are provided in \cref{algo:3}.}
\label{fig:occ}
\end{figure}

\begin{lemma}\label{lem:CandidateEnergyLevelsNN}
	Given a multiset of strands $s$ and temperature $T$ for the NN model, 
	there is an $\mathcal{O}(n^4 |\mathcal{G}'|^2)$ algorithm to compute a set of candidate energy levels $\mathcal{G}^\text{NN}_s$, where $|\mathcal{G}'| = n^{O(1)}$ is defined in the proof.
	The computed set is precisely the  occupied energy levels (that have at least one structure) if we ignore rotational symmetry. 
	A second algorithm allows for rotational symmetry, but gives a (typically strict) superset of the occupied energy levels. 
\end{lemma}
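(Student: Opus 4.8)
The plan is to prove correctness of the recursion of \cref{fig:occ} by structural induction, bound its running time, and then obtain the second variant by a short post-processing step. First I would pin down the set $\mathcal{G}'$ of the statement. By the NN-model analysis in \cref{lem:CandidateEnergyLevels} together with the logarithmic-precision assumption of \cref{rem:poly}, for any sub-system on consecutive bases $i,\dots,j$ every secondary-structure free energy (ignoring the $+k_\mathrm{B}T\log R$ symmetry term) is a sum of $O(n)$ loop free energies, hence a rational of magnitude $O(n)$ and precision $O(\log n)$. Let $\mathcal{G}'$ be the fixed set of all such rationals; then $|\mathcal{G}'| = n^{O(1)}$, and every table $\mathcal{G}_{i,j}$, $\mathcal{G}^b_{i,j}$, $\mathcal{G}^m_{i,j}$ of \cref{fig:occ} is a subset of $\mathcal{G}'$.

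For correctness, observe that the recursions \cref{eq:1,eq:2,eq:3} are exactly the Dirks et al.~\cite{dirks2007thermodynamic,fornace2020unified} multi-stranded partition-function recursions with the semiring $(\mathbb{R},+,\cdot)$ replaced by $(\mathcal{P}(\mathbb{R}),\cup,+)$, where $+$ denotes the sumset~\cite{wiki:setsum}: a product of Boltzmann factors is replaced by the sumset of the corresponding energy contributions, and a sum over disjoint structural cases by a union. I would prove by induction on $j-i$, following the identical case decomposition used by Dirks et al., that $\mathcal{G}_{i,j}$ (resp.\ $\mathcal{G}^b_{i,j}$, $\mathcal{G}^m_{i,j}$) is precisely the set of NN free energies, without the rotational-symmetry term, of the secondary structures of the corresponding sub-system. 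Beyond invoking their decomposition, the one thing to check is that the semiring substitution is faithful: every recursion path realises a genuine secondary structure (so no spurious value is produced) and every structure is generated along some path (so no value is missed) --- both hold because these are exactly the properties that make the Dirks et al.\ recursion a correct partition function. The overcounting of a structure along several paths, which must be cancelled for PF, is harmless here since repeated unions of the same value change nothing, and the association penalty $(c-1)\Delta G^{\mathrm{assoc}}$ is incorporated exactly as in Dirks et al.

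For the running time, there is a constant number of table types, each indexed by a pair $(i,j)$, giving $O(n^2)$ cells; evaluating a cell requires a union over $O(n^2)$ choices of $(d,e)$ (or over $O(n)$ nick positions), each term being a sumset of two subsets of $\mathcal{G}'$, computable in $O(|\mathcal{G}'|^2)$ time by hashing, with accumulating the union no more expensive asymptotically. This yields $O(n^2\cdot n^2\cdot|\mathcal{G}'|^2)=O(n^4|\mathcal{G}'|^2)$ with $|\mathcal{G}'|=n^{O(1)}$. For the second variant, the only omitted term is $+k_\mathrm{B}T\log R$, and for a connected complex of $c$ strands the symmetry order $R$ divides $c$, so $R\in\{1,\dots,c\}$ and $P=\{k_\mathrm{B}T\log R : R \mid c\}$ has size at most $c\le n$. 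Output $\mathcal{G}^\text{NN}_s = G_0 + P$, where $G_0$ is the set produced by the first algorithm and $+$ is again the sumset. Every structure $S$ has $R\le c$, so its true energy $\Delta G^\text{NN}(S)$ lies in $G_0 + P$; hence the output is a superset of the occupied levels, typically strict because a given base-pairing energy need not be realised with every symmetry order. This post-processing costs $O(|\mathcal{G}'|\cdot n)$, leaving the bound unchanged.

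The main obstacle is the faithfulness argument: carefully matching \cref{eq:1,eq:2,eq:3} case-by-case against the Dirks et al.\ recursion and leaning on their intricate structural decomposition to conclude that the generated sets are exactly the occupied energy levels, while simultaneously confirming that magnitudes and precisions stay polynomial so that $|\mathcal{G}'|=n^{O(1)}$.
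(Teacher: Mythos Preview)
Your proposal is correct and follows essentially the same route as the paper: translate the Dirks et al.\ PF recursion by the semiring substitution $(+,\cdot)\to(\cup,+)$, argue correctness by appeal to their structural decomposition (noting that overcounting is harmless for unions), and derive the $\mathcal{O}(n^4|\mathcal{G}'|^2)$ bound from the $O(n^2)$ cells, $O(n^2)$ choices of $(d,e)$, and $O(|\mathcal{G}'|^2)$ sumset cost. Two minor deviations: the paper defines $\mathcal{G}'$ a posteriori as $\max_{i,j}|\mathcal{G}_{i,j}|$ rather than your a priori rational grid, and for the symmetry post-processing it uses the sharper divisibility $R\mid v(\pi)$ from~\cite{ShalabyWoods} (the maximal symmetry of the strand ordering $\pi$) instead of your $R\mid c$; both yield valid supersets of polynomial size, yours is just slightly coarser.
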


\begin{proof}[Proof sketch]
	\cref{fig:occ} shows the recursion diagrams of our proposed \cref{algo:3}, and the recursion
	equations. 
	By convention~\cite{dirks2007thermodynamic,fornace2020unified, ShalabyWoods, mccaskill1990equilibrium}, a solid straight line indicates a base pair, and a dashed line indicates a region without implying that the connected bases are paired or not. Colored regions correspond to loop free energies that are explicitly incorporated at the current
	level of recursion. We compute three different types of sets of occupied energy levels, ignoring rotational symmetry, namely $\mathcal{G}_{i,j}$, $\mathcal{G}^b_{i,j}$, and $\mathcal{G}^m_{i,j}$ as follows:

	\begin{enumerate}[label=(\alph*)]
		\item $\mathcal{G}_{i,j}$ represents the set of occupied energy levels for subsequence $[i, j]$. 
		There are two cases: 
		\begin{enumerate}[label=(\arabic*)]
			\item   Either there are no base pairs corresponding to an exterior loop with free energy $\Delta G^\text{exterior} = 0$, and hence a contribution of $\{0\}$ is unioned into its corresponding set of occupied energy levels $\mathcal{G}_{i,j}$, see \cref{eq:1}.
			
			\item Else there is a $3'$-most base pair $(d,e)$. In this case, computing the set of candidate energy levels makes use of previously computed subsequence set of occupied energy levels $\mathcal{G}_{i,d-1}^b$ and $\mathcal{G}_{d,e}^b$. Using the {\em sumset}~\cite{wiki:setsum},  $\mathcal{G}_{i,d-1} + \mathcal{G}_{d,e}^b = \{x+y: x \in \mathcal{G}_{i,d-1}, y \in \mathcal{G}_{d,e}^b \}$ to compute the contribution of that specific $3'$-most base pair $(d,e)$ to $\mathcal{G}_{i,j}$. The union over all possible choices for $3'$-most base pair $(d,e)$ yields $\mathcal{G}_{i,j}$ at the end, see \cref{eq:1}. Note that we make use of the independence of loop contributions in the NN energy model which in turn relies on the assumption of disallowing pseudoknots.
		\end{enumerate}
		
		\item  $\mathcal{G}_{i,j}^b$ represents the set of occupied energy levels for subsequence $[i, j]$ with the restriction that bases $i$ and $j$ are paired. 
		There are four cases: 
		
		\begin{enumerate}[label=(\arabic*)]
			\item There are no additional base pairs, which corresponds to a hairpin loop, and hence a contribution of $\{\Delta G^\text{hairpin}_{i,j}\}$ is unioned into its corresponding set of occupied energy levels $\mathcal{G}^b_{i,j}$, see \cref{eq:2}. 
			
			\item There is exactly
			one additional base pair $(d,e)$ that corresponds to an interior loop, and hence a contribution of $ \mathcal{G}_{d,e}^b + \Delta G^\text{interior}_{i,d,e,j} = \{x+\Delta G^\text{interior}_{i,d,e,j}: x \in \mathcal{G}_{d,e}^b\}$ to $\mathcal{G}^b_{i,j}$, see \cref{eq:2}.  
			
			\item There is more than one
			additional base pair, corresponding to a multiloop with $3'$-most base pair $(d,e)$, and at least one additional base pair specified in a previously computed subsequence set of occupied energy levels $\mathcal{G}^m_{i+1,d-1}$. Hence a contribution of $\Big[\big(\mathcal{G}^m_{i+1,d-1} +  \mathcal{G}^b_{d,e}\big) + \big(\Delta G^\mathrm{multi}_\mathrm{init} + 2 \Delta G^\mathrm{multi}_\mathrm{bp} + \Delta G^\mathrm{multi}_{e+1,j-1}\big) \Big]$ of that specific $3'$-most base pair $(d,e)$ to $\mathcal{G}^b_{i,j}$, which includes the multiloop contribution, see \cref{eq:2}. Note that $\Delta G^\mathrm{multi}_{i,j} = (j-1+1) \Delta G^\mathrm{multi}_\text{nt}$. 
			
			\item There is an exterior loop
			containing a nick after nucleotide $x$. Hence a contribution of $\mathcal{G}_{i+1,x} +  \mathcal{G}_{x+1,j-1}$ to $\mathcal{G}^b_{i,j}$ of that specific nick after $x$, see \cref{eq:2}.
		\end{enumerate}
		
		\item  $\mathcal{G}_{i,j}^m$ represents the set of occupied energy levels for subsequence $[i, j]$ with the restriction that  $[i, j]$  is inside a multiloop and contains at least one base pair. There are two cases: 
		
		\begin{enumerate}[label=(\arabic*)]
			
			\item There is exactly
			one additional base pair $(d,e)$ defining the multiloop, and hence a contribution of $\Big[ \mathcal{G}^b_{d,e} + \big(\Delta G^\mathrm{multi}_\mathrm{bp} + \Delta G^\mathrm{multi}_{i,d-1} + \Delta G^\mathrm{multi}_{e+1,j} \big) \Big]$ to $\mathcal{G}^m_{i,j}$, see \cref{eq:3}. 
			
			\item There is more than one
			additional base pair defining the multiloop with $3'$-most base pair $(d,e)$, hence a contribution of $\Big[	(\mathcal{G}^m_{i,d-1} + \mathcal{G}^b_{d,e}) + \big(\Delta G^\mathrm{multi}_\text{bp} + \Delta G^\mathrm{multi}_{e+1,j}\big) \Big]$ to $\mathcal{G}^m_{i,j}$, see \cref{eq:3}. 
			
		\end{enumerate}
	\end{enumerate}

\paragraph*{The second variant of our algorithm} \cref{algo:3} returns the set of occupied energy levels $ \mathcal{G}_{1,n} + (c-1) \Delta G^\text{assoc}$, which ignores rotational symmetry. So the last step is to include extra $\tau (v(\pi)) - 1$ levels\footnote{$\tau (v(\pi))$ is the number of divisors of the the highest degree of symmetry $v(\pi)$ for strand ordering $\pi$, see~\cite{ShalabyWoods} for more details. Note that, $\tau (v(\pi))$ is a constant since $c = \mathcal{O}(1)$ strands.} for each computed energy level to cover all rotational symmetry possibilities\footnote{Shalaby and Woods~\cite{ShalabyWoods} showed that, if $S$ is $R$-fold rotationally symmetric secondary structure, with a specific strand ordering $\pi$, then $R$ must be a divisor of $v(\pi)$.}. In other words for each energy level $l$ computed by \cref{algo:3}, we add a new energy level $l' = l + k_\text{B} T \log R$, for all $R>1$ and $R$ divides the highest degree of symmetry $v(\pi)$. This step of adding extra energy levels gives at the end  a superset of occupied energy levels $\mathcal{G}^{\mathrm{NN}}_s$, including the rotational symmetry penalty. Formally,  $\{ \Delta G^\text{NN}(S) \mid S \in \Omega_s  \} \subseteq \mathcal{G}_s^\text{NN}$. 
\newline

\paragraph*{Time complexity}   
The multi-stranded partition function algorithm from Dirks et al.~\cite{dirks2007thermodynamic,fornace2020unified} has $\mathcal{O}(n^4)$ time complexity. But modifying its underlying algebra from $(+,\cdot)$ to $(\cup,+)$ to get \cref{algo:3} changes the time complexity because at each recursive call, the sumset operation needs an extra $|\mathcal{G}'|^2$ steps in the worst case, where   $\mathcal{G}' = \max \{\mathcal{G}_{i,j} \text{ for any subsequence } [i,j]\}$. Hence, the time complexity of \cref{algo:3}  is  $\mathcal{O}(n^4 |\mathcal{G}'|^2)$. 

\end{proof}

\begin{algorithm}
	\caption{\small Pseudocode  that takes as input: $c=\mathcal{O}(1)$ strands with total number of bases (length) $n$ and strand ordering $\pi$. 
		Runs in  $\mathcal{O}(n^4 |\mathcal{G}'|^2)$ time with
		recursive calls illustrated in \cref{fig:occ}. Here $\mathcal{G}' = \max \{\mathcal{G}_{i,j} \text{ for any subsequence } [i,j]\}$. Nicks between strands are denoted by half indices (e.g.~$x+ \frac{1}{2}$). 
		The function $\eta[i+ \frac{1}{2}, j+\frac{1}{2}]$ returns the number of nicks in the interval $[i+ \frac{1}{2}, j+\frac{1}{2}]$. 
		The shorthand $\eta[i+ \frac{1}{2}]$ is equivalent to $\eta[i+ \frac{1}{2}, i+\frac{1}{2}]$ and by convention, $\eta[i+ \frac{1}{2}, i-\frac{1}{2}] =0$. For any two sets $X$ and $Y$, we define $X ~ \cup \! = Y $ to be equivalent to $X = X \cup Y$.}\label{algo:3}
	\begin{algorithmic}[1]
		\footnotesize
		\State Initialize $\mathcal{G}, \mathcal{G}^b, \mathcal{G}^m$ by setting all values to $\Phi$, except $\mathcal{G}_{i,i-1} = \{0\}$ for all $i=1,\ldots,n$. For any set $X$, we define $X \cup \Phi = \Phi$.
		
		\For{$l \gets 1 \ldots n$}
		\For{$i \gets 1 \ldots n-l+1$}
		\State $j = i+l-1$
		
		\Comment{$\mathcal{G}^b$ recursion equations} 
		\If{$\eta[i+\frac{1}{2}, j-\frac{1}{2}] ==0$}
	\State 	$\mathcal{G}_{i,j}^b =\{\Delta G_{i,j}^\text{hairpin}\}$
		\Comment{hairpin loop requires no nicks}

		\EndIf 
		
		\For{$d \gets i+1 \ldots j-2$}
		\Comment{loop over all possible $3'$-most pairs $(d,e)$} 
		
		\For{$e \gets d+1 \ldots j-1$} 
		
		\If{$\eta[i+\frac{1}{2}, d-\frac{1}{2}] ==0$ and $\eta[e+\frac{1}{2}, j-\frac{1}{2}] ==0$}
		
		\State 	$\mathcal{G}_{i,j}^b ~ \cup \! = (\mathcal{G}^b_{d,e} + \Delta G^\mathrm{interior}_{i,d,e,j} )$

		\EndIf 
		
		\If{$\eta[e+\frac{1}{2}, j-\frac{1}{2}] ==0$ and $\eta[i+\frac{1}{2}] ==0$ and $\eta[d-\frac{1}{2}] ==0$}
		\Comment{multiloop: no nicks} 
		
		\State  $\mathcal{G}_{i,j}^b ~ \cup\!= \Big[(\mathcal{G}^m_{i+1,d-1} +  \mathcal{G}^b_{d,e}) + (\Delta G^\mathrm{multi}_\mathrm{init} + 2 \Delta G^\mathrm{multi}_\mathrm{bp} + (j-e-1) \Delta G^\mathrm{multi}_\text{nt}) \Big]$

		\EndIf
		\EndFor		
		\EndFor
		
		\For{$x \in \{i, \ldots,j-1\}$ s.t. $\eta[x+\frac{1}{2}] = 1$} 
		\Comment{loop over all nicks $\in [i+\frac{1}{2}, j-\frac{1}{2}]$} 
		
		\If{($\eta[i+\frac{1}{2}] == 0$ and $\eta[j-\frac{1}{2}] == 0$) or ($i==j-1$) or  ($x==i$ and $\eta[j-\frac{1}{2}] == 0$) or
			\par \hskip\algorithmicindent  
			\hspace{6 mm} ($x==j-1$ and $\eta[i+\frac{1}{2}] == 0$)}
		
		\State  $\mathcal{G}_{i,j}^b ~\cup \!= \big[\mathcal{G}_{i+1,x} +  \mathcal{G}_{x+1,j-1} \big]$\Comment{exterior loops} 
		
		\EndIf
		\EndFor
		
		\Comment{$\mathcal{G}, \mathcal{G}^m$ recursion equations} 
		
		\If{$\eta[i+\frac{1}{2}, j-\frac{1}{2}] == 0$} 
		$\mathcal{G}_{i,j} = \{0\}$\Comment{empty substructure} 
		
		\EndIf
		
		\For{$d \gets i \ldots j-1$} \Comment{loop over all possible $3'$-most pairs $(d,e)$}
		
		\For{$e \gets d+1 \ldots j$}
		
		\If{$\eta[e+\frac{1}{2}, j-\frac{1}{2}] == 0$} 
		
		\If{$\eta[d-\frac{1}{2}] == 0$ or $d==i$}
		
		\State $\mathcal{G}_{i,j} ~ \cup \! =  \big[\mathcal{G}_{i,d-1} + \mathcal{G}^b_{d,e}\big]$
		\EndIf
		
		\If{$\eta[i+\frac{1}{2}, d-\frac{1}{2}] == 0$}
		
		\State $\mathcal{G}_{i,j}^m   ~ \cup \!    = 
		\Big[ \mathcal{G}^b_{d,e} + \big(\Delta G^\mathrm{multi}_\mathrm{bp} + (d-i + j-e) \Delta G^\mathrm{multi}_\text{nt} \big) \Big]$
		\Comment{single base pair (bp)}
		
		\EndIf
		\If{$\eta[d-\frac{1}{2}] == 0$}
		\State $\mathcal{G}_{i,j}^m ~ \cup \! = \Big[	(\mathcal{G}^m_{i,d-1} + \mathcal{G}^b_{d,e}) + \big(\Delta G^\mathrm{multi}_\text{bp} + (j-e) \Delta G^\mathrm{multi}_\text{nt}\big) \Big]$
		\Comment{more than 1 bp}

		\EndIf
		
		\EndIf
		
		\EndFor
		
		\EndFor
		
		\EndFor
		\EndFor 
		\State \Return $\mathcal{G}_{1,n} + (c-1) \Delta G^\text{assoc}$		
		
	\end{algorithmic}
\end{algorithm}

 \end{document}